\newtheorem{theorem}{Theorem}[section]
\newtheorem{lemma}[theorem]{Lemma}
\newtheorem{corollary}[theorem]{Corollary}
\newenvironment{proof}[1][Proof]{\begin{trivlist}
\item[\hskip \labelsep {\bfseries #1}]}{\end{trivlist}}
\newenvironment{example}[1][Example]{\begin{trivlist}
\item[\hskip \labelsep {\bfseries #1}]}{\end{trivlist}}
\begin{document}
%
\title{A Note on Weight Distributions
of Irreducible Cyclic Codes}
%
%
%

\author{ Chunming~Tang,
 Yanfeng~Qi, Maozhi~Xu, Baocheng~Wang, Yixian~Yang
\thanks{C. Tang, Y. Qi and M. Xu is with 
Laboratory of Mathematics and Applied Mathematics, School of Mathematical Sciences
, Peking University,  100871, China

C. Tang's e-mail: tangchunmingmath@163.com}
\thanks{B. Wang and Y, Yang are with
Department of Computer, Beijing University of
Posts and Telecommunications, Beijing, 100876, China}
}


%
%

\markboth{ }%
{Shell \MakeLowercase{\textit{et al.}}: A Note on Weight Distributions
of Irreducible Cyclic Codes}
%



\maketitle

\begin{abstract}
Usually, it is difficult to determine the weight distribution of an irreducible cyclic code. In this paper,
we
discuss the case when  an irreducible cyclic code has the maximal number of distinct nonzero weights and give a necessary and sufficient condition. In this case, we also obtain a divisible property for the weight of a codeword.  Further, we present a necessary and sufficient condition for an irreducible cyclic code with only one nonzero weight.  Finally, we determine the weight distribution of an irreducible cyclic code for some cases.
\end{abstract}

\begin{IEEEkeywords}
Cyclotomic fields, irreducible cyclic codes,  Gauss sums,  Gaussian periods,  weight distributions.
\end{IEEEkeywords}

%
\IEEEpeerreviewmaketitle

\section{Introduction}
%
%
%
%

Let $p$ be a prime, $q=p^s$ and
$r=q^m$, $k=sm$, where $s$ and $m$ are two positive integers. Suppose that an integer $N$ satisfies
$N|(r-1)$. Let $n=\frac{r-1}{N}$ and
$N_{2}=(N,\frac{r-1}{q-1})$.
Given two  integers $a$ and $b$, which are coprime, $ord_{a}b$ denotes the order of $a$ module $b$. Consider
the finite field $GF(r)^*=<\alpha>$.
Let $\theta=\alpha^{N}$. The set
\begin{align*}
&\mathcal{C}(r,N)=\\
&\{
(Tr_{r/q}(\beta), Tr_{r/q}(\beta\theta),
\cdots, Tr_{r/q}(\beta\theta^{n-1})): \beta
\in GF(r)
\}
\end{align*}
is called an irreducible cyclic $[n,m_{0}]$ code, where
$Tr_{r/q}$ is the trace function from $GF(r)$ onto
$GF(q)$  and $m_{0}=dim_{GF(q)}(GF(q)(\theta))$
\cite{DH}.
Then $m_{0}|m$. Generally, $m_{0}=m$  is satisfied. Hence, we just consider the case $m_{0}=m$ in this paper.

The weight distribution of $\mathcal{C}(r,N)$ attracts much interest. Generally, it is difficult to determine the weight distribution\cite{MS}.
For any $\beta\in GF(r)^{*}$, the Hamming weight of any codeword
$$
c(\beta)=(Tr_{r/q}(\beta), Tr_{r/q}(\beta\theta),
\ldots, Tr_{r/q}(\beta\theta^{n-1}))
$$
in the code $\mathcal{C}(r,N)$ can be represented by a linear combination of Gauss sums via Fourier transform\cite{Mc1,Mc2,MR}. Hence, Gauss sums can be used to analyze the weight distribution of the irreducible cyclic codes. Some main results on the weight distribution are list here.

{\rm $\bullet$} Let $N|(q^j+1)$, where
$j|\frac{m}{2}$. This is called the semi-primitive case \cite{BM1,DG,Mc1}.
Then $\mathcal{C}(r,N)$ is a two weight code.

{\rm $\bullet$} Let $N$ be a prime satisfying
$N\equiv 3 \pmod 4$ and $ord_{N}(q)=(N-1)/2$.
Baumert and Mykkeltveit\cite{BM2} determined the corresponding weight distribution.

{\rm $\bullet$} Let
$N=2$. Baumert and McEliece\cite{BM1} gave the corresponding
weight distribution.

{\rm $\bullet$} Let $N=3,4$, Ding \cite{Ding2} determined
the weight distribution.

{\rm $\bullet$} Let $N_2=1,2$, Ding \cite{Ding2} determined
the weight distribution.

Schimidt and White \cite{SW} got necessary and sufficient  conditions for an irreducible cyclic code having at most two weights. This paper considers two other cases. One is that
a irreducible cyclic code has the maximal number of distinct nonzero weights;
the other is that a irreducible cyclic code has only one nonzero weight.
More results on the weight distribution can be found in \cite{AL,HKM,La,MV,SW,VV}.

In this paper, we simplify the formula of
$wt(c(\beta))$ \cite{Ding2}, which is represented by
Gauss periods. From the simplified formula, we find that determining  the weight distribution of $\mathcal{C}(r,N)$
is equivalent to determining the weight distribution of $\mathcal{C}(r,N_{2})$. Further, determining  the weight distribution of $\mathcal{C}(r,N)$ is reduced to the factorization of $\psi^*_{(N_{2},r)}(X)$.
From cyclotomic fields and Stickelberger theorem on Gauss sums,
We present a necessary and sufficient condition for an irreducible cyclic code having the maximal number of different nonzero weights and obtain a divisible property of the weight of a codeword.
Then we generalize the results of Ding \cite{Ding2}  to cases $N_2=3,4$ and give  the necessary and sufficient conditions for a code with only one nonzero weight. Finally, we present the weight distributions of the irreducible cyclic codes for
cases $N=5,6,8,12$.

\section{Prelimilaries}
For determining the weight distribution, we first
recall cyclotomic classes,  Gauss sums and reduced period polynomials,

The definition of cyclotomic classes of order
$N$ in $GF(r)$ is the coset
$$
C_{i}^{(N,r)}=\alpha^i<\alpha^N>.
$$
When $i\equiv j\pmod N$, $C_{i}^{(N,r)}=
C_{j}^{(N,r)}$. If $\beta \in C_{i}^{(N,r)}$,
we let $i(\beta)=i$.

The Gauss periods are
$$
\eta_{i}^{(N,r)}=\sum_{x\in C_{i}^{(N,r)}}
\mu(x),
$$
where $\mu(x)=\zeta_{p}^{Tr_{r/p}(x)}$,
$\zeta_{p}=exp(2\pi i/p)$ and $i=\sqrt{-1}$.

Let $S$ be a subset of $GF(r)$, $\mu(S)$ denotes
$\sum_{x\in S}\mu(x)$. Then $\eta_{i}^{(N,r)}=
\mu(C_{i}^{(N,r)})$. Let $\eta_{i}^{*(N,r)}=
1+N\eta_{i}^{(N,r)}$, then $\eta_{i}^{*(N,r)}$ is called the reduced period. The reduced  periods polynomial is of the form
$$
\psi_{(N,r)}^{*}(X)=
(X-\eta_{0}^{*(N,r)})\cdots(X-\eta_{N-1}^{*(N,r)}).
$$

\subsection{Gauss sums}
In this subsection, we introduce some knowledge on Gauss sums \cite{LH}.

Let $\chi: GF(r)^*\rightarrow C^{*}$ be a character of $GF(r)^*$, where $C$ is the complex field. Then a Gauss sum is defined by
$$
g(\chi)=-\sum_{\beta\in GF(r)}\chi(\beta)\mu(\beta).
$$
For any $c_{2}(\beta)\in \mathcal{C}(r,N_{2})$, we have the following McEliece's identity \cite{Mc2}
\begin{eqnarray}\label{for1}
wt(c_{2}(\beta))=\frac{q-1}{N_{2}q}(r+\sum_{\chi^{
N_{2}}=1,\chi\neq 1}g(\chi)\overline{\chi}(\beta)).
\end{eqnarray}
To determine $wt(c_{2}(\beta))$, we can utilize properties of the Gauss sum $g(\chi)$.

Given an integer $t$, let $\zeta_{t}=exp(2\pi i/t)$. Let $\wp$ be a prime ideal of $Q(\zeta_{r-1})$ over $p$ and $\widetilde{\wp}$ is a prime ideal of
$Q(\zeta_{r-1},\zeta_{p})$ over $\wp$.
An integer $h \;(0\leq h<r-1)$ can be represented by $h=h_{0}+h_{1}p+\cdots+h_{k-1}p^{k-1}$.
Then we let $s(h)=\sum_{i=0}^{k-1}h_{i}p^i$.

There is an isomorphism
$$
Z[\zeta_{r-1}]/\wp\longrightarrow GF(r).
$$
Note that all the $r-1$-th root of unity are different module $\wp$. Then we have the following isomorphism \cite{Wa}
$$
\omega(\cdot): GF(r)^*\longrightarrow \{
\zeta_{r-1}^{0},\zeta_{r-1}^{1},\cdots,\zeta_{r-1}^{
r-2}
\},
$$
which satisfies
$$
\omega(\beta)\mod \; \wp=\beta\in GF(r)^*.
$$
From Stickelberger's theorem \cite{St}, we have
\begin{eqnarray}\label{for2}
v_{\widetilde{\wp}}(g(\omega^{-h}))=s(h).
\end{eqnarray}
Further, we have the following relation from Lang
\cite{La1,La2}
\begin{eqnarray}\label{for3}
g(\omega^{-h})\equiv \frac{(\zeta_{p}-1)^{s(h)}}
{(h_{0}!)\cdots(h_{k-1}!)} \pmod {
\widetilde{\wp}^{s(h)+1}}.
\end{eqnarray}

Some results on the factorization of some reduced period polynomials are listed here.
\subsection{The factorization of $\psi_{(N,r)}^{*}$ for cases $N=2,3,4$ \cite{My}}
\begin{lemma}\label{lem1}
Let $~2|\frac{r-1}{p-1}$. Then we have the following results on
the factorization of $~\psi_{(2,r)}^{\ast}(X)$
$$\psi_{(2,r)}^{\ast}(X)=(X+\sqrt{r})(X-\sqrt{r}).$$
\end{lemma}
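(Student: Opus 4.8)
The plan is to read off the factorization from the two elementary symmetric functions of the reduced periods $\eta_0^{*(2,r)}$ and $\eta_1^{*(2,r)}$, since $\psi_{(2,r)}^{*}(X)=X^2-(\eta_0^{*(2,r)}+\eta_1^{*(2,r)})X+\eta_0^{*(2,r)}\eta_1^{*(2,r)}$. I would first dispose of the linear coefficient. Because $\sum_{x\in GF(r)}\mu(x)=0$ and $\mu(0)=1$, the total period sum is $\eta_0^{(2,r)}+\eta_1^{(2,r)}=\mu(GF(r)^{*})=-1$. Using $\eta_i^{*(2,r)}=1+2\eta_i^{(2,r)}$ this gives $\eta_0^{*(2,r)}+\eta_1^{*(2,r)}=2+2(\eta_0^{(2,r)}+\eta_1^{(2,r)})=0$, so there is no $X$-term and the polynomial is automatically even.

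Next I would compute the constant term. Expanding the product, $\eta_0^{*(2,r)}\eta_1^{*(2,r)}=1+2(\eta_0^{(2,r)}+\eta_1^{(2,r)})+4\eta_0^{(2,r)}\eta_1^{(2,r)}=-1+4\eta_0^{(2,r)}\eta_1^{(2,r)}$, and through $4\eta_0^{(2,r)}\eta_1^{(2,r)}=(\eta_0^{(2,r)}+\eta_1^{(2,r)})^2-(\eta_0^{(2,r)}-\eta_1^{(2,r)})^2$ the whole computation reduces to the difference of periods. The key observation is that this difference is a quadratic Gauss sum: letting $\chi_2$ denote the unique order-$2$ character of $GF(r)^{*}$, which takes value $+1$ on $C_0^{(2,r)}$ and $-1$ on $C_1^{(2,r)}$, one has $\eta_0^{(2,r)}-\eta_1^{(2,r)}=\sum_{x\in GF(r)^{*}}\chi_2(x)\mu(x)=-g(\chi_2)$.

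The crux is therefore the evaluation $g(\chi_2)^2=\chi_2(-1)\,r$, which is the standard identity $g(\chi)g(\overline{\chi})=\chi(-1)r$ specialized to the real character $\chi_2=\overline{\chi_2}$; here the hypothesis $2\mid\frac{r-1}{p-1}$ enters to pin down $\chi_2(-1)=1$. Since $p=2$ would force $\frac{r-1}{p-1}=r-1$ odd, the hypothesis makes $p$ odd, and then $\frac{r-1}{p-1}=1+p+\cdots+p^{k-1}\equiv k\pmod 2$ shows $k$ is even; hence $r=p^{k}$ is a perfect square with $\sqrt r=p^{k/2}$ a positive integer, and $r\equiv 1\pmod 4$ gives $\chi_2(-1)=(-1)^{(r-1)/2}=1$. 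Consequently $(\eta_0^{(2,r)}-\eta_1^{(2,r)})^2=g(\chi_2)^2=r$, so $4\eta_0^{(2,r)}\eta_1^{(2,r)}=1-r$ and $\eta_0^{*(2,r)}\eta_1^{*(2,r)}=-r$. Combining the two symmetric functions yields $\psi_{(2,r)}^{*}(X)=X^2-r=(X+\sqrt r)(X-\sqrt r)$, as claimed. I expect the Gauss-sum evaluation to be the only genuine obstacle, everything else being symmetric-function bookkeeping; note in particular that the sign of $g(\chi_2)=\pm\sqrt r$ never matters, since only $g(\chi_2)^2$ enters, so I need not decide which reduced period equals $+\sqrt r$ and which equals $-\sqrt r$.
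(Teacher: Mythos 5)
Your proof is correct. Note, however, that the paper contains no proof of this lemma to compare against: Lemma \ref{lem1} is quoted, together with the companion factorizations for $N=3,4$, from Myerson \cite{My}, so your argument is a genuine self-contained substitute for a citation. What you do is the standard quadratic-Gauss-sum derivation, and every step checks out: the symmetric-function bookkeeping ($\eta_0^{(2,r)}+\eta_1^{(2,r)}=-1$, hence $\eta_0^{*(2,r)}+\eta_1^{*(2,r)}=0$) is routine; the identification $\eta_0^{(2,r)}-\eta_1^{(2,r)}=-g(\chi_2)$ is consistent with the paper's sign convention $g(\chi)=-\sum_{\beta}\chi(\beta)\mu(\beta)$ (with $\chi_2(0)=0$), and the sign is irrelevant anyway since only the square enters; and the hypothesis $2\mid\frac{r-1}{p-1}$ is used exactly where it must be, to force $p$ odd and $k$ even, hence $r\equiv 1\pmod 4$, $\chi_2(-1)=1$, and $\sqrt{r}=p^{k/2}\in\mathbb{Z}$, so that $g(\chi_2)^2=\chi_2(-1)r=r$ and $\psi_{(2,r)}^{*}(X)=X^2-r$. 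Your route also fits the paper's own toolkit (Gauss sums, as in McEliece's identity and the Stickelberger argument of Theorem \ref{thm3.4}), and it buys something the bare citation does not: it makes visible that the $N_2=2$ case requires nothing beyond the elementary identity $g(\chi)g(\overline{\chi})=\chi(-1)r$, whereas \cite{My} obtains the lemma inside a far more general computation of period polynomials.
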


\begin{lemma}\label{lem2}
Let $~3|\frac{r-1}{p-1}.$ We have the following results on
the factorization of $~\psi_{(3,r)}^{\ast}(X).$\\
(a) If $p\equiv 1 ~(mod ~3),$ then $~3|k,$ and
\begin{align*}
\psi_{(3,r)}^{\ast}(X)=(X-cr^{1/3})
(X+\frac{1}{2}(c+9d)r^{1/3})\\
\times(X+\frac{1}{2}(c-9d)r^{1/3})
\end{align*}
where $~c$ and $~d$ are given by $4p^{k/3}=c^2+27d^2,~c\equiv 1~(mod
~3),$ and $~gcd(c,p)=1.$\\
(b) If $~p\equiv 2~(mod ~3),$ then $~2|k$, and
\begin{align*}
\psi_{(3,r)}^{\ast}(X)=(X+(-1)^{k/2}2\sqrt{r})
(X-(-1)^{k/2}\sqrt{r})^2.
\end{align*}
\end{lemma}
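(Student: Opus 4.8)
The plan is to locate the three roots of $\psi^{*}_{(3,r)}(X)$ --- the reduced periods $\eta_0^{*(3,r)},\eta_1^{*(3,r)},\eta_2^{*(3,r)}$ --- by expressing them through Gauss sums, and then to recognise the resulting cubic. Fix a cubic character $\chi$ of $GF(r)^{*}$ and put $\zeta_3=\chi(\alpha)$, a primitive cube root of unity. Since $\chi^{a}$ is constant on each cyclotomic class $C_i^{(3,r)}$, the definition of $g$ gives $g(\chi^{a})=-\sum_{i=0}^{2}\zeta_3^{ai}\eta_i^{(3,r)}$ for $a=1,2$, together with $\sum_i\eta_i^{(3,r)}=-1$. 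Inverting this discrete Fourier transform and using $\eta_i^{*(3,r)}=1+3\eta_i^{(3,r)}$ yields the compact formula
$$\eta_i^{*(3,r)}=-\zeta_3^{-i}g(\chi)-\zeta_3^{-2i}g(\bar\chi),\qquad i=0,1,2.$$
I would also settle the divisibility assertions here: writing $\tfrac{r-1}{p-1}=1+p+\cdots+p^{k-1}$ and reducing modulo $3$ gives $\tfrac{r-1}{p-1}\equiv k\pmod 3$ when $p\equiv1$, so the hypothesis forces $3\mid k$ in (a); and $\tfrac{r-1}{p-1}\equiv\sum_{i=0}^{k-1}(-1)^{i}\pmod 3$ when $p\equiv2$, which is divisible by $3$ exactly when $k$ is even, forcing $2\mid k$ in (b).

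Next, forming the elementary symmetric functions of the $\eta_i^{*(3,r)}$ from the displayed formula, and using $\sum_i\zeta_3^{i}=0$ together with $g(\chi)g(\bar\chi)=\chi(-1)r=r$, a short computation gives first symmetric function $0$, second $-3r$, and product $-(g(\chi)^{3}+g(\bar\chi)^{3})$; hence
$$\psi^{*}_{(3,r)}(X)=X^{3}-3rX+\bigl(g(\chi)^{3}+g(\bar\chi)^{3}\bigr).$$
Everything therefore reduces to evaluating the single Galois-invariant quantity $g(\chi)^{3}+g(\bar\chi)^{3}$.

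For case (a) I would use the Jacobi sum $J(\chi,\chi)=\sum_{x\in GF(r)}\chi(x)\chi(1-x)$, for which $g(\chi)^{3}=-r\,J(\chi,\chi)$ (the sign coming from the convention $g=-\sum\chi\mu$) and $|J(\chi,\chi)|^{2}=r$. Since $3\mid k$, the Davenport--Hasse lifting relation lets me descend: the cubic character $\chi$ is the composite of a cubic character $\chi_0$ of $GF(p^{k/3})^{*}$ with the norm, so $g(\chi)=g(\chi_0)^{3}$ and $g(\chi)^{3}=-r\,J_0^{3}$ with $J_0=J(\chi_0,\chi_0)$ and $|J_0|^{2}=p^{k/3}$. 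Writing $J_0=\tfrac12(c+3d\sqrt{-3})$ reproduces exactly $4p^{k/3}=c^{2}+27d^{2}$, and a direct expansion gives $J_0^{3}+\bar J_0^{3}=\tfrac{c}{4}(c^{2}-81d^{2})$. Substituting back and factoring the cubic $X^{3}-3rX-\tfrac{c}{4}(c^{2}-81d^{2})r$ shows its roots are $cr^{1/3}$ and $-\tfrac12(c\pm9d)r^{1/3}$, which is the claimed factorisation.

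For case (b), $p\equiv2\pmod3$ is the semi-primitive situation ($p\equiv-1\pmod3$), where $g(\chi)$ is real and equal to $g(\bar\chi)$; feeding the evaluation $g(\chi)=g(\bar\chi)=(-1)^{k/2}\sqrt r$ into the formula for $\eta_i^{*(3,r)}$ and using $\zeta_3^{-1}+\zeta_3^{-2}=-1$ produces one root $-2(-1)^{k/2}\sqrt r$ and a repeated root $(-1)^{k/2}\sqrt r$, i.e.\ the stated $(X+(-1)^{k/2}2\sqrt r)(X-(-1)^{k/2}\sqrt r)^{2}$. The main obstacle is fixing signs and normalisations: the equation $4p^{k/3}=c^{2}+27d^{2}$ determines $c,d$ only up to sign, and cubic Gauss-sum evaluations are notoriously sign-sensitive, so one must justify which root carries which sign and that the normalisation $c\equiv1\pmod 3$, $\gcd(c,p)=1$ is correct. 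This is exactly where the Stickelberger valuation (\ref{for2}) and the Lang congruence (\ref{for3}) enter: reducing $g(\chi)$ modulo $\widetilde{\wp}^{\,s(h)+1}$ pins the Gauss sum down finely enough to select the normalisation and the sign in (a), and the same congruence forces the sign $(-1)^{k/2}$ in the semi-primitive evaluation of (b).
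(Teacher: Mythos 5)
The paper never actually proves Lemma~\ref{lem2}: it is imported from Myerson \cite{My} (the citation sits in the subsection heading), so there is no internal argument to compare against, and your attempt must stand on its own --- which it essentially does, following the classical Gauss-sum route that the cited source also takes. Your Fourier inversion $\eta_i^{*(3,r)}=-\zeta_3^{-i}g(\chi)-\zeta_3^{-2i}g(\bar\chi)$ is correct under the paper's convention $g(\chi)=-\sum\chi\mu$; the elementary symmetric functions of these three quantities are indeed $e_1=0$, $e_2=-3g(\chi)g(\bar\chi)=-3r$ (using $\chi(-1)=1$ for a cubic character) and $e_3=-(g(\chi)^3+g(\bar\chi)^3)$, giving $\psi^{*}_{(3,r)}(X)=X^3-3rX+g(\chi)^3+g(\bar\chi)^3$; and the congruences $\frac{r-1}{p-1}\equiv k\pmod 3$ (case $p\equiv 1$) and $\frac{r-1}{p-1}\equiv\sum_{i=0}^{k-1}(-1)^i\pmod 3$ (case $p\equiv 2$) correctly force $3\mid k$ and $2\mid k$. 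In (a), the chain $g(\chi)=g(\chi_0)^3$ (Davenport--Hasse), $g(\chi_0)^3=-p^{k/3}J_0$, hence $g(\chi)^3+g(\bar\chi)^3=-r(J_0^3+\bar J_0^3)=-\frac{c}{4}(c^2-81d^2)r$, together with your root check for $X^3-3rX-\frac{c}{4}(c^2-81d^2)r$, is all correct, and (b) follows once $g(\chi)=g(\bar\chi)=(-1)^{k/2}\sqrt r$. The one soft spot is that the two sign normalizations --- that $J_0=\frac12(c+3d\sqrt{-3})$ with $c\equiv 1\pmod 3$ (equivalently $J_0\equiv -1\pmod 3$ in $\mathbb{Z}[\zeta_3]$), which is what attaches the simple root to $+cr^{1/3}$ rather than $-cr^{1/3}$, and the sign $(-1)^{k/2}$ in (b) --- are asserted to follow from (\ref{for2}) and (\ref{for3}) rather than derived; these signs are the only genuinely delicate content of the lemma. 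The plan does close: for (b), Davenport--Hasse descends to the cubic character of $GF(p^2)$, where (\ref{for3}) combined with $(\zeta_p-1)^{p-1}\equiv -p$ and Wilson's theorem pins $g=-p$, so $g(\chi)=(-p)^{k/2}$; an analogous congruence computation yields $J_0\equiv -1\pmod 3$ in (a). So your proposal is correct in structure and in every executed step, but a complete proof would have to carry out those congruence calculations rather than only name the tools.
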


\begin{lemma}\label{lem3}
Let $~4|\frac{r-1}{p-1}.$ We have the following results on
the factorization of $~\psi_{(4,r)}^{\ast}(X).$\\
(a) If $~p\equiv 1 ~(mod ~4),$ then $~4|k,$ and
\begin{align*}
\psi_{(4,r)}^{\ast}(X)=(X+\sqrt{r}+2r^{1/4}u)(X+\sqrt{r}-2r^{1/4}u)\times\\
(X-\sqrt{r}+4r^{1/4}v)(X-\sqrt{r}-4r^{1/4}v)
\end{align*}
where $~u$ and $~v$ are given by $~p^{k/2}=u^2+4v^2,~u\equiv
1~(mod~4)$ and $~gcd(u,p)=1.$\\
(b) If $~p\equiv 3 ~(mod ~4),$ then $~2|k,$ and
\begin{align*}
\psi_{(4,r)}^{\ast}(X)=(X+(-1)^{k/2}3\sqrt{r})(X-(-1)^{k/2}\sqrt{r})^3.
\end{align*}
\end{lemma}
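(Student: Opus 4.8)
The plan is to recover the four reduced periods as an inverse Fourier transform of the nontrivial Gauss sums of order dividing $4$, to evaluate those Gauss sums, and then to multiply out; this is the same route that produces the $N=3$ factorization of Lemma \ref{lem2}. First I would fix the quartic character $\chi$ with $\chi(\alpha)=\zeta_4$ and apply orthogonality of characters to the definition of $\eta_j^{(4,r)}$, exactly as in \eqref{for1}, to obtain $\eta_j^{*(4,r)}=1+4\eta_j^{(4,r)}=-\sum_{\chi^4=1,\,\chi\neq1}\overline{\chi}(\alpha^j)g(\chi)$. Since $4\mid(r-1)$ we have $\chi(-1)=1$, hence $g(\chi^3)=g(\overline\chi)=\overline{g(\chi)}$; writing $g(\chi)=A+B\zeta_4$ with $A,B\in\mathbb{R}$ and $g(\chi^2)=\varepsilon\sqrt r$, a four-point transform gives $\eta_0^{*}=-2A-\varepsilon\sqrt r$, $\eta_2^{*}=2A-\varepsilon\sqrt r$, $\eta_1^{*}=-2B+\varepsilon\sqrt r$ and $\eta_3^{*}=2B+\varepsilon\sqrt r$. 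The whole statement therefore reduces to determining the sign $\varepsilon$ and the pair $(A,B)$.

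For part (a) I would exploit that $p\equiv1\pmod4$ forces $4\mid(p-1)$, so $GF(p)^*$ already carries four characters of order dividing $4$; their lifts along the norm $N_{GF(r)/GF(p)}$ are four distinct such characters of $GF(r)^*$ and hence are all of them, so $\chi=\lambda\circ N$ for a quartic $\lambda$ on $GF(p)^*$. Hasse--Davenport then gives $g(\chi)=g(\lambda)^k$ and $g(\chi^2)=g(\lambda^2)^k$. The base quadratic sum is $g(\lambda^2)=-\sqrt p$, so $g(\chi^2)=(-1)^kp^{k/2}$, which equals $\sqrt r$ precisely because $k$ is even; the divisibility $4\mid k$ itself drops out of $\frac{r-1}{p-1}=1+p+\cdots+p^{k-1}\equiv k\pmod4$. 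Using the quartic Jacobi-sum identity $g(\lambda)^2=\sqrt p\,J(\lambda,\lambda)$ with $J(\lambda,\lambda)=u_1+2v_1\zeta_4$, $u_1^2+4v_1^2=p$ and $u_1$ odd, I get $g(\chi)=p^{k/4}(u_1+2v_1\zeta_4)^{k/2}$.

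The crux is to show this is already of the asserted shape. Because $u_1+2v_1\zeta_4\equiv u_1\pmod{2\mathbb{Z}[\zeta_4]}$, every power $(u_1+2v_1\zeta_4)^{k/2}$ is congruent to an ordinary integer modulo $2$, so its imaginary part is even and its real part is odd; writing $(u_1+2v_1\zeta_4)^{k/2}=u+2v\zeta_4$ yields $u^2+4v^2=(u_1^2+4v_1^2)^{k/2}=p^{k/2}=\sqrt r$, and fixing the associate so that $u\equiv1\pmod4$ gives $\varepsilon=1$, $A=p^{k/4}u=r^{1/4}u$, $B=2p^{k/4}v=2r^{1/4}v$. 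Substituting into the four periods produces exactly the roots $-\sqrt r\pm2r^{1/4}u$ and $\sqrt r\pm4r^{1/4}v$, i.e. the factorization in (a); the residual sign freedom in $(u,v)$ merely transposes members of a conjugate pair and leaves the product unchanged.

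Part (b) is the semi-primitive case: $p\equiv3\pmod4$ gives $4\nmid(p-1)$ but $4\mid(p^2-1)$ and $p\equiv-1\pmod4$, so the quartic characters of $GF(r)^*$ are lifts from $GF(p^2)^*$, and $2\mid k$ follows from $\frac{r-1}{p-1}\equiv0\pmod4$. Here the Stickelberger/semi-primitive evaluation together with Hasse--Davenport from $GF(p^2)$ makes both $g(\chi)$ and $g(\chi^2)$ real and equal to the common value $(-1)^{k/2}\sqrt r$; substituting $A=(-1)^{k/2}\sqrt r$, $B=0$, $\varepsilon=(-1)^{k/2}$ collapses three of the four periods to $(-1)^{k/2}\sqrt r$ and leaves $\eta_0^{*}=-3(-1)^{k/2}\sqrt r$, which is the factorization in (b). I expect the only real obstacle to sit in part (a): establishing $g(\lambda)^2=\sqrt p\,J(\lambda,\lambda)$ with the correct normalization of $J$, and the parity argument forcing the imaginary part of $(u_1+2v_1\zeta_4)^{k/2}$ to be even; once that is in place, the sign bookkeeping for $\varepsilon$ and for the semi-primitive value in (b) is routine Hasse--Davenport.
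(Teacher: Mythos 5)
You should first note that the paper itself contains no proof of Lemma~\ref{lem3}: it is imported from Myerson \cite{My} (the subsection heading attributes the $N=2,3,4$ factorizations to that reference), so there is no in-paper argument to compare yours against. Your route --- orthogonality of characters to write $\eta_j^{*(4,r)}=-\sum_{\chi^4=1,\chi\neq 1}\overline{\chi}(\alpha^j)g(\chi)$, Hasse--Davenport lifting from the prime field (which, with the paper's convention $g(\chi)=-\sum_\beta\chi(\beta)\mu(\beta)$, is indeed exactly multiplicative, $g(\lambda\circ N)=g(\lambda)^k$), the quartic Jacobi-sum identity $g(\lambda)^2=\sqrt p\,J(\lambda,\lambda)$ for part (a), and the semi-primitive evaluation for part (b) --- is the standard derivation and essentially the one in \cite{My}. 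Your part (a) checks out: $g(\chi^2)=(-\sqrt p)^k=\sqrt r$, $g(\chi)=p^{k/4}J^{k/2}$, the mod-$2$ argument pins $J^{k/2}=u+2vi$ with $u$ odd and $u^2+4v^2=p^{k/2}$ (and $\gcd(u,p)=1$ follows since $\overline{J}\nmid J^{k/2}$ in $\mathbb{Z}[i]$), so the roots are exactly $-\sqrt r\pm 2r^{1/4}u$ and $\sqrt r\pm 4r^{1/4}v$, the normalization $u\equiv 1\pmod 4$ costing nothing since $u\mapsto -u$ only transposes a conjugate pair.

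Two statements need repair, though neither is fatal. First, ``$4\mid(r-1)$ implies $\chi(-1)=1$'' is false as written: $\chi(-1)=\zeta_4^{(r-1)/2}$, so one needs $8\mid(r-1)$. This does hold under the lemma's hypotheses (in case (a), $4\mid(p-1)$ and $4\mid k$ give $16\mid(r-1)$; in case (b), $2\mid k$ and $p^2\equiv 1\pmod 8$ give $8\mid(r-1)$), but that is the argument you must give. Second, your part (b) claim that $g(\chi)$ and $g(\chi^2)$ are ``both equal to the common value $(-1)^{k/2}\sqrt r$'' is wrong in general: for $p=3$, $k=2$, direct computation over $GF(9)$ gives, in the paper's sign convention, $g(\chi)=+3$ while $g(\chi^2)=-3$; Stickelberger's semi-primitive formula actually yields $g(\chi)=(-1)^{(k/2)(1+(p+1)/4)}\sqrt r$, whose sign depends on the parity of $(p+1)/4$ and need not agree with $\varepsilon=(-1)^{k/2}$. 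What semi-primitivity genuinely gives is only that $g(\chi)$ is real with $|g(\chi)|=\sqrt r$, i.e.\ $B=0$ and $A=\pm\sqrt r$, together with $\varepsilon=(-1)^{k/2}$ from the quadratic sum. Fortunately that is all you need: replacing $A$ by $-A$ in your four-point transform merely exchanges $\eta_0^{*}$ and $\eta_2^{*}$, so the multiset of roots is $\{-3\varepsilon\sqrt r,\varepsilon\sqrt r,\varepsilon\sqrt r,\varepsilon\sqrt r\}$ either way and the factorization in (b) is unchanged. With these two corrections your proof is sound.
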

\subsection{The factorization of $\psi_{(5,r)}^{*}$}
In order to describe the explicit factorization of the quintic period polynomials for finite fields, we require to discuss integer solutions of Dickson's system.
\[\left\{
  \begin{array}{l}
    16\sqrt[5]{r}=x^2+125w+50v^2+50u^2, \\
    xw=v^2-4vu-u^2, \\
    x\equiv -1 \pmod 5.
  \end{array}
\right.\]
Let $\sigma$ be a non-singular linear transformation of order 4.
\begin{eqnarray*}
\sigma(x,w,v,u)=(x,-w,-u,v).
\end{eqnarray*}
Then we have the following lemma for the explicit factorization of period polynomials \cite{Ho}.
\begin{lemma}\label{lem4}
Let $5|\frac{r-1}{p-1}$. We have the following results on the factorization of $\psi_{(5,r)}^{*}(X)$.

{\rm (a)} If $p\equiv 1 \pmod 5$, then $5|k$ and
$$
\psi_{(5,r)}^{*}(X)=(X+\frac{\sqrt[5]{r}}{16}(x^3
-25L))\prod_{i=0}^{3}(X-\frac{\sqrt[5]{r}}{64}
\sigma^i(X^3-25M)),
$$
where $(x,w,v,u)$ is the integer solution of Dickson's system satisfying the condition
$p\nmid (x^2-125w^2)$ and
\begin{eqnarray*}
L&=&2x(v^2+u^2)+5w(11v^2-4vu-11u^2),\\
M&=&2x^2u+7xv^2+20xvu-3xu^2+125w^3\\
&&+200w^2v-150w^2u+5wv^2
-20wvu\\
&&-105wu^2-40v^3-60v^2u+120vu^2+20u^3.
\end{eqnarray*}

{\rm (b)} If $p\not\equiv 1\pmod 5$, then
$$
\psi_{(5,r)}^{*}(X)=\left\{
  \begin{array}{ll}
    (X-r^{1/2})^4(X+4r^{1/2}), \text{if}\; k/ord_5p\; \text{is even}\\
    (X+r^{1/2})^4(X-4r^{1/2}), \text{if}\;
k/ord_5p\; \text{is odd}
  \end{array}
\right.$$
\end{lemma}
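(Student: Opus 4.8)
The plan is to reduce the whole lemma to the explicit evaluation of the four nontrivial quintic Gauss sums and then simply read off the periods. Fix a multiplicative character $\chi$ of $GF(r)^*$ of order $5$, normalized so that $\chi(\alpha)=\zeta_5$. Writing the indicator of the cyclotomic class as $\mathbf{1}_{C_i^{(5,r)}}(x)=\frac{1}{5}\sum_{a=0}^{4}\chi^a(x)\overline{\chi^a}(\alpha^i)$, summing against $\mu$, and using $\sum_{x\in GF(r)}\mu(x)=0$ together with $\sum_{x\in GF(r)^*}\chi^a(x)\mu(x)=-g(\chi^a)$ for $a\neq 0$, I obtain the basic identity
\begin{eqnarray*}
\eta_i^{*(5,r)}=1+5\eta_i^{(5,r)}=-\sum_{a=1}^{4}g(\chi^a)\,\zeta_5^{-ai}.
\end{eqnarray*}
Hence the five roots of $\psi_{(5,r)}^{*}(X)$ are determined by $g(\chi),g(\chi^2),g(\chi^3),g(\chi^4)$, and both parts of the lemma become statements about evaluating these sums.

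For part (b) I would first note that $p\not\equiv 1\pmod 5$ forces $-1\in\langle p\rangle\pmod 5$: in the cyclic group $(\mathbb{Z}/5)^*$ the unique element of order two is $4\equiv -1$, and it lies in every nontrivial subgroup, so any $p\not\equiv 1$ generates a subgroup containing $-1$. Thus we are in the semi-primitive situation. Writing $ord_5 p=2j$ with $p^{j}\equiv -1\pmod 5$ minimal and $k=2jt$, the classical evaluation of semi-primitive Gauss sums gives $g(\chi^a)=\epsilon\sqrt{r}$ for all $a\in\{1,2,3,4\}$ with a single sign $\epsilon=\pm1$. Substituting into the identity above and using $\sum_{a=1}^{4}\zeta_5^{-ai}=4$ when $5\mid i$ and $=-1$ otherwise, I get $\eta_0^{*(5,r)}=-4\epsilon\sqrt{r}$ and $\eta_i^{*(5,r)}=\epsilon\sqrt{r}$ for $5\nmid i$, which is exactly the stated $(X\mp r^{1/2})^4(X\pm 4r^{1/2})$. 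Since $t=k/ord_5 p$, the only remaining point is to pin $\epsilon$ to the parity of $t$, which follows from the known sign in the semi-primitive formula (equivalently, from Stickelberger's congruence (\ref{for3}) applied after descending to the subfield fixed by $p^{j}$).

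For part (a), $p\equiv 1\pmod 5$, I would first dispatch the divisibility claim: $\frac{r-1}{p-1}=1+p+\cdots+p^{k-1}\equiv k\pmod 5$, so $5\mid\frac{r-1}{p-1}$ is equivalent to $5\mid k$. The substantive work is the evaluation of $g(\chi^a)$, which is no longer pure. Here every order-$5$ character of $GF(r)^*$ is a lift $\chi=\chi'\circ N_{r/p}$ of an order-$5$ character $\chi'$ of $GF(p)^*$ (possible since $5\mid p-1$ and the norm is surjective), so the Davenport--Hasse relation $g(\chi)=g(\chi')^{k}$ pushes the computation down to $GF(p)$. Over $GF(p)$ the quintic Gauss sum lies in $\mathbb{Z}[\zeta_5]$, and its prime factorization above $p$ is prescribed by Stickelberger's theorem (\ref{for2}); the degrees of freedom in that factorization are precisely encoded by an integer solution $(x,w,v,u)$ of Dickson's system, with $x\equiv-1\pmod 5$ and $p\nmid(x^2-125w^2)$ selecting it uniquely. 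Feeding these parameters back through the period identity, expanding the elementary symmetric functions of the $\zeta_5^{-ai}$, and collecting terms should reproduce the coefficients $L$ and $M$; the order-$4$ transformation $\sigma$ enters because a generator of $Gal(\mathbb{Q}(\zeta_5)/\mathbb{Q})\cong(\mathbb{Z}/5)^*$ permutes $\chi,\chi^2,\chi^4,\chi^3$ cyclically and hence permutes the four non-principal periods in the stated orbit.

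I expect the main obstacle to be exactly this final bookkeeping in part (a): translating the Stickelberger factorization of $g(\chi')$ into Dickson's diophantine data, and then verifying that the symmetric-function expansion reproduces the precise polynomials $L$ and $M$ with the correct signs and the correct $\sigma$-orbit of the factor $X-\frac{\sqrt[5]{r}}{64}\sigma^i(x^3-25M)$. The congruence and coprimality constraints are what make the solution of Dickson's system unique and the formula well defined, so getting those normalizations right is the crux; by contrast the semi-primitive case (b) is essentially routine once the Gauss sums are known to be pure.
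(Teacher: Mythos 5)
The paper itself offers no proof of this lemma: it is quoted from Hoshi \cite{Ho} (with part (b) being the classical semi-primitive evaluation going back to Baumert--McEliece \cite{BM1}), so your attempt must be judged against that literature rather than against an in-paper argument. Your reduction of the periods to Gauss sums is correct under the paper's sign convention (with $g(\chi)=-\sum\chi\mu$ one indeed gets $\eta_i^{*(5,r)}=-\sum_{a=1}^{4}g(\chi^a)\zeta_5^{-ai}$), and your part (b) is essentially complete: for $p\not\equiv 1\pmod 5$ the subgroup $\langle p\rangle\subseteq(\mathbb{Z}/5)^*$ contains $-1$, all four quintic Gauss sums are pure, and one checks from the standard semi-primitive formula that $g(\chi^a)=(-1)^{t}\sqrt{r}$ with $t=k/\mathrm{ord}_5p$ (the extra exponent $(p^j+1)t/5$ in that formula is always even here, since $p^j\equiv -1\pmod 5$ and $p^j$ odd force $10\mid p^j+1$), which yields exactly the two stated factorizations. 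That half of the proposal I would accept, though the sign verification should be written out rather than asserted.

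Part (a), however, has a genuine gap. Everything after your appeal to Stickelberger's theorem (\ref{for2}) is a plan, not a proof: you never derive the polynomials $L$ and $M$, never show that the quintic Jacobi sums produce an integer solution of Dickson's system, never verify that the normalizations $x\equiv -1\pmod 5$ and $p\nmid(x^2-125w^2)$ select it appropriately, and never perform the symmetric-function expansion that is supposed to ``reproduce the coefficients.'' Calling this bookkeeping understates it badly: that computation is the entire content of the cited result (Dickson's cyclotomy of order five, lifted to $F_q$ by Hoshi via exactly the Davenport--Hasse descent you propose). There is also a technical point you elide: the Gauss sum $g(\chi')$ lies in $\mathbb{Z}[\zeta_5,\zeta_p]$, not $\mathbb{Z}[\zeta_5]$; what lies in $\mathbb{Z}[\zeta_5]$ is $g(\chi')^5$, or the Jacobi sums, and Stickelberger prescribes only their prime-ideal factorization. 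Since $\mathbb{Z}[\zeta_5]$ has infinitely many units, the ideal alone does not determine the element; one needs the archimedean conditions $|J|^2=p$ at all embeddings together with a Stickelberger-type congruence such as $J\equiv -1\pmod{(1-\zeta_5)^2}$ to pin the Jacobi sum down to a root of unity, and translating those normalizations into Dickson's diophantine constraints is precisely where $L$, $M$, and the $\sigma$-orbit structure come from. So your outline points in the right direction---it is the same route the literature takes---but as written it proves only part (b).
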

\subsection{The factorization of $\psi_{(N,r)}^{*}$ for cases $N=6,8,12$}
To determine the factorization of the reduced period polynomial $\psi_{(N,r)}^*$ for cases $N=6,8,12$ \cite{Gu}, we need to introduce some notations.

Let $G:=\alpha^{\frac{r-1}{q-1}}$ and $Z$ be the least positive integer satisfying $G^{Z}=2$.

When $p\equiv 1\pmod 3$. Then there
exist two unique integers $r_{3}$ and
$s_{3}$ satisfying
\begin{eqnarray*}
4p&=&r_{3}^{2}+3s_{3}^2\\
r_{3}&\equiv& 1\pmod 3\\
s_{3}&\equiv& 0\pmod 3\\
3s_{3}&\equiv& (2G^{(p-1)/3}+1)r_{3} \pmod p
\end{eqnarray*}
Let $\lambda=(r_{3}+i\sqrt{3}s_{3})/2$ and a sequence related to $\lambda$ be
$$
V_{j,n}=\zeta_{6}^{-j}\lambda^n+\zeta_{6}^j
\overline{\lambda}^n.
$$
When the prime $p\equiv 1\pmod 4$, there exist
two unique integers $a_{4}$ and $b_{4}$ satisfying
\begin{eqnarray*}
p&=&a_{4}^2+b_{4}^2\\
a_{4}&\equiv& -(-1)^Z \pmod 4\\
b_{4}&\equiv & a_{4}G^{(p-1)/4}\pmod p.
\end{eqnarray*}
Then we can define $\pi=a_{4}+ib_{4}$ and
sequences related to $\pi$.
\begin{eqnarray*}
Q_{n}&=\pi^n+\overline{\pi}^n, &
P_{n}=-i(\pi^n-\overline{\pi}^n),\\
Q_{j,n}&=\zeta_{4}^{-j}\pi^n+\zeta_{4}^j
\overline{\pi}^n, &
P_{j,n}=-i(\zeta_{4}^{-j}\pi^n-\zeta_{4}^j
\overline{\pi}^n).
\end{eqnarray*}
When $p\equiv 1\pmod 8$, there exist two unique
integers $a_{8}$ and $b_{8}$ satisfying
\begin{eqnarray*}
p&=&a_{8}^2+2b_{8}^2\\
a_{8}&\equiv & -1 \pmod 4\\
2b_{8}&\equiv & (G^{(p-1)/8}+G^{3(p-1)/8})a_{8}\pmod p.
\end{eqnarray*}
When $p\equiv 3\pmod 8$, there exist two unique
integers $a_{8}$ and $b_{8}$ satisfying
\begin{eqnarray*}
p&=&a_{8}^2+2b_{8}^2\\
a_{8}&\equiv & (-1)^{(p-3)/8} \pmod 4\\
2b_{8}&\equiv & (\alpha^{(q-1)/8}-\alpha^{(1-q)/8})a_{8}\pmod p.
\end{eqnarray*}
For the two cases, we define
$\sigma=a_{8}+ib_{8}\sqrt{2}$ and sequences related to $\sigma$.
$$
T_{n}=\sigma^n+\overline{\sigma}^n, \quad
S_{n}=(\sigma^n-\overline{\sigma}^n)/(i\sqrt{2}).
$$

\begin{lemma}\label{lem5}
Let $6|\frac{r-1}{p-1}$. We have the following results on the factorization of $\psi_{(6,r)}^{*}(X)$.

{\rm (a)} If $p\equiv 1 \pmod 6$, then $6|k$ and
$$
\psi_{(6,r)}^{*}(X)=
(X-\eta_{1}^{*(6,r)})\cdots (X-\eta_{6}^{*(6,r)}),
$$
where $
\eta_{j}^{*(6,r)}=-(-1)^{tk/2}p^{k/6}
V_{j,2k/3}-p^{k/3}V_{2j,k/3}-
(-1)^{j+tk/2}p^{k/2}$, $t=\frac{p-1}{6}$  and $V_{j,n}$ are defined above.

{\rm (b)} If $p\equiv 5\pmod 6$, then
$$
\psi_{(6,r)}^{*}(X)=
(X-(-1)^{k/2}p^{k/2})^5(X+5(-1)^{k/2}p^{k/2}).$$
\end{lemma}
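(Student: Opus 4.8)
The plan is to reduce everything to the fundamental identity expressing each reduced period as a character-weighted sum of Gauss sums. Fix a character $\chi$ of $GF(r)^{*}$ of order $6$ with $\chi(\alpha)=\zeta_{6}$. Applying orthogonality of characters to the definition $\eta_{j}^{(6,r)}=\mu(C_{j}^{(6,r)})$, then using $\sum_{x\in GF(r)^{*}}\mu(x)=-1$ and the paper's sign convention $g(\chi)=-\sum_{\beta}\chi(\beta)\mu(\beta)$, I obtain
\[
\eta_{j}^{*(6,r)}=1+6\eta_{j}^{(6,r)}=-\sum_{a=1}^{5}\zeta_{6}^{-aj}\,g(\chi^{a}).
\]
I would then group the five terms by the order of $\chi^{a}$: the quadratic piece $a=3$, the cubic conjugate pair $a\in\{2,4\}$, and the sextic conjugate pair $a\in\{1,5\}$. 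This exhibits $\eta_{j}^{*(6,r)}$ as a sum of three contributions, which I expect to match, respectively, the $p^{k/2}$, the $p^{k/3}V_{2j,k/3}$, and the $p^{k/6}V_{j,2k/3}$ terms of the claimed formula. The split into (a) and (b) is dictated by whether the order-$6$ characters descend from $GF(p)$: for $p\equiv1\pmod6$ they do (case (a)), whereas $p\equiv5\pmod6$ is exactly the semi-primitive situation $6\mid p+1$ (case (b)).

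For part (a), since $6\mid p-1$ there is a character $\chi_{0}$ of order $6$ on $GF(p)^{*}$ with $\chi=\chi_{0}\circ N_{GF(r)/GF(p)}$, and I would invoke the Davenport--Hasse lifting theorem, which in the paper's sign convention reads $g(\chi_{0}^{a}\circ N)=g(\chi_{0}^{a})^{k}$ with no extra sign. Every Gauss sum over $GF(r)$ thus becomes a $k$-th power of a Gauss sum over $GF(p)$, into which I insert the classical evaluations: the quadratic sum satisfies $g(\chi_{0}^{3})^{2}=(-1)^{(p-1)/2}p$, yielding $p^{k/2}$ with sign $(-1)^{tk/2}$ after writing $p=6t+1$; the cubic sum satisfies $g(\chi_{0}^{2})^{3}=-pJ$ with $J$ the relevant Jacobi sum, which I would identify with $\lambda=(r_{3}+i\sqrt{3}s_{3})/2$; and the sextic sum I would reduce to the cubic and quadratic ones through the Hasse--Davenport product relation. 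Collecting the cubic pair and using $\zeta_{6}^{-4j}=\zeta_{6}^{2j}$ produces exactly $-p^{k/3}V_{2j,k/3}$, the sextic pair produces $-(-1)^{tk/2}p^{k/6}V_{j,2k/3}$, and the quadratic term produces $-(-1)^{j+tk/2}p^{k/2}$. Here $6\mid k$ follows from the hypothesis $6\mid\frac{r-1}{p-1}\equiv k\pmod6$ (valid since $p\equiv1\pmod6$), and it is precisely what makes the spurious factor $(-1)^{k/3}$ in the cubic computation disappear, which is why the statement records $6\mid k$.

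For part (b), I would observe that $p\equiv5\pmod6$ means $6\mid p+1$, so each nontrivial $\chi^{a}$ has order dividing $6$ and is semi-primitive with minimal exponent $j=1$. Requiring $6\mid\frac{r-1}{p-1}=\sum_{i=0}^{k-1}p^{i}\equiv\sum_{i=0}^{k-1}(-1)^{i}\pmod6$ forces $k$ even, so $k/2$ is an integer. The semi-primitive Gauss sum evaluation then gives every $g(\chi^{a})$ the same pure value $c=(-1)^{k/2}p^{k/2}$, independent of $a$. Substituting into the fundamental identity and using $\sum_{a=1}^{5}\zeta_{6}^{-aj}=-1$ for $j\not\equiv0$ and $=5$ for $j\equiv0\pmod6$ collapses the periods to $\eta_{j}^{*(6,r)}=c$ for the five values $j\not\equiv0$ and $\eta_{j}^{*(6,r)}=-5c$ for $j\equiv0$, which is precisely $(X-(-1)^{k/2}p^{k/2})^{5}(X+5(-1)^{k/2}p^{k/2})$.

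The main obstacle throughout is sign and normalization bookkeeping rather than conceptual difficulty. In case (a) the delicate points are (i) verifying that the algebraic integer $J$ produced by the cubic Gauss sum is the specific $\lambda$ singled out by the congruence $3s_{3}\equiv(2G^{(p-1)/3}+1)r_{3}\pmod p$, and (ii) controlling the sign carried by the quadratic factor inside the Hasse--Davenport product when evaluating the sextic sum, so that the sextic and quadratic terms end up with the same factor $(-1)^{tk/2}$. In case (b) the only subtlety is checking that the semi-primitive sign is genuinely common to the order-$2$, order-$3$ and order-$6$ characters, so that the $\zeta_{6}$-sum really does telescope to a single exceptional period.
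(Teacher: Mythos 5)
First, a point of comparison: the paper offers no proof of this lemma at all --- it is quoted directly from Gurak \cite{Gu} --- so your argument has to stand on its own merits. Your fundamental identity $\eta_{j}^{*(6,r)}=-\sum_{a=1}^{5}\zeta_{6}^{-aj}g(\chi^{a})$ is correct, your remark that in the paper's sign convention Davenport--Hasse reads $g(\chi_{0}\circ N)=g(\chi_{0})^{k}$ with no extra sign is correct, and your outline of part (a) --- lifting to $GF(p)$, the classical quadratic and cubic evaluations, the Hasse--Davenport product relation for the sextic sum --- is the standard derivation; the two delicate points you flag there (which conjugate $\lambda$ the cubic Jacobi sum equals, and the sign bookkeeping in the product relation) are real but can be closed.

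The genuine gap is in part (b). The claim that the semi-primitive evaluation ``gives every $g(\chi^{a})$ the same pure value $c=(-1)^{k/2}p^{k/2}$, independent of $a$'' is false, because the semi-primitive sign depends on the order $d$ of $\chi^{a}$: for a character of order $d\mid Q+1$ of $GF(Q^{2})$ one has $\sum_{x}\chi(x)\mu(x)=Q$ if $d$ is odd or $(Q+1)/d$ is even, but $=-Q$ if $d$ is even and $(Q+1)/d$ is odd (see \cite{LH}). Take $p=5$, $k=2$, so $r=25$, $Q=5$, $\frac{r-1}{p-1}=6$: the cubic characters ($a=2,4$) give $g(\chi^{a})=-5=c$, while the quadratic and sextic characters ($a=1,3,5$) give $g(\chi^{a})=+5=-c$. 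Your telescoping computation then fails: the true periods are $\eta_{j}^{*(6,r)}=-5$ for $j\neq 3$ and $\eta_{3}^{*(6,r)}=25$, so the exceptional period sits at $j=3$, not at $j=0$ as you assert. This bad case occurs exactly when $p\equiv 5\pmod{12}$ and $k\equiv 2\pmod 4$, which is squarely within the lemma's hypotheses. The statement nevertheless survives, because in that case the sign pattern is $g(\chi^{a})=(-1)^{a}c=\zeta_{6}^{3a}c$, and inserting this into the fundamental identity merely replaces $j$ by $j-3$: the periods are permuted, so the multiset of roots, hence the polynomial, is unchanged. A correct proof of (b) therefore needs two ingredients you omit: (i) the order-$2$ and order-$6$ signs always agree with each other (immediate since $(p+1)/2=3\cdot(p+1)/6$ have the same parity), and (ii) a two-case analysis --- all five Gauss sums equal, versus the $(-1)^{a}$ pattern --- with the shift $j\mapsto j-3$ showing both cases yield $(X-(-1)^{k/2}p^{k/2})^{5}(X+5(-1)^{k/2}p^{k/2})$. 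Your closing sentence identifies precisely this as ``the only subtlety'' but presumes the check succeeds; it does not.
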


\begin{lemma}\label{lem6}
Let $8|\frac{r-1}{p-1}$. We have the following results on the factorization of $\psi_{(8,r)}^{*}(X)$.

{\rm (a)} If $p\equiv 1 \pmod 8$, then $8|k$ and
$$
\psi_{(8,r)}^{*}(X)=
(X-\eta_{1}^{*(8,r)})\cdots (X-\eta_{8}^{*(8,r)}),
$$
where $\eta_{j}^{*(8,r)}=-p^{k/2}-
p^{k/4}Q_{j,k/2}-p^{k/8}AB$ and
$A, B$ are defined as follows:
If $j$ is even, $A=Q_{j/2,k/4}$ and $B=T_{k/2}$. If $j$ is odd, $A=(-1)^{[{j}/{4}]}Q_{0,k/4}+
        (-1)^{[{(j-2)}/{4}]}P_{0,k/4}$ and $B=S_{k/2}$.

{\rm (b)} If $p\equiv 3\pmod 8$, then
$4|k$ and
$$
\psi_{(8,r)}^{*}(X)=
(X-\xi_{1})^2(X-\xi_{2})^2(X-\xi_{3})^2
(X-\xi_{4})(X-\xi_{5}),
$$
where
$\xi_{1}=-2p^{k/2}S_{k/2}+p^{k/2}$,  $\xi_{2}=2p^{k/2}S_{k/2}+p^{k/2}$,
$\xi_{3}=p^{k/2}$, $\xi_{4}=2p^{k/4}T_{k/2}-3p^{k/2}$ and
$\xi_{5}=-2p^{k/4}T_{k/2}-3p^{k/2}.$

{\rm (c)} If $p\equiv 5\pmod 8$, then $8|k$ and
\begin{align*}
\psi_{(8,r)}^{*}(X)=(X-\xi_{1})^2(X-\xi_{2})^2
(X-\xi_{3})(X-\xi_{4})\\
\times(X-\xi_{5})(X-\xi_{6})
,
\end{align*}
where
$\xi_{1}=p^{k/2}-p^{k/4}P_{k/2}$, $\xi_{2}=p^{k/2}+p^{k/4}P_{k/2}$,
$\xi_{3}=-p^{k/2}-p^{k/4}Q_{k/2}-2p^{3k/8}Q_{k/4}$, $\xi_{4}=-p^{k/2}-p^{k/4}Q_{k/2}+2p^{3k/8}Q_{k/4}$, $\xi_{5}=-p^{k/2}+p^{k/4}Q_{k/2}-2p^{3k/8}P_{k/4}$ and $\xi_{6}=-p^{k/2}+p^{k/4}Q_{k/2}+2p^{3k/8}P_{k/4}$.

{\rm (d)} If $p\equiv 7\pmod 8$, then $2|k$ and
$$
\psi_{(8,r)}^{*}(X)=(X-(-1)^{k/2}p^{k/2})^7
(X+7(-1)^{k/2}p^{k/2})
.$$
\end{lemma}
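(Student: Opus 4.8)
The plan is to evaluate the reduced periods $\eta_j^{*(8,r)}$ one at a time and then read the factorization off $\psi_{(8,r)}^{*}(X)=\prod_{j=1}^{8}(X-\eta_j^{*(8,r)})$. Fix a multiplicative character $\chi$ of $GF(r)^{*}$ of order $8$. Any power $\chi^a$ has order dividing $8$, hence is constant on each cyclotomic class $C_j^{(8,r)}=\alpha^j\langle\alpha^8\rangle$, so that $\sum_{j}\chi^a(\alpha^j)\eta_j^{(8,r)}=-g(\chi^a)$ for $1\le a\le 7$ while $\sum_j\eta_j^{(8,r)}=-1$. Fourier inversion over $\mathbb{Z}/8$ then gives the working identity
$$\eta_j^{*(8,r)}=1+8\eta_j^{(8,r)}=-\sum_{a=1}^{7}\overline{\chi^a(\alpha^j)}\,g(\chi^a),$$
so the whole lemma reduces to the explicit evaluation of the seven Gauss sums $g(\chi^a)$ and to bookkeeping of the root-of-unity factors $\overline{\chi^a(\alpha^j)}$.

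I would organize the seven sums by the order of the character: $\chi^4$ is quadratic, $\chi^2,\chi^6$ are quartic, and $\chi,\chi^3,\chi^5,\chi^7$ are octic. For each, the Davenport--Hasse lifting theorem descends the sum from $GF(r)$ to the base field $GF(p^{d})$ in which the character first appears, where $d=ord_{N'}p$ for $N'\in\{2,4,8\}$ its order; this descent is exactly what produces the powers $p^{k/2}$, $p^{k/4}$ and $p^{k/8}$ that appear in the statement. Whether a base-field sum is \emph{pure} (a sign times a power of $p$) is governed by whether $-1$ lies in $\langle p\rangle$ modulo $N'$, and this dichotomy separates the four cases: for $p\equiv 7\pmod 8$ one has $p\equiv-1$ modulo $2,4,8$, so every nontrivial sum is pure; for $p\equiv 3$ the quadratic and quartic sums are pure but the octic ones are not, which is why only the octic sequences $T_n,S_n$ survive in (b); for $p\equiv 5$ already the quartic sums are impure, producing the quartic sequences $Q_n,P_n,Q_{j,n},P_{j,n}$ in (c); and for $p\equiv 1$ all three orders contribute as in (a).

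The heart of the matter, and the step I expect to be the main obstacle, is the exact evaluation of the impure quartic and octic sums, since $|g(\chi)|=\sqrt r$ fixes only the absolute value and leaves a root-of-unity factor undetermined. Here I would use Stickelberger's theorem in the form (\ref{for2}) and (\ref{for3}): the valuation $v_{\widetilde\wp}(g(\omega^{-h}))=s(h)$ fixes the prime-ideal factorization of the sum in $\mathbb{Z}[\zeta_{r-1},\zeta_p]$, and the leading congruence (\ref{for3}) pins down the ambiguous unit. Translated into rational integers this is the classical passage to $p=a_4^2+b_4^2$ with $\pi=a_4+ib_4$ for the quartic sums and $p=a_8^2+2b_8^2$ with $\sigma=a_8+ib_8\sqrt2$ for the octic sums. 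The normalizing congruences (for instance $a_4\equiv-(-1)^Z\pmod 4$, $b_4\equiv a_4G^{(p-1)/4}\pmod p$, $a_8\equiv-1\pmod 4$ and $2b_8\equiv(G^{(p-1)/8}+G^{3(p-1)/8})a_8\pmod p$ in the relevant residue classes) are precisely the conditions that single out the correct conjugate of each impure Gauss sum, with $G=\alpha^{(r-1)/(q-1)}$ a generator of $GF(q)^{*}$ and $Z$ the discrete logarithm of $2$ to base $G$. Carrying these through rewrites each $g(\chi^a)$ as the prescribed combination of the $Q$'s, $P$'s, $T$'s and $S$'s.

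Finally I would substitute back and collect terms. The multiplicities of the factors come for free from the identity $\eta_j^{(8,r)}=\eta_{pj\bmod 8}^{(8,r)}$, valid because the Frobenius $x\mapsto x^p$ carries $C_j^{(8,r)}$ onto $C_{pj}^{(8,r)}$ and preserves $Tr_{r/p}$; thus the distinct periods are indexed by the orbits of multiplication by $p$ on $\mathbb{Z}/8$, which have sizes $\{1,1,2,2,2\}$ for $p\equiv3$ and $\{1,1,1,1,2,2\}$ for $p\equiv5$, reproducing the squared factors in (b) and (c), while for $p\equiv1$ the eight singleton orbits give eight distinct roots. The divisibility statements follow from the hypothesis $8\mid\frac{r-1}{p-1}$: reducing $\frac{r-1}{p-1}=1+p+\cdots+p^{k-1}$ modulo $8$ with $p\equiv1,3,5,7$ forces $k\equiv0$ modulo $8,4,8,2$ respectively, matching (a)--(d). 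In the fully semiprimitive case $p\equiv7$ the purity of all seven sums collapses the five orbit-values down to the two values $(-1)^{k/2}p^{k/2}$ and $-7(-1)^{k/2}p^{k/2}$, giving the factor $(X-(-1)^{k/2}p^{k/2})^7(X+7(-1)^{k/2}p^{k/2})$ of (d). Matching the computed period values against the claimed linear factors in each residue class finishes the proof; the only genuinely delicate bookkeeping is tracking the signs $(-1)^Z$ and the dependence on $k$ modulo small integers through the Davenport--Hasse exponents.
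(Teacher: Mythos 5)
The first thing to say is that the paper contains no proof of this lemma at all: Lemma \ref{lem6} is quoted from Gurak \cite{Gu}, just as Lemmas \ref{lem1}--\ref{lem3} are quoted from Myerson and Lemma \ref{lem4} from Hoshi. So there is no proof of the paper's to compare yours against; what you are really attempting is a reconstruction of Gurak's theorem. Within that attempt, the skeleton you set up is the standard one, and its checkable parts are correct: with the paper's sign convention $g(\chi)=-\sum_{\beta}\chi(\beta)\mu(\beta)$, the inversion formula $\eta_j^{*(8,r)}=-\sum_{a=1}^{7}\overline{\chi^a(\alpha^j)}\,g(\chi^a)$ is right; the purity dichotomy by the class of $p$ modulo $8$ (everything pure for $p\equiv7$, octic sums impure for $p\equiv3$, quartic and octic impure for $p\equiv5$, nothing forced pure for $p\equiv1$) is right; the orbit sizes of multiplication by $p$ on $\mathbb{Z}/8\mathbb{Z}$ ($1,1,2,2,2$ for $p\equiv3$; $1,1,1,1,2,2$ for $p\equiv5$) match the claimed multiplicities; and the congruences $8\mid k$, $4\mid k$, $8\mid k$, $2\mid k$ do follow from $8\mid(1+p+\cdots+p^{k-1})$ exactly as you compute.

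The genuine gap is that everything constituting the actual content of the lemma is deferred rather than done. The lemma is a list of explicit closed forms, and your proposal stops precisely where those closed forms would have to be produced: ``carrying these through rewrites each $g(\chi^a)$ as the prescribed combination'' and ``matching the computed period values against the claimed linear factors \dots finishes the proof'' are placeholders for the computation that is the substance of \cite{Gu} --- determining which conjugate of $\pi=a_4+ib_4$ and $\sigma=a_8+ib_8\sqrt{2}$ the Stickelberger congruence (\ref{for3}) selects under the stated normalizations, tracking the Davenport--Hasse signs through the descents giving the exponents $k/2$, $k/4$, $k/8$, and verifying that the assembled expressions are exactly $-p^{k/2}-p^{k/4}Q_{j,k/2}-p^{k/8}AB$ in case (a), the $\xi_i$ with cross terms $2p^{3k/8}Q_{k/4}$ and $2p^{3k/8}P_{k/4}$ in case (c), and so on. Nothing in the sketch rules out landing on a conjugate or index-permuted variant of these formulas, and that sign/index bookkeeping is exactly the delicate part. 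A second, smaller gap: the Frobenius-orbit argument only bounds the number of distinct roots from above (indeed in case (d) the five orbits collapse to two values), so to assert the exact multiplicities claimed in (a)--(c) you must additionally show that periods attached to different orbits take distinct values there --- which again requires the explicit evaluations you have not carried out.
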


\begin{lemma}\label{lem7}
Let $12|\frac{r-1}{p-1}$. We have the following results on the factorization of $\psi_{(12,r)}^{*}(X)$.

{\rm (a)} If $p\equiv 1 \pmod {12}$, then $12|k$ and
$$
\psi_{(12,r)}^{*}(X)=
(X-\eta_{1}^{*(12,r)})\cdots (X-\eta_{1}^{*(12,r)}),
$$
where $\eta_{j}^{*(12,r)}=-p^{k/12}Q_{j,k/2}V_{-j,k/3}
-p^{k/4}Q_{j,k/2}
-p^{k/6}V_{j,2k/3}-p^{k/3}V_{2j,k/3}
-(-1)^jp^{k/2}$.

{\rm (b)} If $p\equiv 5\pmod {12}$, then
$4|k$ and
\begin{align*}
\psi_{(12,r)}^{*}(X)=
(X-\xi_{1})^2(X-\xi_{2})^2(X-\xi_{3})^2
(X-\xi_{4})^2 \\
\times
(X-\xi_{5})\cdots
(X-\xi_{8}),
\end{align*}
where
$\xi_{1}=Q_{k/2}p^{k/4}((-1)^{k/4}-1)+p^{k/2}$, $\xi_{2}=-Q_{k/2}p^{k/4}((-1)^{k/4}-1)+p^{k/2}$,
$\xi_{3}=P_{k/2}p^{k/4}((-1)^{k/4}+1)+p^{k/2}$, $\xi_{4}=-P_{k/2}p^{k/4}((-1)^{k/4}+1)+p^{k/2}$,
 $\xi_{5}=P_{k/2}p^{k/4}(2(-1)^{k/4}-1)+p^{k/2}$,  $\xi_{6}=-P_{k/2}p^{k/4}(2(-1)^{k/4}-1)+p^{k/2}$,
$\xi_{7}=Q_{k/2}p^{k/4}(2(-1)^{k/4}+1)-5p^{k/2}$ and  $\xi_{8}=-Q_{k/2}p^{k/4}(2(-1)^{k/4}-1)-5p^{k/2}$.

{\rm (c)}  If $p\equiv 7\pmod {12}$. Let $\rho=2(-1)^{k(p+5)/6}$, then $6|k$ and
$$
\psi_{(12,r)}^{*}(X)=
(X-\eta_{1}^{*(12,r)})\cdots (X-\eta_{j}^{*(12,r)})
,$$
where $\eta_{j}^{*(12,r)}$ is defined as follows:
If $j$ is odd, $\eta_{j}^{*(12,r)}=-(-1)^{k/2}p^{k/6}V_{j,2k/3}
-p^{k/3}V_{2j,k/3}+(-1)^{k/2}p^{k/2}$; if
$2\|j$, $\eta_{j}^{*(12,r)}=-(-1)^{k/2}p^{k/6}V_{j,2k/3}
+p^{k/3}V_{2j,k/3}(\rho-1)+p^{k/2}(\rho-(-1)^{k/2})$; if $4\mid j$, $\eta_{j}^{*(12,r)}=-(-1)^{k/2}p^{k/6}V_{j,2k/3}
-p^{k/3}(\rho+1)V_{2j,k/3}-p^{k/2}(\rho+(-1)^{k/2})$.

{\rm (d)} If $p\equiv 11\pmod {12}$, then $2|k$ and
$$
\psi_{(12,r)}^{*}(X)=(X-(-1)^{k/2}p^{k/2})^{11}
(X+11(-1)^{k/2}p^{k/2})
.$$
\end{lemma}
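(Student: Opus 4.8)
The plan is to identify each root $\eta_j^{*(12,r)}$ of $\psi_{(12,r)}^{*}(X)$ with an explicit combination of Gauss sums and then evaluate those sums by reducing to the cubic, quartic and sextic cases already recorded in Lemmas \ref{lem2}, \ref{lem3} and \ref{lem5}. Fixing a multiplicative character $\chi$ of order $12$ on $GF(r)^{*}$, Fourier inversion over the twelve cyclotomic classes $C_i^{(12,r)}$ gives $\eta_j^{*(12,r)}=-\sum_{t=1}^{11}\overline{\chi}^{\,t}(\alpha^{j})\,g(\chi^{t})$, so the entire factorization is controlled by the eleven values $g(\chi^{t})$ and the phases $\overline{\chi}^{\,t}(\alpha^{j})$. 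Grouping the exponents $t$ by the order $d=12/\gcd(t,12)\in\{2,3,4,6,12\}$ of $\chi^{t}$ already predicts the block structure of the answer.

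Next I would exploit $12=3\cdot4$ with $\gcd(3,4)=1$: every character of order dividing $12$ factors uniquely as $\chi=\chi_3\chi_4$ with $\mathrm{ord}(\chi_3)\mid3$ and $\mathrm{ord}(\chi_4)\mid4$, and the Jacobi-sum identity $g(\chi_3)g(\chi_4)=J(\chi_3,\chi_4)\,g(\chi_3\chi_4)$ trades the order-$12$ and order-$6$ Gauss sums for products of cubic/sextic and quartic ones; this is what produces the mixed term $p^{k/12}Q_{j,k/2}V_{-j,k/3}$ in part (a). Over the prime field the cubic sum satisfies $g(\chi_3)^{3}=p\lambda$ with the Eisenstein integer $\lambda=(r_3+i\sqrt3\,s_3)/2$, $\lambda\overline{\lambda}=p$ from $4p=r_3^{2}+3s_3^{2}$, and the quartic sum satisfies $g(\chi_4)^{4}=p\pi^{2}$ with the Gaussian integer $\pi=a_4+ib_4$, $\pi\overline{\pi}=p$ from $p=a_4^{2}+b_4^{2}$; the normalizing congruences $r_3\equiv1\pmod3$, $a_4\equiv-(-1)^{Z}\pmod4$ and the mod-$p$ conditions on $s_3,b_4$ written through $G=\alpha^{(r-1)/(q-1)}$ pin down every root of unity uniquely. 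The Davenport--Hasse relation $g_{GF(p^{k})}(\chi')=(-1)^{k-1}\big(g_{GF(p)}(\chi)\big)^{k}$ then lifts these prime-field evaluations to $GF(r)$, turning $g(\chi_3)^{k}$ and $g(\chi_4)^{k}$ into $p^{k/3}\lambda^{k/3}$ and $p^{k/4}\pi^{k/2}$ (together with their order-$6$ and order-$12$ analogues); bundling the conjugate pairs with their phases $\zeta_6^{\pm j},\zeta_4^{\pm j}$ converts these into exactly the sequences $V_{j,n},Q_{j,n},P_{j,n}$, whence the five-term shape of $\eta_j^{*(12,r)}$.

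The four cases are then separated by $p\bmod12$, with the forced divisibility of $k$ coming from the requirement $12\mid\frac{r-1}{p-1}=\sum_{i=0}^{k-1}p^{i}$ (a one-line computation modulo $12$ gives $12\mid k$, $4\mid k$, $6\mid k$, $2\mid k$ in the four cases). For $p\equiv1\pmod{12}$ both $\chi_3$ and $\chi_4$ are fixed by Frobenius and the twelve periods are distinct, yielding (a). For $p\equiv5,7\pmod{12}$ Frobenius $x\mapsto x^{p}$ sends $C_i^{(12,r)}$ to $C_{pi}^{(12,r)}$ and hence identifies the periods within each $\langle p\rangle$-orbit modulo $12$; the resulting orbit pattern (fixed points together with transpositions) produces precisely the multiplicities in (b) and (c). Finally $p\equiv11\pmod{12}$ is the semiprimitive case $12\mid p+1$: here all order-$12$ Gauss sums degenerate to $\pm p^{k/2}$, the periods collapse to two values, and one obtains the form $(X-(-1)^{k/2}p^{k/2})^{11}(X+11(-1)^{k/2}p^{k/2})$ of (d), exactly paralleling the $N=2,3$ patterns of Lemmas \ref{lem1} and \ref{lem2}(b).

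I expect the main obstacle to be the sign-and-phase bookkeeping rather than any single conceptual step. Matching each $\chi^{t}$ to the correct index $j$, fixing every root of unity $\zeta_6^{\pm j},\zeta_4^{\pm j}$ through the defining congruences, and controlling the Davenport--Hasse sign $(-1)^{k-1}$ together with the Jacobi-sum phases so that the delicate case-dependent constants come out right (for instance $\rho=2(-1)^{k(p+5)/6}$ in (c) and the alternating $(-1)^{k/4}$ coefficients in (b)) is where essentially all the work lies; the structural reduction to cubic and quartic Gauss sums is routine once the decomposition $\chi=\chi_3\chi_4$ is in place.
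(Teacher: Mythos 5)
First, a point of reference: the paper never proves Lemma \ref{lem7} at all; it is imported verbatim from Gurak \cite{Gu}, so there is no internal argument to compare yours against, and your proposal has to be judged as a free-standing proof of a result the paper only cites. Judged that way, your skeleton is the standard one (essentially the one used in that literature) and much of it checks out: with the paper's convention $g(\chi)=-\sum_{\beta}\chi(\beta)\mu(\beta)$, orthogonality does give $\eta_j^{*(12,r)}=-\sum_{t=1}^{11}\overline{\chi}^{t}(\alpha^{j})g(\chi^{t})$; the congruences $12\mid k$, $4\mid k$, $6\mid k$, $2\mid k$ in the four cases follow from $12\mid\sum_{i=0}^{k-1}p^{i}$ exactly as you indicate; the orbits of multiplication by $p$ on the residues modulo $12$ (four fixed points and four transpositions when $p\equiv 5$, six fixed points and three transpositions when $p\equiv 7$) match the multiplicity patterns of (b) and (c); and in case (a) the grouping of the eleven nontrivial characters by order ($4+2+2+2+1$ of orders $12,6,4,3,2$), together with $\chi=\chi_{3}\chi_{4}$, Jacobi sums and Davenport--Hasse, does account for the five-term shape of $\eta_j^{*(12,r)}$, with the four order-$12$ sums bundling into the mixed term $p^{k/12}Q_{j,k/2}V_{-j,k/3}$.

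The genuine gap is in cases (b) and (c), where the evaluation machinery you set up does not apply as stated. You evaluate both $g(\chi_{3})$ and $g(\chi_{4})$ over the prime field and lift by Davenport--Hasse; but for $p\equiv 5\pmod{12}$ there is no nontrivial cubic character of $GF(p)^{*}$ (the paper accordingly defines $\lambda$ and the $V$-sequences only when $p\equiv 1\pmod 3$), and for $p\equiv 7\pmod{12}$ there is no quartic one. In those cases the $3$-part (resp.\ the $4$-part) of the character must instead be handled by a semiprimitive evaluation, since $3\mid p+1$ (resp.\ $4\mid p+1$), so its Gauss sums are pure of the form $\pm p^{k/2}$, and only the complementary part lifts from $GF(p)$. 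This hybrid argument is exactly what makes the $V$-terms disappear from (b) and the $Q,P$-terms disappear from (c), and it is what produces the constants $\rho=2(-1)^{k(p+5)/6}$ and the $(-1)^{k/4}$ signs; your sketch invokes semiprimitivity only in case (d). Your Frobenius-orbit argument gives the multiplicities in (b) and (c), but by itself none of the values. More generally, what you defer as ``sign-and-phase bookkeeping'' is not a routine check but the actual content of the lemma: until the primary normalizations of $\lambda$ and $\pi$ (via the congruences involving $G$ and $Z$), the Jacobi-sum phases, and the Davenport--Hasse signs are tracked through each residue class of $p$ modulo $12$, the factorizations in (a)--(d) are made plausible rather than proved.
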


\section{Some general results on the weight distribution}
To discuss the weight distribution, we first consider
$$
Z(r,\beta)=\#\{Tr_{r/q}(\beta x^N)=0:
x\in GF(r)\}.
$$
Then $wt(c(\beta))=n-\frac{Z(r,\beta)-1}{N}$.

For $Z(r,\beta)$, we have the following formula \cite{Ding2}.
\begin{eqnarray*}
Z(r,\beta)&=&\frac{1}{q}\sum_{y\in GF(q)}\sum_{
x\in GF(r)}\zeta_p^{Tr_{q/p}(yTr_{r/q}(\beta x^N))}\\
&=&\frac{1}{q}[q+r-1+N\sum_{y\in GF(q)^*}
\sum_{x\in \mathcal{C}_{0}^{N,r}}\mu(y\beta x)].
\end{eqnarray*}

For the simplification of this formula, we introduce some lemmas.
\begin{lemma}\label{lem3.1}
We use the assumptions and notations above, then

{\rm (a)} $\#(GF(q)^{*}\cap{C}_{0}^{(N,r)})
=
\frac{r-1}{[N,\frac{r-1}{q-1}]}$.

{\rm (b)} $GF(q)^{*}\cap{C}_{i}^{(N,r)}
\neq\emptyset$ if and only if $(N,\frac{r-1}{q-1})|i$.

{\rm (c)} If $GF(q)^{*}\cap{C}_{i}^{(N,r)}
\neq\emptyset$, then $\#(GF(q)^{*}\cap{C}_{i}^{(N,r)})
=
\frac{r-1}{[N,\frac{r-1}{q-1}]}$.
\end{lemma}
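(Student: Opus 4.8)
The plan is to convert the set-theoretic intersection $GF(q)^{*}\cap C_{i}^{(N,r)}$ into the solution set of a single linear congruence and then read off all three parts from elementary divisibility. First I would make the two subgroups of $GF(r)^{*}=\langle\alpha\rangle$ explicit. Since $q-1$ divides $r-1$, the group $GF(q)^{*}$ is the unique subgroup of order $q-1$, namely $\langle\alpha^{\delta}\rangle$ with $\delta=\frac{r-1}{q-1}$; thus every element of $GF(q)^{*}$ is $\alpha^{\delta j}$ for $j$ running over a complete residue system modulo $q-1$, and these are distinct because $\alpha^{\delta}$ has order $q-1$. On the other hand, $\alpha^{a}\in C_{i}^{(N,r)}=\alpha^{i}\langle\alpha^{N}\rangle$ exactly when $a\equiv i \pmod N$, using $N\mid r-1$. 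Hence $\alpha^{\delta j}\in C_{i}^{(N,r)}$ iff $\delta j\equiv i \pmod N$, and $\alpha^{\delta j}\mapsto j$ identifies $GF(q)^{*}\cap C_{i}^{(N,r)}$ with the solution set of this congruence in $\mathbb{Z}/(q-1)\mathbb{Z}$.

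Next I would analyze the congruence $\delta j\equiv i \pmod N$. With $N_{2}=\gcd(N,\delta)$, which is precisely $(N,\frac{r-1}{q-1})$, the standard theory of linear congruences gives solvability if and only if $N_{2}\mid i$, and this is part (b). For the counting in (a) and (c), the load-bearing observation is the divisibility $\frac{N}{N_{2}}\mid (q-1)$: it follows from $N\mid r-1=\delta(q-1)$ by dividing through by $N_{2}$ and invoking $\gcd(N/N_{2},\delta/N_{2})=1$. When the congruence is solvable its solutions form a single residue class modulo $N/N_{2}$; since $N/N_{2}$ divides $q-1$, that class meets the complete residue system $\{0,\dots,q-2\}$ in exactly $\frac{q-1}{N/N_{2}}$ values of $j$, a count independent of $i$. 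This is (c), and specializing to $i=0$ (always solvable) yields (a).

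Finally I would rewrite the count in the asserted closed form. Using $[N,\delta]=N\delta/N_{2}$ together with $\delta=\frac{r-1}{q-1}$,
\[
\frac{q-1}{N/N_{2}}=\frac{(q-1)N_{2}}{N}=\frac{(q-1)\delta N_{2}}{N\delta}=\frac{(r-1)N_{2}}{N\delta}=\frac{r-1}{[N,\frac{r-1}{q-1}]},
\]
which is exactly the stated value and completes (a) and (c).

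The whole argument is essentially bookkeeping with gcd and lcm, anchored on one genuinely load-bearing fact, the divisibility $N/N_{2}\mid q-1$: it is what guarantees both that every nonempty intersection has the \emph{same} cardinality (the fibres of $j\mapsto\delta j \bmod N$ are uniform) and that this cardinality collapses to the clean expression $\frac{r-1}{[N,(r-1)/(q-1)]}$. I expect verifying this divisibility and keeping the gcd/lcm manipulations consistent to be the only points needing care; there is no deeper obstacle.
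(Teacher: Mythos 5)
Your proof is correct, but it takes a genuinely different route from the paper's. You identify $GF(q)^{*}\cap C_{i}^{(N,r)}$ with the solution set of the single linear congruence $\delta j\equiv i\pmod N$ in $\mathbb{Z}/(q-1)\mathbb{Z}$ (where $\delta=\frac{r-1}{q-1}$) and then read off all three parts from standard congruence theory, with (a) obtained as the special case $i=0$ of (c); the load-bearing step, $\frac{N}{N_{2}}\mid q-1$, you verify correctly from $N\mid\delta(q-1)$ and $\gcd(N/N_{2},\delta/N_{2})=1$. The paper argues differently for each part: for (a) it invokes the subgroup-lattice fact that in the cyclic group $GF(r)^{*}$ one has $\langle\alpha^{\delta}\rangle\cap\langle\alpha^{N}\rangle=\langle\alpha^{[\delta,N]}\rangle$, which yields the cardinality instantly; for (b) it uses a B\'ezout argument (integers $a,b$ with $Na+\delta b=i$); and for (c) it counts nothing at all, instead exhibiting injections in both directions (multiplying by, resp.\ dividing by, a fixed element of the nonempty intersection) to conclude that every nonempty intersection has the same size as the $i=0$ one. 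Your approach buys uniformity --- one computation covers (a), (b), (c) at once, and the equality of all the nonzero counts falls out of the rigid structure of solution sets of linear congruences --- at the price of explicitly establishing the divisibility $N/N_{2}\mid q-1$, which the paper never needs to state: its lcm formula packages the count in (a), and its translation bijection makes the uniformity in (c) automatic. One small point you could make explicit: for the congruence condition to be well defined on residues $j$ modulo $q-1$ you need $\delta(q-1)\equiv 0\pmod N$, i.e.\ $N\mid r-1$, which holds by hypothesis; as written this is implicit in your identification.
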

\begin{proof}
{\rm (a)} Since
$GF(q)^{*}$ and ${C}_{0}^{(N,r)}$ are subgroups of the cyclic group $GF(r)^{*}$, where
$$GF(q)^{*}=<\alpha^{\frac{r-1}{q-1}}> ~~and~~
\mathcal{C}_{0}^{(N,r)}=
<\alpha^{N}>,$$
then
$$GF(q)^{*}\cap {C}_{0}^{(N,r)}=<\alpha^{
[\frac{r-1}{q-1},N]}>.$$
that is
$$\#(GF(q)^{*}\cap {C}_{0}^{(N,r)})
=\frac{r-1}{[N,\frac{r-1}{q-1}]}.$$

{\rm (b)} If $(N,\frac{r-1}{q-1})|i$, then there exists two integers $a$ and $b$ satisfying
$$Na+\frac{r-1}{q-1}b=i.$$
 Hence
$$\alpha^{\frac{r-1}{q-1}b}=\alpha^{-Na}\cdot
\alpha^i\in (GF(q)^{*}\cap {C}_{i}^{(N,r)}).$$
that is
$$GF(q)^{*}\cap{C}_{i}^{(N,r)}
\neq\emptyset.$$
If $GF(q)^{*}\cap{C}_{i}^{(N,r)}
\neq\emptyset$, there exists two integers $a$ and
$b$ satisfying
$$\alpha^{\frac{r-1}{q-1}a}
=\alpha^{Nb}\cdot \alpha^{i}.$$
that is,
$$\alpha^i=\alpha^{\frac{r-1}{q-1}a}\cdot \alpha^{-Nb}
\in <\alpha^{\frac{r-1}{q-1}},\alpha^N>.$$
Note that
$$<\alpha^{\frac{r-1}{q-1}},\alpha^N>=
<\alpha^{(\frac{r-1}{q-1},N)}>.$$
Hence
$$\alpha^i\in <\alpha^{(\frac{r-1}{q-1},N)}>.$$
Then we have
$$(\frac{r-1}{q-1},N)|i.$$

{\rm (c)} If $GF(q)^{*}\cap{C}_{i}^{(N,r)}
\neq\emptyset$, we can let $GF(q)^{*}\cap{C}_{i}^{(N,r)}=n_{i}$.

We first prove that $n_{i}\geq n_0$.
All the elements in $GF(q)^{*}\cap{C}_{0}^{(N,r)}$ are denoted by
$$x_{1}=y_{1},$$
$$\vdots~~~~\vdots$$
$$x_{n_{0}}=y_{n_{0}},$$
where
$x_{1},\cdots,x_{n_{0}}\in GF(q)^{*}$
and $y_{1},\cdots,y_{n_{0}}\in {C}_{0}^{(N,r)}$.
Take an element $z_1=w_1$ in
$GF(q)^{*}\cap{C}_{i}^{(N,r)}$, where
$z_1\in GF(q)^{*}$ and $w_1\in {C}_{i}^{(N,r)}$. Then
$z_1x_j\in GF(q)^{*}$ and $w_1y_j\in {C}_{i}^{(N,r)}$. Hence
$$z_1x_j=w_1y_j\in GF(q)^{*}\cap{C}_{i}^{(N,r)}\;(1\leq j\leq n_0).$$
Thus, we have $n_{i}\geq n_0$.

We now prove  $n_{0}\geq n_{i}$.
All the elements in $GF(q)^{*}\cap{C}_{i}^{(N,r)}$ are denoted by
$$z_{1}=w_{1},$$
$$\vdots~~~~\vdots$$
$$z_{n_{i}}=w_{n_{i}},$$
where
$z_{1},\cdots,z_{n_{i}}\in GF(q)^{*}$
and $w_{1},\cdots,w_{n_{i}}\in {C}_{i}^{(N,r)}$.
Then
$z_jz_1^{-1}\in GF(q)^{*}$ and $w_jw_1^{-1}\in {C}_{0}^{(N,r)}$. Hence
$$z_jz_1^{-1}=w_jw_1^{-1}\in GF(q)^{*}\cap{C}_{0}^{(N,r)}\;(1\leq j\leq n_i).$$
Thus, we have $n_{0}\geq n_i$.

Then we have $n_{i}=n_0$, that is,
$\#(GF(q)^{*}\cap{C}_{i}^{(N,r)})
=
\frac{r-1}{[N,\frac{r-1}{q-1}]}$.
\end{proof}
\begin{lemma}\label{lem3.2}
Let $\beta\in GF(r)^{*}$, then

{\rm (a)}  $\beta GF(q)^{*}
\cap {C}_{i}^{(N,r)}\neq \emptyset$ if and only if $i\equiv i(\beta)\pmod {N_{2}}$.

{\rm (b)}  If $\beta GF(q)^{*}
\cap {C}_{i}^{(N,r)}\neq \emptyset$,
$\#(\beta GF(q)^{*}
\cap {C}_{i}^{(N,r)})
=\frac{r-1}{[N,\frac{r-1}{q-1}]}$.
\end{lemma}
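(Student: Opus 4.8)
The plan is to reduce both parts to Lemma~\ref{lem3.1} by observing that multiplication by $\beta$ is a bijection of $GF(r)^{*}$ that merely shifts the cyclotomic classes by a fixed index. Writing $\beta\in C_{i(\beta)}^{(N,r)}$, so that $\beta=\alpha^{i(\beta)+Nt}$ for some integer $t$, I would first record the key identity
\[
\beta^{-1}C_{i}^{(N,r)}=\alpha^{i-i(\beta)-Nt}\langle\alpha^{N}\rangle
=\alpha^{i-i(\beta)}\langle\alpha^{N}\rangle
=C_{i-i(\beta)}^{(N,r)},
\]
where the middle equality uses that the factor $\alpha^{-Nt}$ is absorbed into the subgroup $\langle\alpha^{N}\rangle$. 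A symmetric computation gives $\beta C_{i-i(\beta)}^{(N,r)}=C_{i}^{(N,r)}$.

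With this identity in hand, the map $y\mapsto\beta^{-1}y$ sends $\beta GF(q)^{*}\cap C_{i}^{(N,r)}$ into $GF(q)^{*}\cap C_{i-i(\beta)}^{(N,r)}$, since $\beta^{-1}(\beta GF(q)^{*})=GF(q)^{*}$ and $\beta^{-1}C_{i}^{(N,r)}=C_{i-i(\beta)}^{(N,r)}$; its inverse $z\mapsto\beta z$ returns the latter set to the former. Hence these two intersection sets are in bijection, and the whole problem is transported back to the setting of Lemma~\ref{lem3.1}.

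For part~(a), the bijection yields $\beta GF(q)^{*}\cap C_{i}^{(N,r)}\neq\emptyset$ if and only if $GF(q)^{*}\cap C_{i-i(\beta)}^{(N,r)}\neq\emptyset$, which by Lemma~\ref{lem3.1}(b) is equivalent to $N_{2}\mid(i-i(\beta))$, i.e.\ to $i\equiv i(\beta)\pmod{N_{2}}$. For part~(b), whenever the set is nonempty its cardinality equals $\#(GF(q)^{*}\cap C_{i-i(\beta)}^{(N,r)})$, which by Lemma~\ref{lem3.1}(c) is exactly $\frac{r-1}{[N,\frac{r-1}{q-1}]}$.

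I do not anticipate a genuine obstacle: the argument is purely a translation of Lemma~\ref{lem3.1} along the group element $\beta$. The only point requiring care is the bookkeeping of the class index under multiplication—confirming that shifting by $\beta\in C_{i(\beta)}^{(N,r)}$ shifts the cyclotomic index by precisely $i(\beta)$ modulo $N$—and checking that this shift is compatible with the divisibility condition modulo $N_{2}$ supplied by Lemma~\ref{lem3.1}(b).
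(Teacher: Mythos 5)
Your proposal is correct and follows essentially the same route as the paper's own proof: translating by $\beta^{-1}$ to identify $\beta GF(q)^{*}\cap C_{i}^{(N,r)}$ with $GF(q)^{*}\cap C_{i-i(\beta)}^{(N,r)}$ and then invoking Lemma~\ref{lem3.1}(b) and (c). You simply make explicit the index-shift computation and the bijection that the paper leaves implicit.
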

\begin{proof}
{\rm (a)} $\beta GF(q)^{*}
\cap {C}_{i}^{(N,r)}\neq \emptyset$ if and only if
$$ GF(q)^{*}
\cap \beta^{-1}{C}_{i}^{(N,r)}
=GF(q)^{*}
\cap {C}_{i-i(\beta)}^{(N,r)}\neq \emptyset.$$
From Lemma \ref{lem3.1},
$GF(q)^{*}
\cap {C}_{i-i(\beta)}^{(N,r)}\neq \emptyset$ holds if and only if $$N_{2}|(i-i(\beta)).$$ that is $$i\equiv i(\beta)\pmod {N_{2}}.$$ Then $(a)$ is proved.

{\rm (b)} From $$\#(\beta GF(q)^{*}
\cap {C}_{i}^{(N,r)})
=\#(GF(q)^{*}
\cap {C}_{i-i(\beta)}^{(N,r)})$$ and Lemma \ref{lem3.1}, we can have $$\#(\beta GF(q)^{*}
\cap {C}_{i}^{(N,r)})
=\frac{r-1}{[N,\frac{r-1}{q-1}]}.$$
\end{proof}
\begin{lemma}\label{lem3.3}
Let $N'$ be a factor of $N$, then
$$
\eta_{i}^{(N,r)}+\eta_{i+N'}^{(N,r)}+\cdots
+\eta_{i+N'(\frac{N}{N'}-1)}^{(N,r)}=
\eta_{i}^{(N',r)}.
$$
\end{lemma}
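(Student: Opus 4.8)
The plan is to reduce the identity to a purely set-theoretic decomposition of the cyclotomic class $C_i^{(N',r)}$ as a disjoint union of classes of order $N$, after which the result follows by additivity of the character sum $\mu$. Since $\eta_i^{(N',r)}=\sum_{x\in C_i^{(N',r)}}\mu(x)$ and $\eta_{i+jN'}^{(N,r)}=\sum_{x\in C_{i+jN'}^{(N,r)}}\mu(x)$, it suffices to establish
\[
C_i^{(N',r)}=\bigsqcup_{j=0}^{N/N'-1}C_{i+jN'}^{(N,r)},
\]
and then sum $\mu$ over the two sides.

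First I would exploit the hypothesis $N'\mid N$ to nest the subgroups $\langle\alpha^{N}\rangle\subseteq\langle\alpha^{N'}\rangle$ of the cyclic group $GF(r)^{*}$. Recalling from the setup that $N\mid(r-1)$, the subgroup $\langle\alpha^{N}\rangle$ has order $\frac{r-1}{N}$ while $\langle\alpha^{N'}\rangle$ has order $\frac{r-1}{N'}$, so the index equals $N/N'$. A natural set of coset representatives is $\{\alpha^{jN'}:0\le j\le N/N'-1\}$, which yields the partition $\langle\alpha^{N'}\rangle=\bigsqcup_{j}\alpha^{jN'}\langle\alpha^{N}\rangle$. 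Multiplying every coset by $\alpha^{i}$ and using $C_{i}^{(N,r)}=\alpha^{i}\langle\alpha^{N}\rangle$ together with $C_{i}^{(N',r)}=\alpha^{i}\langle\alpha^{N'}\rangle$ converts this into the desired decomposition of $C_i^{(N',r)}$.

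The one step requiring care — and the main, if mild, obstacle — is confirming that the $N/N'$ cosets $\alpha^{jN'}\langle\alpha^{N}\rangle$ really are distinct and therefore exhaust $\langle\alpha^{N'}\rangle$ by the index count. This reduces to showing that $\alpha^{(j_1-j_2)N'}\in\langle\alpha^{N}\rangle$ forces $\frac{N}{N'}\mid(j_1-j_2)$: since $N\mid(r-1)$, membership in $\langle\alpha^{N}\rangle$ is equivalent to the exponent being divisible by $N$, so $N\mid(j_1-j_2)N'$, i.e. $\frac{N}{N'}\mid(j_1-j_2)$, which over the range $0\le j_1,j_2<N/N'$ forces $j_1=j_2$. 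With disjointness in hand, summing $\mu$ over $C_i^{(N',r)}=\bigsqcup_j C_{i+jN'}^{(N,r)}$ gives $\eta_i^{(N',r)}=\sum_{j=0}^{N/N'-1}\eta_{i+jN'}^{(N,r)}$, which is exactly the claimed identity.
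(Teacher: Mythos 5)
Your proof is correct and is exactly the argument the paper gestures at: the paper's proof is the one-line remark that the lemma ``can be proved from the definition of $\eta_{i}^{(N,r)}$,'' and your coset decomposition $C_i^{(N',r)}=\bigsqcup_{j=0}^{N/N'-1}C_{i+jN'}^{(N,r)}$ followed by summing $\mu$ is precisely the detail being omitted. Your careful verification of the distinctness of the cosets $\alpha^{jN'}\langle\alpha^{N}\rangle$ (using $N\mid(r-1)$ so that membership in $\langle\alpha^{N}\rangle$ means divisibility of the exponent by $N$) fills in the only step that requires any checking.
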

\begin{proof}
This lemma can be proved from the definition of
$\eta_{i}^{(N,r)}$.
\end{proof}
\begin{theorem}\label{thm3.1}
 $\forall \beta\in GF(r)^{*}$, the Hamming weight of any codeword
$$
c(\beta)=(Tr_{r/q}(\beta), Tr_{r/q}(\beta\theta),
\ldots, Tr_{r/q}(\beta\theta^{n-1}))
$$
in the code $\mathcal{C}(r,N)$ is
$$
wt(c(\beta))=
\frac{(q-1)(r-1-N_{2}\eta_{i(\beta)}^{(N_{2},r)})}{qN}
=\frac{(q-1)(r-\eta_{i(\beta)}^{*(N_{2},r)})}
{qN}.
$$
\end{theorem}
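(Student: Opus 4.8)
The plan is to start from the two relations already recorded, namely $wt(c(\beta)) = n - \frac{Z(r,\beta)-1}{N}$ with $n=\frac{r-1}{N}$, together with the expression
$$
Z(r,\beta) = \frac{1}{q}\Big[q+r-1 + N\sum_{y\in GF(q)^*}\sum_{x\in C_0^{(N,r)}}\mu(y\beta x)\Big],
$$
and to reduce the whole statement to an evaluation of the inner double sum $S:=\sum_{y\in GF(q)^*}\sum_{x\in C_0^{(N,r)}}\mu(\beta y x)$ in terms of a single Gauss period of order $N_2$. Once $S$ is known the remaining work is purely algebraic.

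First I would carry out the $x$-summation. Since $C_0^{(N,r)}$ is a subgroup of $GF(r)^*$, for each fixed $y$ the element $\beta y$ lies in a unique cyclotomic class of order $N$ and $\beta y\, C_0^{(N,r)} = C_{i(\beta y)}^{(N,r)}$; hence $\sum_{x\in C_0^{(N,r)}}\mu(\beta y x) = \mu(C_{i(\beta y)}^{(N,r)}) = \eta_{i(\beta y)}^{(N,r)}$, so that $S=\sum_{y\in GF(q)^*}\eta_{i(\beta y)}^{(N,r)}$. Next I would perform the $y$-summation by grouping the terms according to the class index $i(\beta y)$. As $y$ runs over $GF(q)^*$ the product $\beta y$ runs bijectively over $\beta GF(q)^*$, so the number of $y$ producing a fixed index $i$ equals $\#(\beta GF(q)^* \cap C_i^{(N,r)})$. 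By Lemma \ref{lem3.2} this intersection is nonempty exactly when $i\equiv i(\beta)\pmod{N_2}$, and in that case its cardinality is the constant $\frac{r-1}{[N,\frac{r-1}{q-1}]}$, independent of $i$. Therefore
$$
S = \frac{r-1}{[N,\frac{r-1}{q-1}]}\sum_{\substack{0\le i<N\\ i\equiv i(\beta)\,(N_2)}}\eta_i^{(N,r)},
$$
and Lemma \ref{lem3.3}, applied with $N'=N_2$, collapses the remaining sum of $N/N_2$ periods of order $N$ into the single period $\eta_{i(\beta)}^{(N_2,r)}$.

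It then remains to assemble the pieces. I would rewrite the multiplicity via $[N,\frac{r-1}{q-1}] = \frac{N(r-1)}{(q-1)N_2}$, which gives $\frac{r-1}{[N,\frac{r-1}{q-1}]} = \frac{(q-1)N_2}{N}$ and hence $S = \frac{(q-1)N_2}{N}\eta_{i(\beta)}^{(N_2,r)}$. Substituting back produces $Z(r,\beta) = \frac{1}{q}\big[q+r-1+(q-1)N_2\eta_{i(\beta)}^{(N_2,r)}\big]$, and then $wt(c(\beta)) = \frac{r-Z(r,\beta)}{N}$ simplifies, using $qr-q-r+1=(q-1)(r-1)$, to the first claimed form $\frac{(q-1)(r-1-N_2\eta_{i(\beta)}^{(N_2,r)})}{qN}$. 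The second form follows at once from the definition $\eta_{i}^{*(N_2,r)} = 1+N_2\eta_i^{(N_2,r)}$. I expect the main obstacle to be the combinatorial bookkeeping in the $y$-summation: one must justify that every admissible class index is attained with the same multiplicity, so that Lemma \ref{lem3.2} applies uniformly, and that the index-folding of Lemma \ref{lem3.3} matches exactly the residue class $i\equiv i(\beta)\pmod{N_2}$. The surrounding substitutions and the final simplification are routine.
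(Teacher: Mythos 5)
Your proposal is correct and follows essentially the same route as the paper: both expand the double sum into $\sum_i a_i(\beta)\,\eta_i^{(N,r)}$ with $a_i(\beta)=\#(\beta\, GF(q)^*\cap C_i^{(N,r)})$, invoke Lemma \ref{lem3.2} for which indices occur and with what (constant) multiplicity, collapse via Lemma \ref{lem3.3} with $N'=N_2$, and finish with the identity $N\cdot\frac{r-1}{q-1}=[N,\frac{r-1}{q-1}]\cdot N_2$. Your write-up merely makes the bijection $y\mapsto\beta y$ and the lcm/gcd simplification more explicit than the paper does.
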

\begin{proof}
The Hamming weight of $c(\beta)$ is
$$
wt(c(\beta))=n-\frac{Z(r,\beta)-1}{N},
$$
where
\begin{eqnarray*}
Z(r,\beta)&=&\frac{1}{q}\sum_{y\in GF(q)}\sum_{
x\in GF(r)}\zeta_p^{Tr_{q/p}(yTr_{r/q}(\beta x^N))}\\
&=&\frac{1}{q}[q+r-1+N\sum_{y\in GF(q)^*}
\sum_{x\in \mathcal{C}_{0}^{N,r}}\mu(y\beta x)].
\end{eqnarray*}
Note that
\begin{align*}
&\sum_{y\in GF(q)^*}
\sum_{x\in {C}_{0}^{(N,r)}}\mu(y\beta x)
=\sum_{y\in GF(q)^*}\mu(y\beta{C}_{0}^{N,r})\\
&=a_{0}(\beta)\eta_{0}^{(N,r)}+
a_{1}(\beta)\eta_{1}^{(N,r)}+\cdots+
a_{N-1}(\beta)\eta_{N-1}^{(N,r)},
\end{align*}
where $a_{i}(\beta)=\#(\beta GF(q)^{*}
\cap {C}_{i}^{(N,r)})$.

From Lemma
\ref{lem3.2} and Lemma \ref{lem3.3}, we have
\begin{align*}
&\sum_{y\in GF(q)^*}
\sum_{x\in {C}_{0}^{(N,r)}}\mu(y\beta x)\\
&=\frac{r-1}{[N,\frac{r-1}{q-1}]}
[\eta_{i(\beta)}^{(N,r)}+
\eta_{i(\beta)+N_{2}}^{(N,r)}+\cdots+
\eta_{i(\beta)+N_{2}(\frac{N}{N_{2}}-1)}^{(N,r)}]\\
&=\frac{r-1}{[N,\frac{r-1}{q-1}]}
\eta_{i(\beta)}^{(N_2,r)}.
\end{align*}
Then
$$
Z(r,\beta)=\frac{1}{q}[q+r-1+\frac{N(r-1)}{[N,\frac{r-1}{q-1}]}
\eta_{i(\beta)}^{(N_2,r)}].
$$
\begin{eqnarray*}
wt(c(\beta))&=& \frac{r-1-Z(r,\beta)+1}{N}\\
&=& \frac{r-\frac{1}{q}[q+r-1+\frac{N(r-1)}{[N,
\frac{r-1}{q-1}]}
\eta_{i(\beta)}^{(N_2,r)}]}{N}\\
&=& \frac{qr-q-r+1-\frac{N(r-1)}{[N,
\frac{r-1}{q-1}]}
\eta_{i(\beta)}^{(N_2,r)}}{qN}\\
&=& \frac{(q-1)(r-1)-(q-1)\frac{N\frac{r-1}{q-1}}{[N,
\frac{r-1}{q-1}]}
\eta_{i(\beta)}^{(N_2,r)}}{qN}\\
&=& \frac{q-1}{q}
\frac{r-1-N_{2}\eta_{i(\beta)}^{(N_{2},r)}}{N}\\
&=&\frac{q-1}{q}\frac{r-\eta_{i(\beta)}^{*(N_{2},r)}}{
N}.
\end{eqnarray*}
\end{proof}
\begin{theorem}\label{thm3.2}
Let the factorization of the reduced period polynomial $\psi_{(N_2,r)}^{*}$ be $$\psi_{(N_2,r)}^{*}=(X-\xi_{1})^{e_{1}}\cdots(X-
\xi_{t})
^{e_{t}},$$
where $e_{i}\geq 1$ and $e_{1}+
\cdots +e_{t}=N_{2}$, then $
\mathcal{C}(r,N)$ is a $[(r-1)/N,m]$ code and its weight distribution is
$$
1+\frac{e_{1}(r-1)}{N_{2}}x^{(q-1)(r-\xi_{1})/Nq}
+\cdots+\frac{e_{t}(r-1)}{N_{2}}x^
{(q-1)(r-\xi_{t})/Nq}.
$$
\end{theorem}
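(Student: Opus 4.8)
The plan is to read off the weight enumerator directly from Theorem~\ref{thm3.1} together with the definition of the reduced period polynomial. By Theorem~\ref{thm3.1}, for every $\beta\in GF(r)^{*}$ the weight of the codeword $c(\beta)$ equals $\frac{(q-1)(r-\eta_{i(\beta)}^{*(N_{2},r)})}{qN}$, so it depends on $\beta$ only through the residue $i(\beta)\bmod N_{2}$, i.e. only through which cyclotomic class of order $N_{2}$ the element $\beta$ lies in. First I would record that the zero codeword $c(0)$ contributes the constant term $1$ in the enumerator, and that since we are in the case $m_{0}=m$ the $GF(q)$-linear map $\beta\mapsto c(\beta)$ from $GF(r)$ onto $\mathcal{C}(r,N)$ is a bijection; hence distinct $\beta$ give distinct codewords and it suffices to count the $r-1$ nonzero values of $\beta$.

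The key step is to translate the given factorization into a statement about the reduced periods themselves. By definition $\psi_{(N_{2},r)}^{*}(X)=\prod_{i=0}^{N_{2}-1}(X-\eta_{i}^{*(N_{2},r)})$, so the multiset of roots of $\psi_{(N_{2},r)}^{*}$ is exactly $\{\eta_{0}^{*(N_{2},r)},\ldots,\eta_{N_{2}-1}^{*(N_{2},r)}\}$. Comparing this with the factorization $\psi_{(N_{2},r)}^{*}=\prod_{j=1}^{t}(X-\xi_{j})^{e_{j}}$ shows that the distinct values among the $N_{2}$ reduced periods are precisely $\xi_{1},\ldots,\xi_{t}$, and that the value $\xi_{j}$ is attained by exactly $e_{j}$ of the indices $i\in\{0,\ldots,N_{2}-1\}$. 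Consequently a codeword $c(\beta)$ has weight $\frac{(q-1)(r-\xi_{j})}{qN}$ precisely when $i(\beta)\bmod N_{2}$ is one of these $e_{j}$ residues.

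It then remains to count. Each cyclotomic class $C_{i}^{(N_{2},r)}=\alpha^{i}\langle\alpha^{N_{2}}\rangle$ is a coset of a subgroup of index $N_{2}$ in $GF(r)^{*}$, hence has exactly $\frac{r-1}{N_{2}}$ elements, and the nonzero $\beta$ of weight $\frac{(q-1)(r-\xi_{j})}{qN}$ are those lying in the union of the $e_{j}$ classes $C_{i}^{(N_{2},r)}$ for which $\eta_{i}^{*(N_{2},r)}=\xi_{j}$. This gives $e_{j}\cdot\frac{r-1}{N_{2}}=\frac{e_{j}(r-1)}{N_{2}}$ codewords of that weight, which is exactly the coefficient claimed. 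Adding the zero codeword yields the stated enumerator, and as a consistency check the coefficients sum to $1+\frac{r-1}{N_{2}}\sum_{j}e_{j}=1+(r-1)=r$, matching the $q^{m}=r$ codewords of an $[(r-1)/N,m]$ code.

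There is no serious obstacle here: once Theorem~\ref{thm3.1} is in hand the argument is essentially bookkeeping. The only two points requiring care are the identification of the algebraic multiplicity $e_{j}$ of the root $\xi_{j}$ with the number of indices $i$ at which the reduced period equals $\xi_{j}$ (immediate from the definition of $\psi_{(N_{2},r)}^{*}$ as the product over all $N_{2}$ reduced periods), and the use of $m_{0}=m$ to ensure that the enumeration over $\beta$ is an enumeration over codewords without repetition.
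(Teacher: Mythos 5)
Your proof is correct and follows exactly the route the paper takes: the paper's own proof of Theorem~\ref{thm3.2} is the one-line remark that it ``follows directly'' from Theorem~\ref{thm3.1}, and your argument simply makes explicit the bookkeeping (matching root multiplicities $e_j$ of $\psi_{(N_2,r)}^{*}$ with counts of cyclotomic classes of size $(r-1)/N_2$, and using $m_0=m$ for injectivity of $\beta\mapsto c(\beta)$) that the paper leaves implicit.
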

\begin{proof}
From Theorem \ref{thm3.1} , this theorem follows directly.
\end{proof}

Sometimes we also use the weight list
$$
wl=[0,(q-1)(r-\xi_{1})/Nq,\cdots,(q-1)(r-\xi_{t})/Nq ]
$$
and its corresponding weight distribution frequency
$$
Freq(0)=1, Freq(wl[i+1])=\frac{e_{i}(r-1)}{N_{2}}, i=1,\cdots,t.
$$
to describe the weight distribution.
\begin{corollary}
$\eta_{i}^{(N_{2},r)}$ and $\eta_{i}^{*(N_{2},r)}$
are integers. Further, $\psi_{(N_{2},r)}^{*}$ can be completely factorized over the rational field.
\end{corollary}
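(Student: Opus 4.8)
The plan is to reduce everything to a single claim: that each Gaussian period $\eta_{i}^{(N_{2},r)}$ is a rational integer. The second assertion of the corollary is then immediate, since $\psi_{(N_{2},r)}^{*}(X)=\prod_{i=0}^{N_{2}-1}(X-\eta_{i}^{*(N_{2},r)})$ with $\eta_{i}^{*(N_{2},r)}=1+N_{2}\eta_{i}^{(N_{2},r)}$, so all roots are integers and the polynomial splits into linear factors over $Q$ (indeed over $Z$). For the key claim, I first observe that $\eta_{i}^{(N_{2},r)}=\sum_{x\in C_{i}^{(N_{2},r)}}\zeta_{p}^{Tr_{r/p}(x)}$ is a sum of $p$-th roots of unity, hence an algebraic integer lying in the cyclotomic field $Q(\zeta_{p})$. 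It therefore suffices to prove that it is rational, because a rational algebraic integer is an ordinary integer.

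First I would record the crucial containment $GF(p)^{*}\subseteq C_{0}^{(N_{2},r)}$. The definition $N_{2}=(N,\frac{r-1}{q-1})$ gives $N_{2}\mid\frac{r-1}{q-1}$, and since $p-1\mid q-1$ we have $\frac{r-1}{q-1}\mid\frac{r-1}{p-1}$, whence $N_{2}\mid\frac{r-1}{p-1}$. Writing $GF(p)^{*}=\langle\alpha^{(r-1)/(p-1)}\rangle$ and $C_{0}^{(N_{2},r)}=\langle\alpha^{N_{2}}\rangle$, this divisibility yields $GF(p)^{*}\subseteq C_{0}^{(N_{2},r)}$. Consequently, for every $t\in GF(p)^{*}$ one has $t\,C_{i}^{(N_{2},r)}=C_{i}^{(N_{2},r)}$, i.e. multiplication by $t$ permutes each cyclotomic class of order $N_{2}$.

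Next I would use the identification $Gal(Q(\zeta_{p})/Q)\cong GF(p)^{*}$, where $\sigma_{t}$ acts by $\zeta_{p}\mapsto\zeta_{p}^{t}$. Applying $\sigma_{t}$ gives $\sigma_{t}(\eta_{i}^{(N_{2},r)})=\sum_{x\in C_{i}^{(N_{2},r)}}\zeta_{p}^{Tr_{r/p}(tx)}$, and since $x\mapsto tx$ is a bijection of $C_{i}^{(N_{2},r)}$ onto itself by the previous step, this sum equals $\eta_{i}^{(N_{2},r)}$. Thus every $\sigma_{t}$ fixes $\eta_{i}^{(N_{2},r)}$, so the period is fixed by all of $Gal(Q(\zeta_{p})/Q)$, hence rational and therefore a rational integer; consequently $\eta_{i}^{*(N_{2},r)}=1+N_{2}\eta_{i}^{(N_{2},r)}\in Z$ as well, and the factorization follows.

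The main obstacle is the second step: establishing precisely that the hypothesis on $N_{2}$ forces the prime field $GF(p)^{*}$ (equivalently all of $GF(q)^{*}$) into the principal class $C_{0}^{(N_{2},r)}$, since this containment is exactly what trivializes the Galois action on the periods and is the only place the condition $N_{2}\mid\frac{r-1}{q-1}$ enters. As a shortcut, one can instead invoke Theorem \ref{thm3.1}: the Hamming weight $wt(c(\beta))=\frac{(q-1)(r-\eta_{i(\beta)}^{*(N_{2},r)})}{qN}$ is a nonnegative integer, which forces $\eta_{i(\beta)}^{*(N_{2},r)}\in Q$; as $\beta$ ranges over $GF(r)^{*}$ the index $i(\beta)\bmod N_{2}$ covers all residues modulo $N_{2}$, so every $\eta_{i}^{*(N_{2},r)}$ is rational, hence a rational integer, giving the same conclusion.
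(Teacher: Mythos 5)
Your proposal is correct, and your primary argument takes a genuinely different route from the paper; in fact your closing ``shortcut'' \emph{is} the paper's proof. The paper argues exactly as you do there: the periods are algebraic integers by definition (sums of $p$-th roots of unity), and Theorem \ref{thm3.1} expresses the integer $wt(c(\beta))$ as $\frac{(q-1)(r-\eta_{i(\beta)}^{*(N_{2},r)})}{qN}$, which forces each $\eta_{i}^{*(N_{2},r)}$ to be rational, hence an integer, so $\psi_{(N_{2},r)}^{*}$ splits over $Q$. Your main argument instead proves rationality directly by Galois theory: from $N_{2}\mid\frac{r-1}{q-1}$ and $\frac{r-1}{q-1}\mid\frac{r-1}{p-1}$ you get $GF(p)^{*}=\langle\alpha^{(r-1)/(p-1)}\rangle\subseteq C_{0}^{(N_{2},r)}$, so multiplication by any $t\in GF(p)^{*}$ permutes each class $C_{i}^{(N_{2},r)}$, and then $\sigma_{t}\bigl(\eta_{i}^{(N_{2},r)}\bigr)=\sum_{x\in C_{i}^{(N_{2},r)}}\zeta_{p}^{Tr_{r/p}(tx)}=\eta_{i}^{(N_{2},r)}$ (using the $GF(p)$-linearity $t\,Tr_{r/p}(x)=Tr_{r/p}(tx)$), so the period is fixed by all of $Gal(Q(\zeta_{p})/Q)$. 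Both arguments are complete, and the comparison is instructive: the paper's route is shorter because Theorem \ref{thm3.1} is already available, but it hides the arithmetic mechanism inside the weight formula; your Galois argument is self-contained (no coding theory at all), isolates the structural reason for integrality --- the prime field sits inside the principal class --- and actually proves the stronger statement that $\eta_{i}^{(N',r)}$ is a rational integer for \emph{every} divisor $N'$ of $\frac{r-1}{p-1}$, not just $N'=N_{2}$.
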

\begin{proof}
From the definition of $\eta_{i}^{(N_{2},r)}$ and $\eta_{i}^{*(N_{2},r)}$, they are algebraic integers. Then from Theorem \ref{thm3.1}, they are rational. Hence, they are integers and $\psi_{(N_{2},r)}^{*}$ can be completely factorized over the rational field.
\end{proof}

\begin{theorem}\label{thm3.3}
$\mathcal{C}(r,N)$ is  a code with only one nonzero weight if and only if $N_{2}=1$.
\end{theorem}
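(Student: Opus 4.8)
The plan is to read the statement entirely through the lens of Theorem~\ref{thm3.2}. By that theorem the distinct nonzero weights of $\mathcal{C}(r,N)$ are in bijection with the distinct roots $\xi_1,\ldots,\xi_t$ of the reduced period polynomial $\psi_{(N_2,r)}^{*}$, because the affine map $\eta\mapsto (q-1)(r-\eta)/(qN)$ has nonzero slope and is therefore injective. Consequently the code has exactly one nonzero weight if and only if $t=1$, i.e. if and only if the $N_2$ reduced periods $\eta_0^{*(N_2,r)},\ldots,\eta_{N_2-1}^{*(N_2,r)}$ all coincide. So the whole problem reduces to deciding when these $N_2$ algebraic integers are equal.

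The ``if'' direction is then immediate: when $N_2=1$ there is a single cyclotomic class of order $1$, hence a single reduced period, so $\psi_{(1,r)}^{*}$ has one root and Theorem~\ref{thm3.2} yields a single nonzero weight.

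For the ``only if'' direction I would first record the linear constraint the periods satisfy. Applying Lemma~\ref{lem3.3} with $N'=1$ (equivalently, expanding $\sum_i \mu(C_i^{(N_2,r)})=\mu(GF(r)^{*})=-1$) gives $\sum_{i=0}^{N_2-1}\eta_i^{(N_2,r)}=-1$, and hence $\sum_{i=0}^{N_2-1}\eta_i^{*(N_2,r)}=N_2+N_2\cdot(-1)=0$. Now assume the code has a single nonzero weight, so all reduced periods equal a common value $\xi$. The sum relation forces $N_2\xi=0$, hence $\xi=0$, i.e. $\eta_i^{(N_2,r)}=-1/N_2$ for every $i$. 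By the corollary to Theorem~\ref{thm3.2} each $\eta_i^{(N_2,r)}$ is a rational integer, so $-1/N_2$ can be an integer only when $N_2=1$. This closes the equivalence.

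The computation is short, and I expect the only delicate point to be the reduction in the first paragraph rather than the final integrality step. One must check that every residue class modulo $N_2$ actually occurs as $i(\beta)$ for some $\beta\in GF(r)^{*}$ (so that all $N_2$ periods genuinely contribute), which holds because $N_2\mid N$ and $i(\beta)$ runs over all residues modulo $N$, matching the frequencies $e_i(r-1)/N_2$ summing to $r-1$ in Theorem~\ref{thm3.2}; and one should note that every such weight is nonzero, since for $\beta\neq0$ the codeword $c(\beta)$ is nonzero. With these routine checks in place the argument is exactly the three lines above.
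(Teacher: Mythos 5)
Your proof is correct and follows essentially the same route as the paper: reduce via Theorem~\ref{thm3.1}/\ref{thm3.2} to asking when all the periods coincide, use the sum relation $\sum_i \eta_i^{(N_2,r)}=-1$ (equivalently $\sum_i \eta_i^{*(N_2,r)}=0$) to force the common value to be $-1/N_2$, and contradict the integrality of the periods given by the corollary. The only differences are cosmetic --- you work with the reduced periods where the paper works with $\eta_i^{(N_2,r)}$ directly, and you spell out the routine checks (all residues mod $N_2$ occur as $i(\beta)$; nonzero codewords have nonzero weight) that the paper leaves implicit.
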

\begin{proof}
When $N_{2}=1$, $\mathcal{C}(r,N)$ is obviously  a code with only one nonzero weight.

We now prove that $\mathcal{C}(r,N)$ has at least two nonzero weights if $N_{2}>1$. Then we just need to prove $\eta_{i}^{(N_{2},r)}$ are not equal.

If $\eta_{1}^{(N_{2},r)}=\cdots =\eta_{N_{2}}^{(N_{2},r)}$, then
we have
$$
-1=\eta_{1}^{(N_{2},r)}+\cdots \eta_{N_{2}}^{(N_{2},r)}
=N_{2}\eta_{1}^{(N_{2},r)},
$$
that is, $\eta_{1}^{(N_{2},r)}=-\frac{1}{N_{2}}$. This contradicts that $\eta_{1}^{(N_{2},r)}$ is an integer. Then this theorem follows.
\end{proof}

From Theorem \ref{thm3.2}, $\mathcal{C}(r,N)$ have at most $N_{2}$ distinct nonzero weights. The following theorem gives a necessary and sufficient condition.
\begin{theorem}\label{thm3.4}
The number of distinct nonzero weights in $\mathcal{C}(r,N)$ achieves the maximum $N_{2}$
if and only if
$p\equiv 1\pmod {N_2}$.
\end{theorem}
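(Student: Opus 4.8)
The plan is to translate the statement into a question about the Gauss periods. By Theorem \ref{thm3.1} the weight satisfies $wt(c(\beta))=\frac{(q-1)(r-\eta_{i(\beta)}^{*(N_2,r)})}{qN}$, so it depends on $\beta$ only through $\eta_{i(\beta)}^{*(N_2,r)}=1+N_2\eta_{i(\beta)}^{(N_2,r)}$, and it is a strictly decreasing affine function of this quantity. Hence the number of distinct nonzero weights of $\mathcal{C}(r,N)$ equals the number of distinct values among the $N_2$ periods $\eta_0^{(N_2,r)},\ldots,\eta_{N_2-1}^{(N_2,r)}$, and the maximum $N_2$ is attained precisely when these periods are pairwise distinct. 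So I would reduce everything to proving: the $\eta_i^{(N_2,r)}$ are pairwise distinct if and only if $p\equiv 1\pmod{N_2}$.

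For the ``only if'' direction I would argue by the Frobenius map. Since $x\mapsto x^p$ sends $C_i^{(N_2,r)}$ bijectively onto $C_{pi}^{(N_2,r)}$ and fixes $Tr_{r/p}$, hence fixes $\mu$, the definition of the Gauss period gives $\eta_{pi}^{(N_2,r)}=\eta_i^{(N_2,r)}$ for every $i$. If $p\not\equiv 1\pmod{N_2}$, then $p\cdot 1\not\equiv 1\pmod{N_2}$, so the two distinct classes indexed by $1$ and $p\bmod N_2$ already share the same period; the number of distinct periods is then at most $N_2-1$ and the maximum cannot be reached. (Equivalently, the number of distinct periods is bounded by the number of orbits of multiplication by $p$ on $\mathbb{Z}/N_2\mathbb{Z}$, which equals $N_2$ only when $p\equiv 1$.)

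For the ``if'' direction, assume $p\equiv 1\pmod{N_2}$ and express the periods through Gauss sums. Let $\chi_j=\omega^{-\frac{r-1}{N_2}j}$ for $0\le j<N_2$ be the characters with $\chi_j^{N_2}=1$; these are exactly the characters occurring in McEliece's identity \eqref{for1}, and by Fourier inversion one gets $\eta_i^{(N_2,r)}=-\frac{1}{N_2}\sum_{j=0}^{N_2-1}g(\chi_j)\zeta_{N_2}^{ij}$, where $\zeta_{N_2}=\omega(\alpha)^{(r-1)/N_2}$. The central computation is the $\widetilde{\wp}$-adic valuation of $g(\chi_j)$. Because $N_2\mid(p-1)$, the integer $h_j=\frac{r-1}{N_2}j=\frac{j(p-1)}{N_2}(1+p+\cdots+p^{k-1})$ has every base-$p$ digit equal to $\frac{j(p-1)}{N_2}\le p-1$, so Stickelberger's theorem \eqref{for2} yields $v_{\widetilde{\wp}}(g(\chi_j))=s(h_j)=\frac{kj(p-1)}{N_2}$, which is strictly increasing in $j$.

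Finally I would fix $i\not\equiv i'\pmod{N_2}$ and consider $\eta_i^{(N_2,r)}-\eta_{i'}^{(N_2,r)}=-\frac{1}{N_2}\sum_{j=1}^{N_2-1}g(\chi_j)\big(\zeta_{N_2}^{ij}-\zeta_{N_2}^{i'j}\big)$. Since $p\equiv 1\pmod{N_2}$ forces $\gcd(p,N_2)=1$ and keeps the $N_2$-th roots of unity distinct modulo $\widetilde{\wp}$, each nonzero coefficient $\zeta_{N_2}^{ij}-\zeta_{N_2}^{i'j}$ is a $\widetilde{\wp}$-unit; in particular the $j=1$ term is nonzero with valuation exactly $v_1=\frac{k(p-1)}{N_2}$, while every remaining term has valuation at least $v_j\ge 2v_1>v_1$. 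By the non-Archimedean property of $v_{\widetilde{\wp}}$ the whole difference then has valuation exactly $v_1<\infty$ and is therefore nonzero, giving $\eta_i^{(N_2,r)}\ne\eta_{i'}^{(N_2,r)}$. This establishes pairwise distinctness and closes the ``if'' direction. I expect the main obstacle to be precisely this last step: showing that the $j=1$ Gauss sum strictly dominates the valuation, which requires both the exact Stickelberger count $s(h_j)=\frac{kj(p-1)}{N_2}$ and the verification that the cyclotomic coefficients neither vanish nor raise the valuation modulo $\widetilde{\wp}$.
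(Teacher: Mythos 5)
Your proposal is correct and takes essentially the same route as the paper: the identical Frobenius argument ($\eta_{pi}^{(N_2,r)}=\eta_i^{(N_2,r)}$, since $x\mapsto x^p$ permutes the classes and preserves the trace) for the ``only if'' direction, and for the ``if'' direction the same Stickelberger-based key fact that when $p\equiv 1\pmod{N_2}$ the digits of $h_j=\frac{r-1}{N_2}j$ are all $\frac{j(p-1)}{N_2}$, so $v_{\widetilde{\wp}}(g(\omega^{-h_j}))=\frac{kj(p-1)}{N_2}$ and the $j=1$ term strictly dominates. The only cosmetic difference is that you compare differences of Gauss periods using the valuation statement \eqref{for2} plus the ultrametric inequality and the fact that distinct $N_2$-th roots of unity stay distinct modulo $\widetilde{\wp}$, whereas the paper compares the weights through McEliece's identity \eqref{for1} and Lang's congruence \eqref{for3} to exhibit the explicit leading term; the substance of the two arguments is the same.
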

\begin{proof}
Suppose $\mathcal{C}(r,N)$ has $N_{2}$ distinct nonzero
weights.  When $i\not\equiv j\pmod {N_2}$, then
$$\eta_{i}^{(N_{2},r)}\neq \eta_{j}^{(N_{2},r)}.$$ Note that $(C_{0}^{(N_{2},r)})^p=C_{0}^{
(N_{2},r)}$. Then $$C_{i}^{(N_{2},r)}=C_{pi}^{(N_{2},r)}.$$
Hence, $i\equiv pi\pmod {N_2}$, that is,
$$(p-1)i\equiv 0\pmod {N_2}.$$
For any $i$, $(p-1)i\equiv 0\pmod {N_2}$. Then we have $p\equiv 1\pmod {N_2}$.

We now prove that if $p\equiv 1\pmod{N_{2}}$,
$\mathcal{C}(r,N)$ has the maximal number $N_{2}$ of distinct nonzero weights. Then we require to  prove $c_{2}(\zeta_{r-1}^{0}\mod \;\wp)$,$c_{2}(\zeta_{r-1}^{1}\mod \;\wp)$,$
\cdots$,$c_{2}(\zeta_{r-1}^{N_{2}-1}\mod \;\wp)$
are all distinct.

From the definition of $\omega$, the character
$\chi$
satisfying $\chi^{N_{2}}=1$ can be represented by
$$
\chi=(\frac{1}{\omega})^{ni}, i=0,1,\cdots,
N_{2}-1,
$$
where $n=\frac{p-1}{N_{2}}p^0+
\frac{p-1}{N_{2}}p+\cdots+\frac{p-1}{N_{2}}p^{k-1}$. Then $s(ni)=\frac{k(p-1)i}{N_{2}}$.
Hence, from (\ref{for3})
$$
g(\omega^{-ni})\equiv \frac{(\zeta_{p}-1)^{\frac{k(p-1)}{N_{2}}i}}
{((\frac{p-1}{N_{2}}i)!)^{k}} \mod \;
\widetilde{\wp}^{\frac{k(p-1)}{N_{2}}i+1}.
$$
Then if $i=1$, $g(\omega^{-ni}) \mod
\widetilde{\wp}^{\frac{k(p-1)}{N_{2}}+1}
\equiv \frac{(\zeta_{p}-1)^{\frac{k(p-1)}{N_{2}}}}
{((\frac{p-1}{N_{2}})!)^{k}}$; if
$i=2,\cdots, N_{2}-1$, $g(\omega^{-ni}) \mod
\widetilde{\wp}^{\frac{k(p-1)}{N_{2}}+1}
\equiv 0$.

As a result,
\begin{eqnarray}\label{for4}
&&\sum_{\chi^{N_{2}}=1,\chi\neq 1}g(\chi)\widetilde{\chi}(\zeta_{r-1}^{j}\mod\;\wp)
=\sum_{i=1}^{N_{2}-1}g(\omega^{-ni})
\zeta_{r-1}^{nij}\nonumber\\
&&\equiv \frac{\zeta_{r-1}^{nj}}{((\frac{p-1}{N_{2}})!)^k}
(\zeta_{p}-1)^{\frac{k(p-1)}{N_{2}}}\pmod
{\widetilde{\wp}^{\frac{k(p-1)}{N_{2}}+1}}.
\end{eqnarray}
Since $\zeta_{r-1}^{nj}\pmod{\widetilde{\wp}}\;(j=1,2,\cdots,
N_{2}-1)$ are all different,
$\sum\limits_{\chi^{N_{2}}=1,\chi\neq 1}g(\chi)\widetilde{\chi}(\zeta_{r-1}^j\mod \; \wp)\;(j=1,2,\cdots,N_{2}-1)$  are all different.
Then the number of distinct nonzero $wt(c_{2}(\beta))$ is $N_{2}$.
This theorem is proved.
\end{proof}
From Theorem \ref{thm3.4}, we can get a divisible property of a nonzero codeword in $\mathcal{C}(r,N)$.
\begin{theorem}
Let $p\equiv 1\pmod{N_{2}}$ and
$\beta\in GF(r)^*$, then $q^{\frac{m}{N_{2}}-1}\|
wt(c(\beta))$.
\end{theorem}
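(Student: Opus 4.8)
The plan is to read off the exact $\widetilde{\wp}$-adic valuation of $wt(c(\beta))$ from its Gauss-sum expression and then convert this into a power of $q$ by exploiting the ramification of $p$ in $Q(\zeta_{r-1},\zeta_{p})$. First I would combine Theorem~\ref{thm3.1} with the McEliece identity (\ref{for1}): applying Theorem~\ref{thm3.1} to the code $\mathcal{C}(r,N_{2})$ (whose own parameter $(N_{2},\frac{r-1}{q-1})$ equals $N_{2}$, since $N_{2}\mid\frac{r-1}{q-1}$) and comparing with (\ref{for1}) yields $\eta_{i(\beta)}^{*(N_{2},r)}=-\sum_{\chi^{N_{2}}=1,\chi\neq1}g(\chi)\overline{\chi}(\beta)$, so that
$$
wt(c(\beta))=\frac{q-1}{qN}\Big(r+\!\!\sum_{\chi^{N_{2}}=1,\chi\neq1}\!\!g(\chi)\overline{\chi}(\beta)\Big).
$$
Write $S$ for the character sum on the right.

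Next I would estimate $v_{\widetilde{\wp}}(S)$. Exactly as in the proof of Theorem~\ref{thm3.4}, the nontrivial characters with $\chi^{N_{2}}=1$ are $\omega^{-ni}$ for $i=1,\dots,N_{2}-1$, and the Stickelberger relation (\ref{for2}) together with the digit-sum bookkeeping there gives $v_{\widetilde{\wp}}(g(\omega^{-ni}))=\frac{k(p-1)i}{N_{2}}$. Since the values $\overline{\chi}(\beta)$ are roots of unity of order dividing $r-1$, hence units at $\widetilde{\wp}$, the term $i=1$ has strictly smaller valuation than all the others; being the unique minimizer it cannot be cancelled, so $v_{\widetilde{\wp}}(S)=\frac{k(p-1)}{N_{2}}$. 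As $v_{\widetilde{\wp}}(r)=k(p-1)\ge\frac{k(p-1)}{N_{2}}$, the summand $S$ dominates and
$$
v_{\widetilde{\wp}}(r+S)=\frac{k(p-1)}{N_{2}},
$$
which also holds trivially when $N_{2}=1$, where $S=0$.

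Finally I would assemble the valuation of the whole expression. Because $q-1$ and $N$ divide $r-1$ they are prime to $p$, hence units at $\widetilde{\wp}$, while $v_{\widetilde{\wp}}(q)=s(p-1)$ and the ramification index of $\widetilde{\wp}$ over $p$ is $p-1$. Therefore
$$
v_{\widetilde{\wp}}(wt(c(\beta)))=\frac{k(p-1)}{N_{2}}-s(p-1)=(p-1)\Big(\frac{k}{N_{2}}-s\Big).
$$
Since $wt(c(\beta))$ is a rational integer, dividing by the ramification index gives $v_{p}(wt(c(\beta)))=\frac{k}{N_{2}}-s=s\big(\frac{m}{N_{2}}-1\big)$. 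Here I would note, as in Theorem~\ref{thm3.4}, that $p\equiv1\pmod{N_{2}}$ forces $N_{2}\mid m$ (because $\frac{r-1}{q-1}\equiv m\pmod{N_{2}}$ while $N_{2}\mid\frac{r-1}{q-1}$), so $\frac{m}{N_{2}}$ is a positive integer and the exponent above is a genuine multiple of $s$. Hence $v_{q}(wt(c(\beta)))=\frac{m}{N_{2}}-1$ exactly, i.e. $q^{\frac{m}{N_{2}}-1}\,\|\,wt(c(\beta))$.

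I expect the delicate point to be the middle step: justifying that $v_{\widetilde{\wp}}(S)$ equals the valuation of the single $i=1$ term. This is clean only because that term is the \emph{unique} minimizer of the Stickelberger valuation among $i=1,\dots,N_{2}-1$, so there is no possibility of the leading contributions cancelling. The accompanying computation that transfers the $\widetilde{\wp}$-valuation to a $q$-valuation, via the ramification index $p-1$ and the integrality of the weight, is routine but must be tracked carefully so that the inequality $v_{\widetilde{\wp}}(S)<v_{\widetilde{\wp}}(r)$ and the divisibility $N_{2}\mid m$ are both in force.
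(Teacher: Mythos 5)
Your proposal is correct and takes essentially the same route as the paper: the paper's own proof just invokes formula (4) from the proof of Theorem 3.4 (which fixes the exact $\widetilde{\wp}$-valuation $\frac{k(p-1)}{N_{2}}$ of the character sum $S$) together with the ramification $(p)=(\zeta_{p}-1)^{p-1}$, leaving the rest as ``obvious.'' Your argument is a fleshed-out version of precisely this — using Stickelberger's valuation (2) with the unique-minimizer observation in place of the explicit congruence (4), and supplying the bookkeeping (units $q-1$, $N$; $v_{\widetilde{\wp}}(q)=s(p-1)$; $N_{2}\mid m$) that the paper omits.
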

\begin{proof}
Note that $p$ in $Z[\zeta_{p}]$ has the factorization
$$
(p)=(\zeta_{p}-1)^{p-1}.
$$
Then this theorem can be obtained obviously from
(\ref{for4})
in the proof of Theorem \ref{thm3.4}.
\end{proof}

\section{The Weight Distributions in  Cases $N_2=1,2,3,4$}
When $N_{2}=1$ or $N_{2}=2$,
Ding \cite{Ding2} determined the weight
distributions of $\mathcal{C}(r,N)$. His results are stated in
Theorem \ref{thm4.1} and Theorem \ref{thm4.5}.
\begin{theorem}\label{thm4.1}
Let $N_{2}=1$, then
$\mathcal{C}(r,N)$ is a
$[(r-1)/N,m, q^{m-1}(q-1)/N]$ code with the only nonzero weight $q^{m-1}(q-1)/N$.
\end{theorem}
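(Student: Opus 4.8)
The plan is to reduce everything to a single Gauss-period evaluation and then invoke the weight formula of Theorem \ref{thm3.1}. Since $N_2=1$, Theorem \ref{thm3.3} already guarantees that $\mathcal{C}(r,N)$ carries exactly one nonzero weight, so the only real work is to pin down the value of that weight; the length $n=(r-1)/N$ and the dimension $m$ are immediate from the definition of the code and the standing assumption $m_0=m$.

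First I would identify the unique cyclotomic class of order $N_2=1$. By definition $C_0^{(1,r)}=\alpha^0\langle\alpha\rangle=GF(r)^*$, so the associated Gauss period is $\eta_0^{(1,r)}=\sum_{x\in GF(r)^*}\mu(x)=\sum_{x\in GF(r)^*}\zeta_p^{Tr_{r/p}(x)}$. The key computation is the orthogonality relation for the nontrivial additive character $x\mapsto\zeta_p^{Tr_{r/p}(x)}$ of $GF(r)$, which gives $\sum_{x\in GF(r)}\mu(x)=0$. Subtracting the contribution $\mu(0)=1$ of $x=0$ yields $\eta_0^{(1,r)}=-1$, and hence the reduced period $\eta_0^{*(1,r)}=1+N_2\eta_0^{(1,r)}=1+(-1)=0$. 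Equivalently, this says $\psi_{(1,r)}^{*}(X)=X$, so in the language of Theorem \ref{thm3.2} we have $t=1$, $e_1=1$, and $\xi_1=0$.

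Next I would substitute into the weight formula of Theorem \ref{thm3.1}. For every $\beta\in GF(r)^*$ we have $i(\beta)\equiv 0\pmod{N_2}$ since $N_2=1$, so
$$wt(c(\beta))=\frac{(q-1)(r-\eta_{i(\beta)}^{*(1,r)})}{qN}=\frac{(q-1)(r-0)}{qN}=\frac{(q-1)q^m}{qN}=\frac{q^{m-1}(q-1)}{N},$$
using $r=q^m$. This value is independent of $\beta$, confirming the single nonzero weight and identifying its value. Combining this with the length $(r-1)/N$ and dimension $m$, and noting that a code with only one nonzero weight has minimum distance equal to that weight, gives exactly the claimed $[(r-1)/N,\,m,\,q^{m-1}(q-1)/N]$ parameters.

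The main obstacle, such as it is, is simply the character-sum evaluation $\eta_0^{(1,r)}=-1$; once that orthogonality identity is in hand, everything else is a direct substitution into the already-established Theorem \ref{thm3.1}. Notably, no factorization of period polynomials is required here, since for $N_2=1$ the polynomial $\psi_{(1,r)}^{*}$ is linear.
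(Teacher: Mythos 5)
Your proof is correct, but it cannot be compared against an internal argument because the paper gives none: Theorem \ref{thm4.1} is presented as a quotation of Ding's result \cite{Ding2} ("His results are stated in Theorem \ref{thm4.1} and Theorem \ref{thm4.5}"), and every later invocation of the $N_2=1$ case (Theorems \ref{thm5.1}, \ref{thm5.4}, \ref{thm5.10}, \ref{thm5.14}) likewise just cites \cite{Ding2}. What you have done is supply the proof the paper omits, and you do it entirely inside the paper's own framework: since $N_2=1$ forces $C_0^{(1,r)}=GF(r)^*$, additive-character orthogonality gives $\eta_0^{(1,r)}=-1$, hence $\eta_0^{*(1,r)}=0$ and $\psi^{*}_{(1,r)}(X)=X$, after which Theorem \ref{thm3.1} (equivalently Theorem \ref{thm3.2} with $t=1$, $e_1=1$, $\xi_1=0$) yields the constant weight $(q-1)r/(qN)=q^{m-1}(q-1)/N$ for every $\beta\in GF(r)^*$, with frequency $r-1$. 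Your identification of the minimum distance with the unique nonzero weight, and of the dimension with $m$ via the standing assumption $m_0=m$, are both sound. This self-contained route buys something the citation does not: it exhibits the $N_2=1$ case as a trivial instance of the reduction to the factorization of $\psi^{*}_{(N_2,r)}$, which is consistent with (and makes precise) the paper's own remark in the proof of Theorem \ref{thm3.3} that for $N_2=1$ the code "obviously" has one nonzero weight; the citation to \cite{Ding2}, by contrast, keeps the logical dependence external. The only point worth tightening is that injectivity of $\beta\mapsto c(\beta)$ (hence the claim that the code has exactly $q^m$ codewords) also follows directly from your weight computation by linearity, so even the appeal to $m_0=m$ for the dimension could be replaced by your own argument.
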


\begin{theorem}\label{thm4.5}
Let $N_{2}=2$, then
$\mathcal{C}(r,N)$ is a
$[(r-1)/N,m, (q-1)(r-r^{1/2})/Nq]$ two weight code with the weight distribution
$$
1+\frac{r-1}{2}x^{(q-1)(r-r^{1/2})/Nq}+
\frac{r-1}{2}x^{(q-1)(r+r^{1/2})/Nq}.
$$
\end{theorem}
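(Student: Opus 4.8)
The plan is to deduce this statement directly from the general weight-distribution formula of Theorem~\ref{thm3.2}, specialized to $N_2=2$, using the explicit factorization of the reduced period polynomial $\psi_{(2,r)}^{*}$ supplied by Lemma~\ref{lem1}. The first step is to check that Lemma~\ref{lem1} is applicable. Since $N_2=(N,\frac{r-1}{q-1})=2$, we have $2\mid\frac{r-1}{q-1}$; and because $\frac{r-1}{p-1}=\frac{r-1}{q-1}\cdot\frac{q-1}{p-1}$, the quantity $\frac{r-1}{q-1}$ divides $\frac{r-1}{p-1}$, so the hypothesis $2\mid\frac{r-1}{p-1}$ of Lemma~\ref{lem1} holds automatically.

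Next I would invoke Lemma~\ref{lem1}, which gives the factorization $\psi_{(2,r)}^{*}(X)=(X+\sqrt{r})(X-\sqrt{r})$. Thus the reduced period polynomial for $N_2=2$ splits into two distinct linear factors with roots $\xi_1=-r^{1/2}$ and $\xi_2=r^{1/2}$, each of multiplicity $e_1=e_2=1$, and indeed $e_1+e_2=2=N_2$ as required by the hypotheses of Theorem~\ref{thm3.2}. Substituting $t=2$, $e_1=e_2=1$, $\xi_1=-r^{1/2}$, and $\xi_2=r^{1/2}$ into the weight-distribution expression of Theorem~\ref{thm3.2} yields
$$
1+\frac{r-1}{2}x^{(q-1)(r+r^{1/2})/Nq}+\frac{r-1}{2}x^{(q-1)(r-r^{1/2})/Nq},
$$
which is exactly the asserted distribution, up to the order in which the two nonzero-weight terms are listed. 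Since $r^{1/2}\neq 0$, the two exponents are distinct, so $\mathcal{C}(r,N)$ is genuinely a two-weight code.

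Finally, to read off the parameters $[(r-1)/N,\,m,\,(q-1)(r-r^{1/2})/Nq]$, I would note that the length $(r-1)/N=n$ and the dimension $m$ are already fixed by the conclusion of Theorem~\ref{thm3.2}, while the minimum nonzero weight is the smaller of the two exponents; since $r-r^{1/2}<r+r^{1/2}$, this minimum equals $(q-1)(r-r^{1/2})/Nq$. There is essentially no hard step in this argument: the entire claim is a direct specialization of Theorem~\ref{thm3.2}, and the only point requiring a moment of care is the divisibility verification that licenses the use of Lemma~\ref{lem1}.
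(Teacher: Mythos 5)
Your proof is correct, but it takes a different route from the paper: the paper offers no derivation of this theorem at all, simply stating it as a known result of Ding (cited as \cite{Ding2}, alongside Theorem~\ref{thm4.1} for the case $N_2=1$). What you do instead is rederive it inside the paper's own framework, specializing Theorem~\ref{thm3.2} to $N_2=2$ and feeding in the factorization $\psi_{(2,r)}^{*}(X)=(X+\sqrt{r})(X-\sqrt{r})$ from Lemma~\ref{lem1} --- exactly the template the paper itself uses for the cases $N_2=3,4$ (Theorems~\ref{thm4.6}--\ref{thm4.13}). Your version has two advantages: it makes the $N_2=2$ case self-contained rather than imported, and it supplies the applicability check that the paper habitually skips in the analogous proofs, namely that $N_2=2$ forces $2\mid\frac{r-1}{q-1}$, and hence $2\mid\frac{r-1}{p-1}$ because $\frac{q-1}{p-1}$ is an integer, so Lemma~\ref{lem1} really does apply. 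The citation approach, for its part, correctly credits the prior art. The remaining steps of your argument (reading off $e_1=e_2=1$, $\xi_{1,2}=\mp\sqrt{r}$, noting the two exponents are distinct since $\sqrt{r}\neq 0$, and identifying the minimum weight as $(q-1)(r-r^{1/2})/Nq$) are all sound, so there is no gap.
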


\begin{theorem}\label{thm4.6}
Let $N_{2}=3$ and $p\equiv 1 \pmod 3$, then
$k\equiv 0\pmod 3$ and
$\mathcal{C}(r,N)$ is a
$[(r-1)/N,m]$ code with the weight distribution
\begin{align*}
1+\frac{r-1}{3}x^{(q-1)(r-cr^{1/3})/Nq}+
\frac{r-1}{3}x^{(q-1)(r+\frac{1}{2}(c+9d)
r^{1/3})/Nq}\\
+
\frac{r-1}{3}x^{(q-1)(r+\frac{1}{2}(c-9d)
r^{1/3})/Nq},
\end{align*}
where $c$ and $d$ satisfy
$4p^{k/3}=c^2+27d^2$, $c\equiv 1\pmod 3$ and
$gcd(c,p)=1$.
\end{theorem}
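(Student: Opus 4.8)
The plan is to deduce this result directly from the general machinery of Theorem \ref{thm3.2} together with the explicit factorization of the cubic reduced period polynomial recorded in Lemma \ref{lem2}(a). Since $N_2=3$ here, Theorem \ref{thm3.2} tells me that the entire weight distribution of $\mathcal{C}(r,N)$ is governed by the factorization of $\psi_{(N_2,r)}^{*}=\psi_{(3,r)}^{*}$ over the rationals, and Lemma \ref{lem2} supplies exactly this factorization once I check its standing hypothesis. Hence almost all of the work reduces to verifying that the hypotheses of the cited lemma apply and then transcribing.

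First I would verify that the hypothesis $3\mid\frac{r-1}{p-1}$ of Lemma \ref{lem2} is met. By definition $N_2=(N,\frac{r-1}{q-1})$, so $N_2=3$ forces $3\mid\frac{r-1}{q-1}$. Because $p-1\mid q-1$, the quantity $\frac{q-1}{p-1}$ is a positive integer and $\frac{r-1}{p-1}=\frac{r-1}{q-1}\cdot\frac{q-1}{p-1}$; hence $\frac{r-1}{q-1}\mid\frac{r-1}{p-1}$ and in particular $3\mid\frac{r-1}{p-1}$. Combined with the assumption $p\equiv 1\pmod 3$, I am in case (a) of Lemma \ref{lem2}, which immediately gives $3\mid k$, i.e. $k\equiv 0\pmod 3$, as claimed.

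Next I would read off the three roots of $\psi_{(3,r)}^{*}$ from Lemma \ref{lem2}(a): writing the factorization as $(X-\xi_1)(X-\xi_2)(X-\xi_3)$ with $\xi_1=cr^{1/3}$, $\xi_2=-\tfrac12(c+9d)r^{1/3}$ and $\xi_3=-\tfrac12(c-9d)r^{1/3}$, where $c,d$ are determined by $4p^{k/3}=c^2+27d^2$, $c\equiv 1\pmod 3$ and $\gcd(c,p)=1$. Here all multiplicities are $e_1=e_2=e_3=1$ and $t=3$, so Theorem \ref{thm3.2} yields the weight distribution $1+\sum_{j=1}^{3}\tfrac{r-1}{3}x^{(q-1)(r-\xi_j)/Nq}$. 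Substituting $r-\xi_1=r-cr^{1/3}$, $r-\xi_2=r+\tfrac12(c+9d)r^{1/3}$ and $r-\xi_3=r+\tfrac12(c-9d)r^{1/3}$ reproduces exactly the three nonzero exponents in the displayed distribution, while the length $(r-1)/N$ and the dimension $m$ come straight from Theorem \ref{thm3.2}.

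The only genuine content beyond bookkeeping is the divisibility check $3\mid\frac{r-1}{p-1}$, which legitimizes invoking Lemma \ref{lem2}; the genuinely hard analytic work—evaluating the cubic Gaussian periods through Gauss sums and Stickelberger's theorem—is entirely encapsulated in Lemma \ref{lem2} and may be taken as given. I therefore expect the main (and essentially only) obstacle to be confirming that the hypotheses of the cited factorization lemma are satisfied; once that is done the theorem is a direct transcription of Lemma \ref{lem2}(a) through Theorem \ref{thm3.2}.
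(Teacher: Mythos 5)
Your proposal is correct and takes essentially the same route as the paper, whose entire proof reads ``From Theorem~\ref{thm3.2} and (a) in Lemma~\ref{lem2}, this theorem follows immediately.'' Your explicit check that $N_{2}=3$ forces $3\mid\frac{r-1}{q-1}$ and hence $3\mid\frac{r-1}{p-1}$ (so that Lemma~\ref{lem2} indeed applies) is a hypothesis verification the paper leaves implicit, but it is the same argument.
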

\begin{proof}
From Theorem \ref{thm3.2} and (a) in Lemma
\ref{lem2}, this theorem follows immediately.
\end{proof}
\begin{theorem}\label{thm4.7}
Let $N_{2}=3$ and $p\equiv 2\pmod 3$, then
$k\equiv 0\pmod 2$ and
$\mathcal{C}(r,N)$ is a
$[(r-1)/N,m]$ two weight code with the weight distribution
\begin{align*}
1&+\frac{r-1}{3}x^{(q-1)(r+(-1)^{k/2}2r^{1/2})/Nq}\\
&+
\frac{2(r-1)}{3}x^{(q-1)(r-(-1)^{k/2}2r^{1/2})/Nq}.
\end{align*}
\end{theorem}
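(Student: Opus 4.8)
The plan is to obtain the whole weight distribution as an immediate corollary of Theorem \ref{thm3.2}, since that theorem already reduces the computation of the distribution of $\mathcal{C}(r,N)$ to the explicit factorization of the reduced period polynomial $\psi_{(N_2,r)}^{*}=\psi_{(3,r)}^{*}$. Before invoking the factorization, I would first check that Lemma \ref{lem2} is actually applicable here, i.e. that $3\mid\frac{r-1}{p-1}$. This is forced by the hypothesis $N_2=3$: by definition $N_2=(N,\frac{r-1}{q-1})$, so $3\mid\frac{r-1}{q-1}$, and since $p-1\mid q-1$ we have $\frac{r-1}{q-1}\mid\frac{r-1}{p-1}$, whence $3\mid\frac{r-1}{p-1}$. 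Together with $p\equiv 2\pmod 3$ this puts us exactly in the setting of Lemma \ref{lem2}(b).

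Applying Lemma \ref{lem2}(b) then delivers both parts of the statement at once: it records that $2\mid k$ (which is the first assertion of the theorem) and gives
$$\psi_{(3,r)}^{*}(X)=(X+(-1)^{k/2}2\sqrt{r})(X-(-1)^{k/2}\sqrt{r})^2.$$
From this I would read off the data fed into Theorem \ref{thm3.2}: a simple root $\xi_1=-(-1)^{k/2}2\sqrt{r}$ with multiplicity $e_1=1$ and a double root $\xi_2=(-1)^{k/2}\sqrt{r}$ with multiplicity $e_2=2$, so that $e_1+e_2=3=N_2$ as Theorem \ref{thm3.2} requires. Substituting into the distribution formula with $N_2=3$ produces the term of frequency $\frac{r-1}{3}$ carrying the weight $(q-1)(r-\xi_1)/Nq=(q-1)(r+(-1)^{k/2}2r^{1/2})/Nq$, and the term of frequency $\frac{2(r-1)}{3}$ carrying the weight $(q-1)(r-\xi_2)/Nq$, together with the zero codeword contributing the constant $1$. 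As the two roots are distinct, there are exactly two nonzero weights, so $\mathcal{C}(r,N)$ is a two-weight $[(r-1)/N,m]$ code, where the length and dimension are already fixed by Theorem \ref{thm3.2}.

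I do not expect any genuine obstacle: the heavy lifting, namely the explicit factorization of $\psi_{(3,r)}^{*}$, is an input cited from \cite{My}, and the entire proof collapses to a single invocation of Theorem \ref{thm3.2}, precisely as in Theorem \ref{thm4.6}. The only points that demand a little care are the divisibility check $3\mid\frac{r-1}{p-1}$ that licenses the use of Lemma \ref{lem2}, and the sign and coefficient bookkeeping when passing from the roots $\xi_i$ to the weights $(q-1)(r-\xi_i)/Nq$; here one should keep in mind that $\sqrt{r}=r^{1/2}=p^{k/2}$ and that it is the simple root, of frequency $\frac{r-1}{3}$, that carries the factor $2$.
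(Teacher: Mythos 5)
Your proposal is correct and is essentially the paper's own proof — a single invocation of Theorem \ref{thm3.2} with the factorization from Lemma \ref{lem2}(b) — with the extra, worthwhile care of checking that $N_2=3$ forces $3\mid\frac{r-1}{p-1}$, so that Lemma \ref{lem2} genuinely applies. One point deserves emphasis: your derivation exposes a typo in the theorem as printed, since the double root $\xi_2=(-1)^{k/2}\sqrt{r}$ gives the weight $(q-1)(r-(-1)^{k/2}r^{1/2})/Nq$ with frequency $\frac{2(r-1)}{3}$ (no factor $2$ in that exponent), exactly as your closing remark flags; this corrected value is what the paper's own example after Theorem \ref{thm5.7} (weights $2460$ and $2520$ over $GF(5^2)$, where the stated exponent would give $2540$) and the statement of Theorem \ref{thm5.17} confirm.
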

\begin{proof}
From Theorem \ref{thm3.2} and (b) in Lemma
\ref{lem2}, this theorem follows immediately.
\end{proof}
\begin{theorem}\label{thm4.12}
Let $N_{2}=4$ and $p\equiv 1 \pmod 4$, then
$k\equiv 0\pmod 4$ and
$\mathcal{C}(r,N)$ is a
$[(r-1)/N,m]$ code with the weight distribution
\begin{eqnarray*}
1&+&\frac{r-1}{4}x^{(q-1)(r+\sqrt{r}+2r^{1/4}u)/Nq}
\\
&+&
\frac{r-1}{4}x^{(q-1)(r+\sqrt{r}-2r^{1/4}u)/Nq}\\
&+&
\frac{r-1}{4}x^{(q-1)(r-\sqrt{r}+4r^{1/4}v)/Nq}
\\
&+&
\frac{r-1}{4}x^{(q-1)(r-\sqrt{r}-4r^{1/4}v)/Nq},
\end{eqnarray*}
where $u$ and $v$ satisfy
$p^{k/2}=u^2+4v^2$, $u\equiv 1\pmod 4$ and
$gcd(u,p)=1$.
\end{theorem}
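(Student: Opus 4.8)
The plan is to derive this theorem as a direct specialization of the general machinery built in Section III, exactly as was done for Theorems \ref{thm4.6} and \ref{thm4.7}. The governing principle is Theorem \ref{thm3.2}: the entire weight distribution of $\mathcal{C}(r,N)$ is determined by the factorization of the reduced period polynomial $\psi_{(N_2,r)}^{*}$. Since $N_2=4$ here, I would simply feed the known factorization of $\psi_{(4,r)}^{*}$ supplied by Lemma \ref{lem3} into Theorem \ref{thm3.2}.

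First I would check that the hypothesis of Lemma \ref{lem3}, namely $4\mid\frac{r-1}{p-1}$, is satisfied. From $N_2=(N,\frac{r-1}{q-1})=4$ we get $4\mid\frac{r-1}{q-1}$; and since $p-1\mid q-1$, the quantity $\frac{r-1}{q-1}$ divides $\frac{r-1}{p-1}$, whence $4\mid\frac{r-1}{p-1}$. With $p\equiv 1\pmod 4$ assumed, part (a) of Lemma \ref{lem3} then applies and delivers at once the asserted congruence $4\mid k$ together with the explicit factorization of $\psi_{(4,r)}^{*}(X)$ into four linear factors, where $u$ and $v$ satisfy $p^{k/2}=u^2+4v^2$, $u\equiv 1\pmod 4$, and $\gcd(u,p)=1$.

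Next I would read off the four roots $\xi_1=-\sqrt{r}-2r^{1/4}u$, $\xi_2=-\sqrt{r}+2r^{1/4}u$, $\xi_3=\sqrt{r}-4r^{1/4}v$, and $\xi_4=\sqrt{r}+4r^{1/4}v$, and note that they are pairwise distinct. This distinctness is guaranteed by Theorem \ref{thm3.4}: because $p\equiv 1\pmod{N_2}$, the code attains the maximal number $N_2=4$ of distinct nonzero weights, so the four linear factors cannot coincide and each multiplicity is $e_i=1$. Substituting into Theorem \ref{thm3.2} with $e_1=\cdots=e_4=1$ yields four nonzero weights $(q-1)(r-\xi_i)/Nq$, each occurring with frequency $\frac{r-1}{4}$. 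Computing $r-\xi_i$ gives $r+\sqrt{r}+2r^{1/4}u$, $r+\sqrt{r}-2r^{1/4}u$, $r-\sqrt{r}+4r^{1/4}v$, and $r-\sqrt{r}-4r^{1/4}v$ respectively, which is precisely the claimed weight enumerator.

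The argument is essentially immediate and presents no serious obstacle. The only points demanding care are the preliminary verification that Lemma \ref{lem3} is genuinely applicable via $4\mid\frac{r-1}{p-1}$, and the routine bookkeeping of matching the signs of the roots $\xi_i$ coming from Lemma \ref{lem3} to the exponents appearing in the weight distribution.
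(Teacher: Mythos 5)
Your proposal is correct and takes essentially the same approach as the paper: the paper's own proof of Theorem~\ref{thm4.12} is exactly the one-line application of Theorem~\ref{thm3.2} to the factorization of $\psi_{(4,r)}^{*}(X)$ given in part (a) of Lemma~\ref{lem3}. Your extra verifications --- that $N_{2}=4$ forces $4\mid\frac{r-1}{p-1}$ so the lemma applies, and that the four roots are distinct via Theorem~\ref{thm3.4} --- simply make explicit details the paper leaves implicit.
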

\begin{proof}
From Theorem \ref{thm3.2} and (a) in Lemma
\ref{lem3}, this theorem follows immediately.
\end{proof}
\begin{theorem}\label{thm4.13}
Let $N_{2}=4$ and $p\equiv 3\pmod 4$, then
$k\equiv 0\pmod 2$ and
$\mathcal{C}(r,N)$ is a
$[(r-1)/N,m]$ two weight code with the weight distribution
\begin{align*}
1&+\frac{r-1}{4}x^{(q-1)(r+(-1)^{k/2}3r^{1/2})/Nq}\\
&+
\frac{3(r-1)}{4}x^{(q-1)(r-(-1)^{k/2}r^{1/2})/Nq}.
\end{align*}
\end{theorem}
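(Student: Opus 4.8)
The plan is to invoke the general weight-distribution formula of Theorem \ref{thm3.2} together with the explicit factorization of the quartic reduced period polynomial supplied by part (b) of Lemma \ref{lem3}. Since by Theorem \ref{thm3.2} the weight distribution of $\mathcal{C}(r,N)$ is governed entirely by the factorization of $\psi_{(N_2,r)}^{*}$, and here $N_2=4$, everything reduces to reading off the roots of $\psi_{(4,r)}^{*}$ together with their multiplicities.

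First I would verify that the divisibility hypothesis of Lemma \ref{lem3} is met. The assumption $N_2=4$ means $4=(N,\frac{r-1}{q-1})$, so $4\mid\frac{r-1}{q-1}$; since $\frac{r-1}{q-1}$ divides $\frac{r-1}{p-1}$ (because $\frac{q-1}{p-1}=1+p+\cdots+p^{s-1}$ is a positive integer), we get $4\mid\frac{r-1}{p-1}$, which is exactly what Lemma \ref{lem3} requires. With $p\equiv 3\pmod 4$, part (b) of that lemma then yields $2\mid k$ and the factorization $\psi_{(4,r)}^{*}(X)=(X+(-1)^{k/2}3\sqrt{r})(X-(-1)^{k/2}\sqrt{r})^3$.

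Next I would match this against the notation of Theorem \ref{thm3.2}. The distinct roots are $\xi_1=-(-1)^{k/2}3\sqrt{r}$ with multiplicity $e_1=1$ and $\xi_2=(-1)^{k/2}\sqrt{r}$ with multiplicity $e_2=3$, and indeed $e_1+e_2=4=N_2$. Substituting into the formula of Theorem \ref{thm3.2}, the two nonzero weights occur with exponents $(q-1)(r-\xi_1)/Nq=(q-1)(r+(-1)^{k/2}3r^{1/2})/Nq$ and $(q-1)(r-\xi_2)/Nq=(q-1)(r-(-1)^{k/2}r^{1/2})/Nq$, carrying frequencies $\frac{e_1(r-1)}{N_2}=\frac{r-1}{4}$ and $\frac{e_2(r-1)}{N_2}=\frac{3(r-1)}{4}$ respectively. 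This is precisely the claimed two-weight distribution, and the length and dimension $[(r-1)/N,m]$ are those already recorded in Theorem \ref{thm3.2}.

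There is no genuine obstacle here: all the analytic work—the expression of the Hamming weight in terms of reduced periods (Theorems \ref{thm3.1} and \ref{thm3.2}) and the evaluation of the quartic period polynomial (Lemma \ref{lem3})—is already in hand, so the argument is a direct specialization exactly as in the proofs of Theorems \ref{thm4.7} and \ref{thm4.13}. The only points demanding care are confirming that the divisibility hypothesis of Lemma \ref{lem3} follows from $N_2=4$, and keeping the sign $(-1)^{k/2}$ and the pairing of multiplicities straight when translating between each root $\xi_i$ and the exponent value $r-\xi_i$.
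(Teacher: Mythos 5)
Your proposal is correct and follows exactly the paper's own route: the paper's proof is the one-line observation that the theorem follows from Theorem \ref{thm3.2} combined with part (b) of Lemma \ref{lem3}, which is precisely the specialization you carry out. Your additional check that $N_2=4$ forces $4\mid\frac{r-1}{p-1}$ (via $\frac{r-1}{q-1}\mid\frac{r-1}{p-1}$) is a detail the paper leaves implicit, and it is verified correctly.
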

\begin{proof}
From Theorem \ref{thm3.2} and (b) in Lemma
\ref{lem3}, this theorem follows immediately.
\end{proof}

\section{The Weight Distributions in  Cases N=5,6,8,12}
\begin{theorem}\label{thm5.1}
Let $N=5$ and $N_{2}=1$, then
$\mathcal{C}(r,5)$ is a
$[(r-1)/5,m, q^{m-1}(q-1)/5]$ code with the only nonzero weight $q^{m-1}(q-1)/5$.
\end{theorem}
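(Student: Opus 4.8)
The plan is to recognize Theorem \ref{thm5.1} as the specialization $N=5$ of the already-established $N_2=1$ case. Since the hypothesis fixes $N_2=1$, I would simply invoke Theorem \ref{thm4.1} with $N=5$: that result asserts that whenever $N_2=1$ the code $\mathcal{C}(r,N)$ is a $[(r-1)/N,m,q^{m-1}(q-1)/N]$ code whose single nonzero weight is $q^{m-1}(q-1)/N$. Substituting $N=5$ yields exactly the claimed parameters $[(r-1)/5,m,q^{m-1}(q-1)/5]$ and the unique nonzero weight $q^{m-1}(q-1)/5$, so nothing beyond a direct citation is strictly required.

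For a self-contained derivation I would instead feed the factorization of $\psi^*_{(N_2,r)}=\psi^*_{(1,r)}$ into Theorem \ref{thm3.2}. When $N_2=1$ there is a single cyclotomic class $C_0^{(1,r)}=GF(r)^*$, so the only reduced period is governed by $\eta_0^{(1,r)}=\mu(GF(r)^*)=\sum_{x\in GF(r)^*}\mu(x)$. First I would use the standard vanishing of the additive character sum $\sum_{x\in GF(r)}\mu(x)=0$ to conclude $\eta_0^{(1,r)}=-\mu(0)=-1$, hence $\eta_0^{*(1,r)}=1+N_2\eta_0^{(1,r)}=0$. Thus $\psi^*_{(1,r)}(X)=X$, i.e. $t=1$, $\xi_1=0$ and $e_1=1$ in the notation of Theorem \ref{thm3.2}.

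Plugging these data into Theorem \ref{thm3.2} (equivalently into the weight formula of Theorem \ref{thm3.1}) then gives the single nonzero weight $(q-1)(r-\xi_1)/(Nq)=(q-1)r/(5q)$, which equals $q^{m-1}(q-1)/5$ after writing $r=q^m$; its frequency is $e_1(r-1)/N_2=r-1$. This reproduces the distribution $1+(r-1)x^{q^{m-1}(q-1)/5}$ and simultaneously confirms the minimum distance and the fact that every nonzero codeword shares the same weight. I expect no genuine obstacle here: the statement is a routine instantiation, and the only point demanding any care is the elementary evaluation $\eta_0^{(1,r)}=-1$, which is precisely the $N_2=1$ endpoint already handled in Theorem \ref{thm4.1}.
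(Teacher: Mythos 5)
Your proposal is correct and its primary route---invoking the $N_{2}=1$ result (Theorem \ref{thm4.1}, i.e.\ Ding's case $N_{2}=1$) and specializing to $N=5$---is exactly what the paper's proof does. Your supplementary self-contained derivation (computing $\eta_{0}^{(1,r)}=-1$, hence $\psi^{*}_{(1,r)}(X)=X$, and feeding $\xi_{1}=0$, $e_{1}=1$ into Theorem \ref{thm3.2}) is also sound and consistent with the paper's general machinery, but it is not needed beyond the citation.
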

\begin{proof}
From the weight distribution when
$N_{2}=1$ in Ding \cite{Ding2},
This theorem follows immediately.
\end{proof}
\begin{example}
Let $~q=11$ and let $~m=2$. Then the set
$~\mathcal{C}(r,5)$
 is a $[24
 ,2,22]$ code over $~GF(11)$ with the weight distribution
\begin{align*}
 1+120x^{22}
\end{align*}
\end{example}

\begin{theorem}\label{thm5.2}
Let $N=5$, $N_{2}=5$ and $p\equiv 1\pmod 5$, then
$\mathcal{C}$ is a $[(r-1)/5,m]$ code with weight list
\begin{align*}
&wl=\\
&[0,\frac{q-1}{qN}(r+\frac{\sqrt[5]{r}}{16}(x^3-25L)
), \frac{q-1}{qN}(r-\frac{\sqrt[5]{r}}{64}(x^3-25M)
),\\
&\frac{q-1}{qN}(r-\frac{\sqrt[5]{r}}{64}\sigma(x^3-25M)
),
\frac{q-1}{qN}(r-\frac{\sqrt[5]{r}}{64}\sigma^2
(x^3-25M)
),\\
&\frac{q-1}{qN}(r-\frac{\sqrt[5]{r}}{64}\sigma^3
(x^3-25M)
)]
\end{align*}
and the weight distribution frequency
$$
Freq(0)=1, Freq(wl[i])=\frac{r-1}{5}~(i=2,\cdots,6).
$$
\end{theorem}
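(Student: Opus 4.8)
The plan is to obtain this weight distribution as an immediate consequence of Theorem \ref{thm3.2} combined with the explicit factorization of the reduced period polynomial $\psi_{(5,r)}^{*}(X)$ provided by Lemma \ref{lem4}(a). First I would check that the hypothesis of Lemma \ref{lem4} holds. Since $N=5$ and $N_{2}=(N,\frac{r-1}{q-1})=5$, we have $5\mid\frac{r-1}{q-1}$; and because $\frac{r-1}{p-1}=\frac{r-1}{q-1}\cdot\frac{q-1}{p-1}$, this forces $5\mid\frac{r-1}{p-1}$. With the standing assumption $p\equiv 1\pmod 5$, Lemma \ref{lem4}(a) then applies, yielding $5\mid k$ together with the factorization of $\psi_{(5,r)}^{*}(X)$ into five monic linear factors, each occurring with multiplicity one.

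Next I would read off the five roots $\xi_{1},\ldots,\xi_{5}$ of $\psi_{(5,r)}^{*}(X)$ directly from Lemma \ref{lem4}(a): namely $\xi_{1}=-\frac{\sqrt[5]{r}}{16}(x^{3}-25L)$ from the first factor, and $\xi_{i+2}=\frac{\sqrt[5]{r}}{64}\sigma^{i}(x^{3}-25M)$ for $i=0,1,2,3$, where $(x,w,v,u)$ is the solution of Dickson's system specified in the lemma and $L,M,\sigma$ are as defined there. Since each factor has multiplicity one, the exponents appearing in Theorem \ref{thm3.2} are $e_{1}=\cdots=e_{5}=1$, and indeed $e_{1}+\cdots+e_{5}=5=N_{2}$, so the hypotheses of Theorem \ref{thm3.2} are satisfied.

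Finally I would substitute into Theorem \ref{thm3.2} with $N=5$. Each nonzero weight equals $\frac{(q-1)(r-\xi_{i})}{Nq}$; using $r-\xi_{1}=r+\frac{\sqrt[5]{r}}{16}(x^{3}-25L)$ and $r-\xi_{i+2}=r-\frac{\sqrt[5]{r}}{64}\sigma^{i}(x^{3}-25M)$ reproduces exactly the six entries of the weight list $wl$. The frequency attached to each nonzero weight is $\frac{e_{i}(r-1)}{N_{2}}=\frac{r-1}{5}$ for $i=2,\ldots,6$, while the all-zero codeword gives $Freq(0)=1$, matching the asserted frequency table. Every step here is a direct substitution into results already established, so there is no genuine analytic obstacle; the only care required is purely bookkeeping, matching each root $\xi_{i}$ with the correct entry of $wl$ and keeping the transformation $\sigma$ applied in the right order.
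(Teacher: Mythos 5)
Your proof is correct and follows exactly the route the paper itself takes: invoking Theorem \ref{thm3.2} together with the factorization of $\psi_{(5,r)}^{*}(X)$ in Lemma \ref{lem4}(a), then reading off the weights $\frac{(q-1)(r-\xi_i)}{Nq}$ and frequencies $\frac{e_i(r-1)}{N_2}$. The only difference is that you spell out the bookkeeping (verifying $5\mid\frac{r-1}{p-1}$ from $N_2=5$, matching roots to list entries) that the paper compresses into ``follows immediately.''
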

\begin{proof}
From Theorem \ref{thm3.2} and (a) in Lemma
\ref{lem4}, this theorem follows immediately.
\end{proof}
\begin{example}
Let $~q=11$ and let $~m=5$. Then the set
$~\mathcal{C}(r,5)$
 is a $[32210
 ,5,29050]$ code over $~GF(11)$ with the weight distribution
\begin{align*}
 1&+32210x^{29050}+32210x^{29200}+32210x^{29300}\\
 &+32210x^{29400}+32210x^{29460}
\end{align*}
\end{example}

\begin{theorem}\label{thm5.3}
Let $N=5$, $N_{2}=5$ and
$p\not \equiv 1\pmod 5$. Let $k=sm$.

{\rm (a)} If $k/ord_5p$ is
odd, $\mathcal{C}(r,5)$ is a
$[(r-1)/5,m,(q-1)(r-4r^{1/2})/5q]$ code with the weight distribution
$$
1+\frac{r-1}{5}x^{(q-1)(r-4r^{1/2})/5q}+
\frac{4(r-1)}{5}x^{(q-1)(r+r^{1/2})/5q}.
$$

{\rm (b)} If $k/ord_5p$ is
even,  $\mathcal{C}(r,5)$ is a
$[(r-1)/5,m,(q-1)(r-r^{1/2})/5q]$ code with the weight distribution
$$
1+\frac{4(r-1)}{5}x^{(q-1)(r-r^{1/2})/5q}+
\frac{r-1}{5}x^{(q-1)(r+4r^{1/2})/5q}.
$$
\end{theorem}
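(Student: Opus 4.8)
The plan is to derive the result directly from Theorem~\ref{thm3.2} and part~(b) of Lemma~\ref{lem4}, by reading off the roots of the reduced period polynomial together with their multiplicities and substituting them into the general weight formula. Since $N=5$ and $N_2=5$, the polynomial governing the weight distribution is $\psi_{(5,r)}^{*}$, and the hypothesis $p\not\equiv 1\pmod 5$ is precisely the hypothesis of Lemma~\ref{lem4}(b). Hence that lemma applies and supplies the explicit factorization, which splits into two subcases according to the parity of $k/ord_5p$.

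Next I would invoke Theorem~\ref{thm3.2} with $N_2=5$: each root $\xi_i$ of $\psi_{(5,r)}^{*}$ of multiplicity $e_i$ contributes a nonzero codeword weight $(q-1)(r-\xi_i)/(5q)$ occurring with frequency $e_i(r-1)/5$. It then remains to substitute the two factorizations into this template. When $k/ord_5p$ is odd, Lemma~\ref{lem4}(b) gives $\psi_{(5,r)}^{*}(X)=(X+r^{1/2})^4(X-4r^{1/2})$, so the roots are $\xi=-r^{1/2}$ with multiplicity $4$ and $\xi=4r^{1/2}$ with multiplicity $1$; feeding these into $(q-1)(r-\xi)/(5q)$ yields the weights $(q-1)(r+r^{1/2})/(5q)$ and $(q-1)(r-4r^{1/2})/(5q)$ with frequencies $4(r-1)/5$ and $(r-1)/5$, which is exactly part~(a). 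When $k/ord_5p$ is even, the factorization $(X-r^{1/2})^4(X+4r^{1/2})$ produces, by the same computation, the weights and frequencies of part~(b).

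Finally, to state the parameters in the form $[(r-1)/5,m,d]$ I would identify the minimum distance $d$ as the smaller of the two nonzero weights. Since $-4r^{1/2}<r^{1/2}$, in the odd case the weight $(q-1)(r-4r^{1/2})/(5q)$ is the smaller one, giving $d=(q-1)(r-4r^{1/2})/(5q)$; symmetrically, $r^{1/2}<4r^{1/2}$ makes $(q-1)(r-r^{1/2})/(5q)$ the minimum in the even case. The only point requiring care is the sign bookkeeping inside $r-\xi_i$, so that the negative root $\xi=-r^{1/2}$ correctly yields the larger weight $r+r^{1/2}$ rather than $r-r^{1/2}$; beyond this there is no genuine obstacle, the statement being an immediate corollary of the two cited results.
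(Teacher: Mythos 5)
Your proof is correct, but it takes a genuinely different route from the paper. The paper disposes of this theorem in one line by observing that $N=5$ with $p\not\equiv 1\pmod 5$ is the semi-primitive case (since then some power of $p$ is $\equiv -1\pmod 5$) and citing the two-weight result of Baumert--McEliece \cite{BM1}. You instead stay inside the paper's own machinery: Theorem~\ref{thm3.2} plus the factorization of $\psi_{(5,r)}^{*}$ in Lemma~\ref{lem4}(b), reading off the roots $\mp r^{1/2}$ (multiplicity $4$) and $\pm 4r^{1/2}$ (multiplicity $1$) in the two parity cases and substituting into the weight template $(q-1)(r-\xi_i)/(5q)$ with frequency $e_i(r-1)/5$; your sign bookkeeping and identification of the minimum distance are all correct, and this is exactly how the paper itself handles the companion case $p\equiv 1\pmod 5$ (Theorem~\ref{thm5.2}). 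Your route buys self-containedness and makes the dependence on the parity of $k/ord_5p$ transparent through the two explicit factorizations, while the paper's citation is shorter and places the code in the classical semi-primitive two-weight family. One small point worth making explicit in your write-up: Lemma~\ref{lem4} is stated under the hypothesis $5\mid\frac{r-1}{p-1}$, which is not literally "$p\not\equiv 1\pmod 5$" as you assert, but it does follow from the hypothesis $N_2=(5,\frac{r-1}{q-1})=5$, since $\frac{r-1}{q-1}$ divides $\frac{r-1}{p-1}$.
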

\begin{proof}
This theorem is
the semi-primitive case in \cite{BM1}.
\end{proof}
\begin{example}
Let $~q=7^2$ and let $~m=2$. Then the set
$~\mathcal{C}(r,5)$
 is a $[480
 ,2,432]$ code over $~GF(7^2)$ with the weight distribution
\begin{align*}
 1+480x^{432}+1920x^{480}
\end{align*}

Let $q=19$ and let $m=4$. Then the set
$~\mathcal{C}(r,5)$
 is a $[26064
 ,4,24624]$ code over $~GF(19)$ with the weight distribution
\begin{align*}
 1+104256x^{24624}+26064x^{24966}
\end{align*}
\end{example}

\begin{theorem}\label{thm5.4}
Let $N=6$ and $N_{2}=1$, then
$\mathcal{C}(r,6)$ is a
$[(r-1)/6,m, q^{m-1}(q-1)/6]$ code with the only nonzero weight $q^{m-1}(q-1)/6$.
\end{theorem}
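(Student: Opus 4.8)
The plan is to recognize Theorem \ref{thm5.4} as the special case $N=6$, $N_2=1$ of the single-weight result already established, and to read off the parameters by substitution. First I would note that the hypothesis $N_2=1$ places us exactly in the situation of Theorem \ref{thm4.1} (equivalently, Ding's $N_2=1$ case \cite{Ding2}), so the cleanest route is to set $N=6$ in the conclusion of that theorem and record the parameters $[(r-1)/6,m,q^{m-1}(q-1)/6]$ together with the unique nonzero weight $q^{m-1}(q-1)/6$. This exactly parallels the proof of Theorem \ref{thm5.1}, which disposes of the analogous $N=5$, $N_2=1$ case by the same citation.

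For a self-contained derivation I would instead route through Theorem \ref{thm3.1}. When $N_2=1$ there is a single cyclotomic class $C_0^{(1,r)}=GF(r)^*$, so the Gauss period is $\eta_0^{(1,r)}=\sum_{x\in GF(r)^*}\mu(x)$. Orthogonality of the additive character gives $\sum_{x\in GF(r)}\mu(x)=0$, whence $\eta_0^{(1,r)}=-\mu(0)=-1$ and the reduced period is $\eta_0^{*(1,r)}=1+1\cdot(-1)=0$. Substituting $\eta_{i(\beta)}^{*(1,r)}=0$ into the weight formula of Theorem \ref{thm3.1} yields $wt(c(\beta))=(q-1)r/(qN)$ for every $\beta\in GF(r)^*$; with $r=q^m$ and $N=6$ this collapses to $q^{m-1}(q-1)/6$, while the code length is $n=(r-1)/6$ and the dimension is $m$ by the standing assumption $m_0=m$.

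There is essentially no obstacle here, and I expect no case analysis, no appeal to the factorization lemmas of Section II, and no delicate Gauss-sum estimates: the content is inherited wholesale from the general $N_2=1$ theorem. The only point one must verify is that $N_2=1$ forces the single reduced period to vanish, which in turn makes the reduced period polynomial $\psi_{(1,r)}^{*}(X)$ equal to the single linear factor $X$ and leaves exactly one nonzero weight.
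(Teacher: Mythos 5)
Your proposal is correct and its primary route---specializing Theorem \ref{thm4.1} (Ding's $N_2=1$ case \cite{Ding2}) to $N=6$---is exactly the paper's own proof, which simply cites that result. Your supplementary self-contained derivation via Theorem \ref{thm3.1} (observing that $N_2=1$ forces $\eta_0^{*(1,r)}=1+1\cdot(-1)=0$, hence $wt(c(\beta))=(q-1)r/(6q)=q^{m-1}(q-1)/6$ for every $\beta\in GF(r)^*$) is also sound, though the paper does not spell it out.
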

\begin{proof}
This theorem follows immediately from
Ding \cite{Ding2} for the case $N_{2}=1$.
\end{proof}
\begin{example}
Let $~q=7$ and let $~m=5$. Then the set
$~\mathcal{C}(r,6)$
 is a $[2801
 ,5,2401]$ code over $~GF(7)$ with the weight distribution
\begin{align*}
 1+16806x^{2401}
\end{align*}
\end{example}

\begin{theorem}\label{thm5.5}
Let $N=6$ and $N_{2}=2$, then
$\mathcal{C}(r,6)$ is a
$[(r-1)/6,m, (q-1)(r-r^{1/2})/6q]$ two weight code with the weight distribution
$$
1+\frac{r-1}{2}x^{(q-1)(r-r^{1/2})/6q}+
\frac{r-1}{2}x^{(q-1)(r+r^{1/2})/6q}.
$$
\end{theorem}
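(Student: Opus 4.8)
The plan is to read the weight distribution straight off Theorem \ref{thm3.2} once the factorization of the reduced period polynomial $\psi_{(2,r)}^{*}$ is in hand. Since $N_{2}=2$, Theorem \ref{thm3.2} tells us that the entire weight distribution of $\mathcal{C}(r,6)$ is governed solely by $\psi_{(N_{2},r)}^{*}=\psi_{(2,r)}^{*}$, and this is exactly the object factored in Lemma \ref{lem1}. So the whole argument reduces to invoking the correct factorization and substituting.

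First I would check that the hypothesis $2\mid\frac{r-1}{p-1}$ of Lemma \ref{lem1} holds. From $N_{2}=(N,\frac{r-1}{q-1})=2$ we get $2\mid\frac{r-1}{q-1}$; since $q-1=p^{s}-1$ is divisible by $p-1$, the quantity $\frac{q-1}{p-1}$ is an integer, and writing $\frac{r-1}{p-1}=\frac{r-1}{q-1}\cdot\frac{q-1}{p-1}$ then shows $2\mid\frac{r-1}{p-1}$. Hence Lemma \ref{lem1} applies and gives $\psi_{(2,r)}^{*}(X)=(X-\sqrt{r})(X+\sqrt{r})$ (in particular $\sqrt{r}$ is an integer, consistent with the Corollary on rational factorization).

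Next, in the notation of Theorem \ref{thm3.2} we have $t=2$, $\xi_{1}=\sqrt{r}$, $\xi_{2}=-\sqrt{r}$ and $e_{1}=e_{2}=1$. Substituting these together with $N=6$ and $N_{2}=2$ into the formula of Theorem \ref{thm3.2} produces the two nonzero exponents $(q-1)(r-\sqrt{r})/6q$ and $(q-1)(r+\sqrt{r})/6q$, each occurring with frequency $\frac{e_{i}(r-1)}{N_{2}}=\frac{r-1}{2}$, which is precisely the claimed distribution. Because $\sqrt{r}\neq-\sqrt{r}$ the two exponents are distinct, so the code is genuinely two weight, and since $r-\sqrt{r}<r+\sqrt{r}$ the minimum distance is $(q-1)(r-\sqrt{r})/6q$, matching the stated parameters $[(r-1)/6,m,(q-1)(r-r^{1/2})/6q]$.

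There is essentially no real obstacle here: the statement is the specialization $N=6$ of the general $N_{2}=2$ result already recorded in Theorem \ref{thm4.5}, so one could equally quote Theorem \ref{thm4.5} directly with $N=6$. The only point that genuinely requires care is verifying the divisibility condition $2\mid\frac{r-1}{p-1}$, which is what guarantees that Lemma \ref{lem1} is the applicable factorization; everything after that is a direct substitution into Theorem \ref{thm3.2}.
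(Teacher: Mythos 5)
Your proof is correct, but it follows a slightly different route than the paper. The paper disposes of this theorem in one line by citing Ding's result for the case $N_{2}=2$ (recorded in the paper as Theorem \ref{thm4.5}), whereas you rederive that result inside the paper's own framework, combining the general weight-distribution formula of Theorem \ref{thm3.2} with the factorization $\psi_{(2,r)}^{*}(X)=(X-\sqrt{r})(X+\sqrt{r})$ from Lemma \ref{lem1}. Your version has two small advantages: it is self-contained rather than leaning on the external reference, and it makes explicit the one hypothesis that actually needs checking, namely $2\mid\frac{r-1}{p-1}$, which you correctly deduce from $N_{2}=(6,\frac{r-1}{q-1})=2$ together with $\frac{r-1}{p-1}=\frac{r-1}{q-1}\cdot\frac{q-1}{p-1}$. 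This is exactly the pattern the paper itself uses for the neighboring cases (e.g.\ Theorems \ref{thm5.6} and \ref{thm5.8}), so your argument fits the paper's methodology, with the paper's citation of Ding merely buying brevity; you also note, correctly, that the two exponents are distinct so the code is genuinely two-weight, a point the paper leaves implicit.
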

\begin{proof}
This theorem follows immediately from
Ding \cite{Ding2} for the case $N_{2}=2$.
\end{proof}
\begin{example}
Let $~q=7$ and let $~m=2$. Then the set
$~\mathcal{C}(r,6)$
 is a $[8
 ,2,6]$ code over $~GF(7)$ with the weight distribution
\begin{align*}
 1+24x^{6}+24x^{8}
\end{align*}
\end{example}

\begin{theorem}\label{thm5.6}
Let $N=6$, $N_{2}=3$ and $p\equiv 1 \pmod 3$, then
$k\equiv 0\pmod 3$ and
$\mathcal{C}(r,6)$ is a
$[(r-1)/6,m]$ code with the weight distribution
\begin{align*}
1+\frac{r-1}{3}x^{(q-1)(r-cr^{1/3})/6q}+
\frac{r-1}{3}x^{(q-1)(r+\frac{1}{2}(c+9d)
r^{1/3})/6q}
\\+
\frac{r-1}{3}x^{(q-1)(r+\frac{1}{2}(c-9d)
r^{1/3})/6q},
\end{align*}
where $c$ and $d$ satisfy
$4p^{k/3}=c^2+27d^2$, $c\equiv 1\pmod 3$ and
$gcd(c,p)=1$.
\end{theorem}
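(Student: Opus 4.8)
The plan is to read this off directly from Theorem~\ref{thm3.2}, exactly as in Theorem~\ref{thm4.6}, now with $N=6$. By hypothesis $N_2=(6,\frac{r-1}{q-1})=3$, so Theorem~\ref{thm3.2} tells us that the whole weight enumerator of $\mathcal{C}(r,6)$ is governed by the factorization of the reduced period polynomial $\psi_{(N_2,r)}^{*}=\psi_{(3,r)}^{*}$ over the rationals. Thus the entire argument reduces to quoting the appropriate factorization and substituting its roots into the formula of Theorem~\ref{thm3.2}.

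First I would check that Lemma~\ref{lem2} is applicable, i.e.\ that $3\mid\frac{r-1}{p-1}$. This is immediate: from $N_2=3$ we have $3\mid\frac{r-1}{q-1}$, and since $\frac{r-1}{q-1}$ divides $\frac{r-1}{p-1}$ (because $q-1=(p-1)(p^{s-1}+\cdots+1)$ gives $p-1\mid q-1$), it follows that $3\mid\frac{r-1}{p-1}$. With $p\equiv 1\pmod 3$, part~(a) of Lemma~\ref{lem2} then applies and yields both the divisibility $3\mid k$ asserted in the statement and the complete factorization
$$\psi_{(3,r)}^{*}(X)=(X-cr^{1/3})\bigl(X+\tfrac{1}{2}(c+9d)r^{1/3}\bigr)\bigl(X+\tfrac{1}{2}(c-9d)r^{1/3}\bigr),$$
where $c,d$ are determined by $4p^{k/3}=c^2+27d^2$, $c\equiv 1\pmod 3$ and $\gcd(c,p)=1$. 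Hence the three roots are $\xi_1=cr^{1/3}$, $\xi_2=-\tfrac{1}{2}(c+9d)r^{1/3}$ and $\xi_3=-\tfrac{1}{2}(c-9d)r^{1/3}$, each with multiplicity $e_i=1$.

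Finally I would feed these roots into Theorem~\ref{thm3.2} with $N=6$ and $N_2=3$: each frequency becomes $\frac{e_i(r-1)}{N_2}=\frac{r-1}{3}$, and each nonzero exponent is $(q-1)(r-\xi_i)/(6q)$. Substituting $\xi_1,\xi_2,\xi_3$ produces the three exponents $(q-1)(r-cr^{1/3})/6q$, $(q-1)(r+\tfrac{1}{2}(c+9d)r^{1/3})/6q$ and $(q-1)(r+\tfrac{1}{2}(c-9d)r^{1/3})/6q$, which is precisely the claimed distribution, while the length $(r-1)/6$ and dimension $m$ come straight from Theorem~\ref{thm3.2}. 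There is essentially no hard step here: the only point needing care is the verification that the hypothesis $3\mid\frac{r-1}{p-1}$ of Lemma~\ref{lem2} is satisfied so that the cited factorization is legitimate; everything after that is a direct substitution.
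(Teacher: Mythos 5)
Your proposal is correct and follows essentially the same route as the paper, which proves this theorem by citing the proof of Theorem~\ref{thm4.6}, i.e.\ combining Theorem~\ref{thm3.2} with part~(a) of Lemma~\ref{lem2}. Your explicit verification that $N_2=3$ forces $3\mid\frac{r-1}{p-1}$ (so that Lemma~\ref{lem2} is legitimately applicable) is a detail the paper leaves implicit, but it does not change the argument.
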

\begin{proof}
This theorem follows immediately
from the proof in  Theorem \ref{thm4.6}.
\end{proof}
\begin{example}
Let $~q=7$ and let $~m=3$. Then the set
$~\mathcal{C}(r,6)$
 is a $[57
 ,3,45]$ code over $~GF(7)$ with the weight distribution
\begin{align*}
 1+114x^{45}+114x^{48}+114x^{54}
\end{align*}
\end{example}

\begin{theorem}\label{thm5.7}
Let $N=6$, $N_{2}=3$ and $p\equiv 2\pmod 3$, then
$k\equiv 0\pmod 2$ and
$\mathcal{C}(r,6)$ is a
$[(r-1)/6,m]$ two weight code with the weight distribution
\begin{align*}
1&+\frac{r-1}{3}x^{(q-1)(r+(-1)^{k/2}2r^{1/2})/6q}
\\
&+
\frac{2(r-1)}{3}x^{(q-1)(r-(-1)^{k/2}2r^{1/2})/6q}.
\end{align*}
\end{theorem}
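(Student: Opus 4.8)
The plan is to recognize that this statement is precisely the $N=6$ instance of the general case already handled in Theorem \ref{thm4.7} (the case $N_{2}=3$, $p\equiv 2\pmod 3$ for arbitrary admissible $N$). Accordingly, the cleanest route is the same two-ingredient argument used there: combine the general weight-distribution formula of Theorem \ref{thm3.2} with the explicit factorization of $\psi_{(3,r)}^{*}$ supplied by part (b) of Lemma \ref{lem2}. No new machinery beyond these two results is required, so the proof should read essentially as ``this follows immediately from Theorem \ref{thm3.2} and part (b) of Lemma \ref{lem2},'' mirroring the one-line proof of Theorem \ref{thm4.6}.

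Concretely, I would first note that the hypotheses $N_{2}=3$ and $p\equiv 2\pmod 3$ put us in case (b) of Lemma \ref{lem2}, which forces $2\mid k$ (giving the claimed $k\equiv 0\pmod 2$) and yields $\psi_{(3,r)}^{*}(X)=(X+(-1)^{k/2}2\sqrt{r})(X-(-1)^{k/2}\sqrt{r})^{2}$. In the notation of Theorem \ref{thm3.2} this means there are $t=2$ distinct roots, $\xi_{1}=-(-1)^{k/2}2\sqrt{r}$ with multiplicity $e_{1}=1$ and $\xi_{2}=(-1)^{k/2}\sqrt{r}$ with multiplicity $e_{2}=2$, and indeed $e_{1}+e_{2}=3=N_{2}$. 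Substituting $N=6$ and $N_{2}=3$ into the weight-distribution formula of Theorem \ref{thm3.2} then produces the two nonzero weights $(q-1)(r-\xi_{1})/(6q)$ and $(q-1)(r-\xi_{2})/(6q)$, occurring with frequencies $e_{1}(r-1)/N_{2}=(r-1)/3$ and $e_{2}(r-1)/N_{2}=2(r-1)/3$, which is exactly the asserted distribution; the length $(r-1)/N=(r-1)/6$ and dimension $m$ are read off directly from Theorem \ref{thm3.2}.

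There is essentially no real obstacle here beyond bookkeeping, so the ``hard part'' is only to confirm that the code genuinely has two distinct nonzero weights rather than three. This holds because the two roots $\xi_{1}$ and $\xi_{2}$ are unequal for every $r$ (their ratio is $-2$), so the double root and the simple root contribute genuinely different weights while pooling the correct frequencies. Since all the analytic content—the evaluation of the reduced period polynomial via Gauss sums and Stickelberger's theorem—has already been absorbed into Lemma \ref{lem2}(b), and the passage from root data to weight data is exactly Theorem \ref{thm3.2}, the specialization to $N=6$ is purely a substitution and the result follows immediately.
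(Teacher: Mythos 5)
Your proposal takes exactly the paper's route: the paper proves this theorem by citing its Theorem~\ref{thm4.7}, whose own one-line proof is precisely your combination of Theorem~\ref{thm3.2} with part (b) of Lemma~\ref{lem2}, so the approach, the root/multiplicity bookkeeping, and the frequency counts all coincide with what the paper intends.

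One caveat: your claim that the substitution yields ``exactly the asserted distribution'' is not literally true, and the discrepancy is worth flagging. From Lemma~\ref{lem2}(b) the double root is $\xi_{2}=(-1)^{k/2}\sqrt{r}$, so the second weight you derive is $(q-1)\bigl(r-(-1)^{k/2}r^{1/2}\bigr)/6q$, whereas the statement (copied from Theorem~\ref{thm4.7}) prints $(q-1)\bigl(r-(-1)^{k/2}2r^{1/2}\bigr)/6q$. The extra factor $2$ in the second exponent is a typo in the paper: with $q=5^{2}$, $m=3$ (the example immediately following the theorem) your formula gives weights $2460$ and $2520$, matching the stated distribution $1+5208x^{2460}+10416x^{2520}$, while the printed exponent would give $2540$; the parallel statement for $N=12$ (Theorem~\ref{thm5.17}) also has $r^{1/2}$ without the factor $2$. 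So your derivation is correct and in effect silently corrects the statement; just do not assert a literal match where there is none.
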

\begin{proof}
This theorem follows immediately
from the proof in  Theorem \ref{thm4.7}.
\end{proof}
\begin{example}
Let $~q=5^2$ and let $~m=3$. Then the set
$~\mathcal{C}(r,6)$
 is a $[2604
 ,3,2460]$ code over $~GF(5^2)$ with the weight distribution
\begin{align*}
 1+5208x^{2460}+10416x^{2520}
\end{align*}
\end{example}

\begin{theorem}\label{thm5.8}
Let $N=6$, $N_{2}=6$ and $p\equiv 1 \pmod 6$, then
$k\equiv 0\pmod 6$ and
$\mathcal{C}(r,6)$ is a
$[(r-1)/6,m]$ code with the weight distribution
$$
1+\frac{r-1}{6}x^{(q-1)(r-\eta_{1}^{*(6,r)})/6q}+
\cdots+
\frac{r-1}{6}x^{(q-1)(r-\eta_{6}^{*(6,r)})/6q}.
$$
\end{theorem}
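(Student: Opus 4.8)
The plan is to apply the general reduction established in Theorem \ref{thm3.2}, which expresses the entire weight distribution of $\mathcal{C}(r,N)$ in terms of the factorization of the reduced period polynomial $\psi_{(N_2,r)}^{*}$. Since here $N=N_2=6$, the only object I need is the explicit factorization of $\psi_{(6,r)}^{*}(X)$, and this is supplied directly by Lemma \ref{lem5}(a) for the relevant hypothesis $p\equiv 1\pmod 6$.

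First I would invoke Lemma \ref{lem5}(a). In the case $p\equiv 1\pmod 6$ it asserts $6\mid k$ (which is already part of the claimed statement) and gives the complete splitting
$$
\psi_{(6,r)}^{*}(X)=(X-\eta_{1}^{*(6,r)})\cdots(X-\eta_{6}^{*(6,r)})
$$
into six linear factors, where the reduced periods $\eta_{j}^{*(6,r)}$ are the explicit expressions built from the sequences $V_{j,n}$. The point to record is that this is a product of six linear factors, so in the notation of Theorem \ref{thm3.2} we have $t=6$ and multiplicities $e_{1}=\cdots=e_{6}=1$, with $e_{1}+\cdots+e_{6}=6=N_2$.

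Second, I would substitute this factorization into Theorem \ref{thm3.2} with $N=6$, $N_2=6$ and $\xi_{j}=\eta_{j}^{*(6,r)}$. Each term $\frac{e_{j}(r-1)}{N_2}x^{(q-1)(r-\xi_{j})/Nq}$ then becomes $\frac{r-1}{6}x^{(q-1)(r-\eta_{j}^{*(6,r)})/6q}$, and summing over $j=1,\dots,6$ together with the constant term $1$ for the zero codeword produces exactly the asserted distribution. The assertion that $\mathcal{C}(r,6)$ is a $[(r-1)/6,m]$ code is likewise immediate from Theorem \ref{thm3.2}.

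There is essentially no genuine obstacle: the mathematical content sits entirely in the factorization quoted from the literature via Lemma \ref{lem5}(a), and once that is in hand the weight distribution is a mechanical substitution. The only conceptual check worth noting is that the six roots $\eta_{j}^{*(6,r)}$ are genuinely distinct, so that the code attains six distinct nonzero weights; this is consistent with Theorem \ref{thm3.4}, since $p\equiv 1\pmod{N_2}$ forces the maximal number $N_2=6$ of distinct nonzero weights, matching the six multiplicity-one factors above.
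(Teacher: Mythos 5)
Your proof is correct and is essentially identical to the paper's own argument: the paper also proves this theorem by citing Lemma \ref{lem5}(a) for the factorization of $\psi_{(6,r)}^{*}(X)$ and substituting it into Theorem \ref{thm3.2}. Your additional cross-check of root distinctness against Theorem \ref{thm3.4} is a nice consistency observation but is not needed for the conclusion.
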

\begin{proof}
From Theorem \ref{thm3.2} and
(a) in Lemma \ref{lem5}, this theorem
follows immediately.
\end{proof}
\begin{example}
Let $~q=7$ and let $~m=6$. Then the set
$~\mathcal{C}(r,6)$
 is a $[19608
 ,6,16596]$ code over $~GF(7)$ with the weight distribution
\begin{align*}
 1&+19608x^{16596}+19608x^{16776}+19608x^{16812}\\
 &+19608x^{16836}+19608x^{16866}+19608x^{16956}
\end{align*}
\end{example}

\begin{theorem}\label{thm5.9}
Let $N=6$, $N_{2}=6$ and $p\equiv 5\pmod 6$, then
$\mathcal{C}(r,6)$ is a
$[(r-1)/6,m]$ two weight code with the weight distribution
\begin{align*}
1&+\frac{5(r-1)}{6}x^{(q-1)(r-(-1)^{k/2}p^{k/2})/6q}\\
&+
\frac{r-1}{6}x^{(q-1)(r+5(-1)^{k/2}p^{k/2})/6q}.
\end{align*}
\end{theorem}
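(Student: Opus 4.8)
The plan is to obtain this distribution as an immediate specialization of Theorem~\ref{thm3.2}, in the same manner as Theorems~\ref{thm5.6}--\ref{thm5.8}: since the weight distribution of $\mathcal{C}(r,N)$ depends only on the factorization of $\psi_{(N_2,r)}^*$, and here $N_2=6$, everything is determined by the factorization of $\psi_{(6,r)}^*$ for $p\equiv 5\pmod 6$ supplied by part~(b) of Lemma~\ref{lem5}.

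First I would confirm that Lemma~\ref{lem5} is applicable, i.e.\ that $6\mid\frac{r-1}{p-1}$. Because $N_2=(N,\frac{r-1}{q-1})=6$ we have $6\mid\frac{r-1}{q-1}$, and writing $\frac{r-1}{p-1}=\frac{r-1}{q-1}\cdot\frac{q-1}{p-1}$ with $\frac{q-1}{p-1}=\frac{p^s-1}{p-1}\in Z$ gives $6\mid\frac{r-1}{p-1}$. In particular $2\mid\frac{r-1}{p-1}$, and since $p$ is odd the identity $\frac{r-1}{p-1}=1+p+\cdots+p^{k-1}\equiv k\pmod 2$ forces $k$ to be even, so the symbol $(-1)^{k/2}$ appearing in Lemma~\ref{lem5}(b) is well defined.

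It then remains to substitute the data of Lemma~\ref{lem5}(b) into Theorem~\ref{thm3.2}. There $\psi_{(6,r)}^*(X)=(X-\xi_1)^{5}(X-\xi_2)$ with $\xi_1=(-1)^{k/2}p^{k/2}$ of multiplicity $e_1=5$ and $\xi_2=-5(-1)^{k/2}p^{k/2}$ of multiplicity $e_2=1$ (so that $e_1+e_2=6=N_2$). Taking $N=6$, Theorem~\ref{thm3.2} converts these into the two nonzero weights $(q-1)(r-\xi_1)/6q=(q-1)(r-(-1)^{k/2}p^{k/2})/6q$ and $(q-1)(r-\xi_2)/6q=(q-1)(r+5(-1)^{k/2}p^{k/2})/6q$, with frequencies $\tfrac{5(r-1)}{6}$ and $\tfrac{r-1}{6}$ respectively, which is exactly the asserted distribution. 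The only point worth recording is that the code genuinely has two weights rather than one: this holds because $\xi_1=\xi_2$ would force $(-1)^{k/2}p^{k/2}=0$, which is impossible, so the two exponents are distinct. Since all the analytic content is already carried by Lemma~\ref{lem5}(b) and Theorem~\ref{thm3.2}, there is no substantive obstacle beyond this bookkeeping.
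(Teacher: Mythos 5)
Your proof is correct, but it takes a different route from the paper's. The paper disposes of Theorem~\ref{thm5.9} with a one-line citation: since $p\equiv 5\pmod 6$ means $6\mid(p+1)$, this is the semi-primitive case, and the two-weight distribution is quoted directly from Baumert--McEliece \cite{BM1}. You instead stay entirely inside the paper's own machinery: you check that $6\mid\frac{r-1}{p-1}$ (so Lemma~\ref{lem5} applies), deduce that $k$ is even (so that $(-1)^{k/2}$ is meaningful), and then feed the factorization $\psi_{(6,r)}^{*}(X)=(X-(-1)^{k/2}p^{k/2})^{5}(X+5(-1)^{k/2}p^{k/2})$ from Lemma~\ref{lem5}(b) into Theorem~\ref{thm3.2}, reading off the two weights and their frequencies $\frac{5(r-1)}{6}$ and $\frac{r-1}{6}$. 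This is exactly the template the paper uses for the sibling results (e.g.\ Theorem~\ref{thm5.8} via Lemma~\ref{lem5}(a)), just applied to case (b) instead of invoking \cite{BM1}. What your route buys is self-containedness and uniformity --- all the case $N_2=6$ theorems then rest on the same two ingredients, and you supply details the citation leaves implicit (the divisibility check, the evenness of $k$, and the distinctness of the two roots); what the paper's route buys is brevity and an explicit link to the classical semi-primitive literature, of which this theorem is a special case. Both arguments are sound and yield the identical distribution.
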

\begin{proof}
This theorem is
the semi-primitive case in \cite{BM1}.
\end{proof}
\begin{example}
Let $~q=11$ and let $~m=2$. Then the set
$~\mathcal{C}(r,6)$
 is a $[20
 ,2,10]$ code over $~GF(11)$ with the weight distribution
\begin{align*}
 1+20x^{10}+100x^{20}
\end{align*}
\end{example}

\begin{theorem}\label{thm5.10}
Let $N=8$ and $N_{2}=1$, then
$\mathcal{C}(r,8)$ is a
$[(r-1)/8,m, q^{m-1}(q-1)/8]$ code with the only nonzero weight $q^{m-1}(q-1)/8$.
\end{theorem}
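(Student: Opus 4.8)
The plan is to recognize this statement as nothing more than the $N=8$ instance of the general one-weight result already recorded in Theorem \ref{thm4.1}. Since here $N_2 = (N,\frac{r-1}{q-1}) = 1$ by hypothesis, Theorem \ref{thm4.1} applies verbatim and yields a $[(r-1)/N, m, q^{m-1}(q-1)/N]$ code with a single nonzero weight $q^{m-1}(q-1)/N$; substituting $N = 8$ gives exactly the claimed parameters. This is the same one-line deduction used for the analogous cases $N=5$ (Theorem \ref{thm5.1}) and $N=6$ (Theorem \ref{thm5.4}), so I expect the paper's proof to follow that template.

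If instead one wants a self-contained argument from the machinery of Section III, I would proceed as follows. First, by Theorem \ref{thm3.3} the condition $N_2 = 1$ already guarantees that $\mathcal{C}(r,8)$ has exactly one nonzero weight, so only the value of that weight, together with the length and dimension, remains to be pinned down. Next I would evaluate the single reduced period attached to $N_2 = 1$: since $C_0^{(1,r)} = GF(r)^*$, the Gauss period is $\eta_0^{(1,r)} = \sum_{x \in GF(r)^*} \mu(x) = -1$, using the standard vanishing $\sum_{x \in GF(r)} \mu(x) = 0$ together with $\mu(0) = 1$. Hence $\eta_0^{*(1,r)} = 1 + N_2\,\eta_0^{(1,r)} = 0$.

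Feeding $\eta_{i(\beta)}^{*(N_2,r)} = \eta_0^{*(1,r)} = 0$ into the weight formula of Theorem \ref{thm3.1} gives $wt(c(\beta)) = \frac{(q-1)(r-0)}{qN} = \frac{(q-1)r}{qN} = q^{m-1}(q-1)/N$, where I used $r = q^m$; setting $N = 8$ produces the weight $q^{m-1}(q-1)/8$. The length $(r-1)/N = (r-1)/8$ and the dimension $m$ come directly from the standing assumption $m_0 = m$ together with the $[(r-1)/N, m]$ description in Theorem \ref{thm3.2}.

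There is essentially no obstacle here: the only point requiring any care is the elementary character-sum evaluation $\sum_{x \in GF(r)^*} \mu(x) = -1$, which is immediate from the orthogonality of the additive character $\mu$. Given the pattern of the surrounding theorems, I would write the proof in the first, one-line form and simply invoke the $N_2 = 1$ case.
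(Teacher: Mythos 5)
Your proposal is correct and takes essentially the same approach as the paper: the paper's proof is exactly the one-line deduction you anticipated, citing Ding's result for the case $N_{2}=1$ (Theorem \ref{thm4.1}) and substituting $N=8$. Your supplementary self-contained argument via Theorem \ref{thm3.1} with $\eta_{0}^{*(1,r)}=0$ is also correct, but it is not what the paper does.
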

\begin{proof}
This theorem follows immediately from
Ding \cite{Ding2} for the case $N_{2}=1$.
\end{proof}
\begin{example}
Let $~q=17$ and let $~m=3$. Then the set
$~\mathcal{C}(r,8)$
 is a $[614
 ,3,578]$ code over $~GF(17)$ with the weight distribution
\begin{align*}
 1+4912x^{578}
\end{align*}
\end{example}

\begin{theorem}\label{thm5.11}
Let $N=8$ and $N_{2}=2$, then
$\mathcal{C}(r,8)$ is a
$[(r-1)/8,m, (q-1)(r-r^{1/2})/8q]$ two weight code with the weight distribution
$$
1+\frac{r-1}{2}x^{(q-1)(r-r^{1/2})/8q}+
\frac{r-1}{2}x^{(q-1)(r+r^{1/2})/8q}.
$$
\end{theorem}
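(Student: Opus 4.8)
The plan is to reduce everything to the general weight-distribution formula of Theorem~\ref{thm3.2}, which expresses the weight distribution of $\mathcal{C}(r,N)$ solely through the factorization of the reduced period polynomial $\psi_{(N_2,r)}^{*}(X)$. Since here $N_2=2$, the only ingredient I need is the factorization of $\psi_{(2,r)}^{*}(X)$, which is supplied by Lemma~\ref{lem1}.

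First I would check that the hypothesis of Lemma~\ref{lem1} holds. By definition $N_2=(N,\frac{r-1}{q-1})=2$, so $2\mid\frac{r-1}{q-1}$; writing $\frac{r-1}{p-1}=\frac{r-1}{q-1}\cdot\frac{q-1}{p-1}$ shows at once that $2\mid\frac{r-1}{p-1}$, so Lemma~\ref{lem1} applies and gives $\psi_{(2,r)}^{*}(X)=(X+\sqrt{r})(X-\sqrt{r})$.

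Next I would read this off as the data required by Theorem~\ref{thm3.2}: there are $t=2$ distinct roots, $\xi_1=\sqrt{r}$ and $\xi_2=-\sqrt{r}$, each with multiplicity $e_1=e_2=1$ (and indeed $e_1+e_2=2=N_2$). Substituting $N=8$, $N_2=2$ and these values into Theorem~\ref{thm3.2} produces the length $(r-1)/8$, the dimension $m$, and the weight distribution
$$1+\frac{r-1}{2}x^{(q-1)(r-\sqrt{r})/8q}+\frac{r-1}{2}x^{(q-1)(r+\sqrt{r})/8q},$$
which is exactly the claimed distribution. Because $r>1$ the two nonzero exponents are distinct, so the code is genuinely a two-weight code, with minimum weight $(q-1)(r-r^{1/2})/8q$ coming from the root $\xi_1=\sqrt{r}$.

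I expect no real obstacle here: the argument is a direct specialization, identical in structure to the proofs of Theorem~\ref{thm4.5} and Theorem~\ref{thm5.5}. The only point requiring any care is verifying the divisibility hypothesis of Lemma~\ref{lem1}, which I handled above, and confirming that $\sqrt{r}\neq-\sqrt{r}$ so that the two listed weights really are different.
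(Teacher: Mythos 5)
Your proof is correct, but it follows a different route than the paper. The paper disposes of this theorem by direct citation: its proof of Theorem~\ref{thm5.11} simply invokes Ding's known result for the case $N_2=2$ (recorded as Theorem~\ref{thm4.5}), specialized to $N=8$. You instead re-derive the statement inside the paper's own machinery: the general reduction of Theorem~\ref{thm3.2}, which converts the weight distribution of $\mathcal{C}(r,N)$ into the factorization of $\psi_{(N_2,r)}^{*}$, combined with Myerson's factorization $\psi_{(2,r)}^{*}(X)=(X+\sqrt{r})(X-\sqrt{r})$ from Lemma~\ref{lem1}. This is the same template the paper uses for the cases $N_2=3,4,6,8,12$ (e.g.\ Theorems~\ref{thm4.6}, \ref{thm4.7}, \ref{thm5.8}), just not the one it chose here; your remark that you are mimicking the proofs of Theorems~\ref{thm4.5} and \ref{thm5.5} is slightly off, since those are themselves citations to Ding rather than applications of Theorem~\ref{thm3.2}. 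Your route buys self-containedness, and it also surfaces a detail the citation hides: Lemma~\ref{lem1} is stated under the hypothesis $2\mid\frac{r-1}{p-1}$, whereas $N_2$ is defined via $\frac{r-1}{q-1}$, and your observation that $\frac{r-1}{p-1}=\frac{r-1}{q-1}\cdot\frac{q-1}{p-1}$ with $\frac{q-1}{p-1}=1+p+\cdots+p^{s-1}\in\mathbb{Z}$ closes that gap cleanly. The paper's appeal to Ding buys brevity at the cost of an external dependence. Both arguments yield the identical distribution, frequencies $\frac{r-1}{2}$ for each of the two weights $(q-1)(r\mp\sqrt{r})/8q$.
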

\begin{proof}
This theorem follows immediately from
Ding \cite{Ding2} for the case $N_{2}=2$.
\end{proof}
\begin{example}
Let $~q=7^2$ and let $~m=2$. Then the set
$~\mathcal{C}(r,8)$
 is a $[300
 ,2,288]$ code over $~GF(7^2)$ with the weight distribution
\begin{align*}
 1+1200x^{288}+1200x^{300}
\end{align*}
\end{example}

\begin{theorem}\label{thm5.12}
Let $N=8$, $N_{2}=4$ and $p\equiv 1 \pmod 4$, then
$k\equiv 0\pmod 4$ and
$\mathcal{C}(r,8)$ is a
$[(r-1)/8,m]$ code with the weight distribution
\begin{align*}
1&+\frac{r-1}{4}x^{(q-1)(r+\sqrt{r}+2r^{1/4}u)/8q}\\
&+
\frac{r-1}{4}x^{(q-1)(r+\sqrt{r}-2r^{1/4}u)/8q}\\
&+\frac{r-1}{4}x^{(q-1)(r-\sqrt{r}+4r^{1/4}v)/8q}\\
&+\frac{r-1}{4}x^{(q-1)(r-\sqrt{r}-4r^{1/4}v)/8q},
\end{align*}
where $u$ and $v$ satisfy
$p^{k/2}=u^2+4v^2$, $u\equiv 1\pmod 4$ and
$gcd(u,p)=1$.
\end{theorem}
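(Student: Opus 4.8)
The plan is to apply Theorem~\ref{thm3.2} together with part (a) of Lemma~\ref{lem3}, exactly as in the proofs of Theorems~\ref{thm4.6} and \ref{thm4.12}. By Theorem~\ref{thm3.1} the Hamming weight of every codeword of $\mathcal{C}(r,8)$ is governed entirely by the reduced periods $\eta_{i(\beta)}^{*(N_2,r)}$, that is, by the roots of the reduced period polynomial $\psi_{(N_2,r)}^{*}$. Since here $N_2=4$, the entire problem collapses to factoring the quartic period polynomial $\psi_{(4,r)}^{*}$, which is precisely what Lemma~\ref{lem3} supplies.

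First I would verify that the hypothesis $4\mid\frac{r-1}{p-1}$ of Lemma~\ref{lem3} is met. Because $q-1=p^{s}-1$ is divisible by $p-1$, the integer $\frac{r-1}{q-1}$ divides $\frac{r-1}{p-1}$; and since $N_2=(N,\frac{r-1}{q-1})=4$ divides $\frac{r-1}{q-1}$, it also divides $\frac{r-1}{p-1}$. With $p\equiv 1\pmod 4$, part (a) of Lemma~\ref{lem3} then yields $4\mid k$ together with the complete factorization of $\psi_{(4,r)}^{*}(X)$ into four simple linear factors, where $u,v$ are determined by $p^{k/2}=u^{2}+4v^{2}$, $u\equiv 1\pmod 4$, $\gcd(u,p)=1$. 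In particular all four roots are distinct, so in the notation of Theorem~\ref{thm3.2} we have $t=4$ and $e_1=e_2=e_3=e_4=1$.

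The last step is pure substitution. I would read off the four roots $\xi_1=-\sqrt{r}-2r^{1/4}u$, $\xi_2=-\sqrt{r}+2r^{1/4}u$, $\xi_3=\sqrt{r}-4r^{1/4}v$, $\xi_4=\sqrt{r}+4r^{1/4}v$ from the factorization, and insert them into the weight-distribution formula of Theorem~\ref{thm3.2} with $N=8$ and $N_2=4$. Each factor contributes a nonzero weight of frequency $\frac{e_i(r-1)}{N_2}=\frac{r-1}{4}$ and exponent $(q-1)(r-\xi_i)/(8q)$; computing $r-\xi_1=r+\sqrt{r}+2r^{1/4}u$, $r-\xi_2=r+\sqrt{r}-2r^{1/4}u$, $r-\xi_3=r-\sqrt{r}+4r^{1/4}v$ and $r-\xi_4=r-\sqrt{r}-4r^{1/4}v$ reproduces exactly the four exponents in the statement, while the dimension $m$ and the length $(r-1)/8$ come directly from Theorem~\ref{thm3.2}.

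There is no genuine obstacle here: all of the arithmetic content—the existence of the representation $p^{k/2}=u^{2}+4v^{2}$ and the explicit shape of the roots in terms of $u$ and $v$—is already packaged inside Lemma~\ref{lem3}(a), which itself rests on the Gauss-sum and Stickelberger machinery recalled in Section~II. The only point demanding a little care is the sign bookkeeping when matching each root $\xi_i$ to its displayed exponent $r-\xi_i$; once that is carried out the result follows immediately, being simply the $N=8$ specialization of Theorem~\ref{thm4.12}.
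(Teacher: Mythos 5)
Your proposal is correct and takes essentially the same route as the paper: the paper proves Theorem~\ref{thm5.12} by citing the proof of Theorem~\ref{thm4.12}, which itself is exactly the combination of Theorem~\ref{thm3.2} with part (a) of Lemma~\ref{lem3} that you carry out. Your explicit check that $N_2=4$ divides $\frac{r-1}{p-1}$ (via $(p-1)\mid(q-1)$) and your sign bookkeeping for the roots $\xi_i$ only fill in details the paper leaves implicit.
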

\begin{proof}
This theorem follows immediately from
the proof in Theorem \ref{thm4.12}.
\end{proof}
\begin{example}
Let $~q=17$ and let $~m=4$. Then the set
$~\mathcal{C}(r,8)$
 is a $[10440
 ,4,9760]$ code over $~GF(17)$ with the weight distribution
\begin{align*}
 1+20880x^{9760}+20880x^{9800}+20880x^{9824}+20880x^{9920}
\end{align*}
\end{example}

\begin{theorem}\label{thm5.13}
Let $N=8$, $N_{2}=4$ and $p\equiv 3\pmod 4$, then
$k\equiv 0\pmod 2$ and
$\mathcal{C}(r,8)$ is a
$[(r-1)/8,m]$ two weight code with the weight distribution
\begin{align*}
1&+\frac{r-1}{4}x^{(q-1)(r+(-1)^{k/2}3r^{1/2})/8q}
\\&+
\frac{3(r-1)}{4}x^{(q-1)(r-(-1)^{k/2}r^{1/2})/8q}.
\end{align*}
\end{theorem}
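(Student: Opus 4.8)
The plan is to recognize this as an immediate application of the reduction established in Theorem~\ref{thm3.2} together with the known factorization of the quartic reduced period polynomial in Lemma~\ref{lem3}. The conceptual point, already embedded in Theorem~\ref{thm3.1}, is that the weight of $c(\beta)$ depends on $\beta$ only through $\eta^{*(N_2,r)}_{i(\beta)}$, so the entire weight distribution of $\mathcal{C}(r,N)$ is governed by the factorization of $\psi^*_{(N_2,r)}$ rather than $\psi^*_{(N,r)}$. Here $N=8$ but $N_2=4$, so the relevant polynomial is $\psi^*_{(4,r)}$, and the proof should reduce to reading off its roots.

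First I would check that Lemma~\ref{lem3} actually applies. Since $N_2=(N,\tfrac{r-1}{q-1})=4$, we have $4\mid\tfrac{r-1}{q-1}$, and because $\tfrac{q-1}{p-1}$ is a positive integer it follows that $4\mid\tfrac{r-1}{p-1}$, which is exactly the hypothesis of Lemma~\ref{lem3}. The case $p\equiv 3\pmod 4$ is part~(b), which simultaneously yields $2\mid k$ (the asserted parity condition) and the factorization
$$
\psi^*_{(4,r)}(X)=(X+(-1)^{k/2}3\sqrt{r})\bigl(X-(-1)^{k/2}\sqrt{r}\bigr)^3 .
$$

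Next I would extract the data needed for Theorem~\ref{thm3.2}: the distinct roots are $\xi_1=-(-1)^{k/2}3\sqrt{r}$ with multiplicity $e_1=1$ and $\xi_2=(-1)^{k/2}\sqrt{r}$ with multiplicity $e_2=3$, so $t=2$ and $e_1+e_2=4=N_2$ as required. Since $\sqrt{r}\neq 0$ forces $\xi_1\neq\xi_2$, there are exactly two distinct nonzero weights, confirming the \emph{two weight} claim. Substituting into the weight-distribution formula of Theorem~\ref{thm3.2} with $N=8$ gives nonzero weights $(q-1)(r-\xi_1)/8q=(q-1)(r+(-1)^{k/2}3r^{1/2})/8q$ with frequency $\tfrac{e_1(r-1)}{N_2}=\tfrac{r-1}{4}$, and $(q-1)(r-\xi_2)/8q=(q-1)(r-(-1)^{k/2}r^{1/2})/8q$ with frequency $\tfrac{3(r-1)}{4}$, which is precisely the stated distribution. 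The length $(r-1)/N=(r-1)/8$ and dimension $m$ come directly from Theorem~\ref{thm3.2}.

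There is no real obstacle here: all the analytic content (Gauss sums, Stickelberger, and the explicit evaluation of the quartic Gaussian periods when $p\equiv 3\pmod 4$) has already been packaged into Lemma~\ref{lem3}(b), and the passage from the period polynomial to the weight distribution is Theorem~\ref{thm3.2}. The only point requiring care is bookkeeping---correctly pairing each root $\xi_i$ with its multiplicity $e_i$ and tracking the sign of the $(-1)^{k/2}r^{1/2}$ term---and observing that this statement is exactly the $N=8$ specialization of Theorem~\ref{thm4.13}, so the same argument applies verbatim.
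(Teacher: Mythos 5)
Your proposal is correct and follows exactly the paper's own route: the paper proves this result by citing Theorem~\ref{thm3.2} together with Lemma~\ref{lem3}(b) (via Theorem~\ref{thm4.13}), which is precisely your argument. Your additional verification that $N_2=4$ implies $4\mid\frac{r-1}{p-1}$ (so Lemma~\ref{lem3} applies) is a detail the paper leaves implicit, but it is the same proof.
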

\begin{proof}
This theorem follows immediately from
the proof in Theorem \ref{thm4.13}.
\end{proof}
\begin{example}
Let $~q=11$ and let $~m=2$. Then the set
$~\mathcal{C}(r,8)$
 is a $[15
 ,2,10]$ code over $~GF(11)$ with the weight distribution
\begin{align*}
 1+30x^{10}+90x^{15}
\end{align*}
\end{example}

\begin{theorem}\label{thm5.13.1}
Let $N=8$, $N_{2}=8$ and $p\equiv 1\pmod 8$, then
$k\equiv 0\pmod 8$ and
$\mathcal{C}(r,8)$ is a
$[(r-1)/8,m]$  code with the weight distribution
$$
1+\frac{r-1}{8}x^{(q-1)(r-\eta_{1}^{*(8,r)})/8q}+
\cdots+
\frac{r-1}{8}x^{(q-1)(r-\eta_{8}^{*(8,r)})/8q},
$$
where $\eta_{j}^{*(8,r)}=-p^{k/2}-
p^{k/4}Q_{j,k/2}-p^{k/8}AB$ and
$A,B$ are defined as follows: If
$j$ is even, $A=Q_{j/2,k/4}$ and $B=T_{k/2}$;
if $j$ is odd, $A=(-1)^{[j/4]}Q_{0,k/4}
+(-1)^{[(j-2)/4]}P_{0,k/4}$ and $B=S_{k/2}$.
\end{theorem}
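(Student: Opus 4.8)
The plan is to obtain this theorem as a direct specialization of the general weight-distribution formula in Theorem \ref{thm3.2}, combined with the explicit factorization of the octic reduced period polynomial in part (a) of Lemma \ref{lem6}. First I would verify that the hypotheses of Lemma \ref{lem6} are met. Since $N_{2}=(8,\frac{r-1}{q-1})=8$, we have $8\mid\frac{r-1}{q-1}$; because $p-1\mid q-1$, the quotient $\frac{q-1}{p-1}$ is a positive integer, so $\frac{r-1}{q-1}\mid\frac{r-1}{p-1}$ and hence $8\mid\frac{r-1}{p-1}$. Thus the divisibility condition required by Lemma \ref{lem6} holds, and the assumption $p\equiv 1\pmod 8$ places us in case (a), which immediately yields $8\mid k$ together with the factorization $\psi_{(8,r)}^{*}(X)=\prod_{j=1}^{8}(X-\eta_{j}^{*(8,r)})$, the roots $\eta_{j}^{*(8,r)}$ being given by the stated piecewise formula for $A$ and $B$.

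Next I would feed this factorization into Theorem \ref{thm3.2} with $N=N_{2}=8$. Here $t=8$, every exponent $e_{i}=1$, and $\xi_{i}=\eta_{i}^{*(8,r)}$, so $e_{1}+\cdots+e_{t}=8=N_{2}$ as required by the hypothesis of that theorem. Theorem \ref{thm3.2} then asserts that $\mathcal{C}(r,8)$ is a $[(r-1)/8,m]$ code whose weight distribution is $1+\sum_{j=1}^{8}\frac{r-1}{8}x^{(q-1)(r-\eta_{j}^{*(8,r)})/8q}$, which is precisely the claimed expression. The underlying weight formula $wt(c(\beta))=\frac{q-1}{q}\cdot\frac{r-\eta_{i(\beta)}^{*(N_{2},r)}}{N}$ is supplied by Theorem \ref{thm3.1}. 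To justify that the eight terms really correspond to eight distinct nonzero weights, rather than collapsing into fewer, I would invoke Theorem \ref{thm3.4}: since $p\equiv 1\pmod{N_{2}}$, the code attains the maximal number $N_{2}=8$ of distinct nonzero weights, so the $\eta_{j}^{*(8,r)}$ are pairwise distinct and each frequency is exactly $(r-1)/8$.

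I do not expect a substantive obstacle here, because the analytic content has already been isolated upstream, in Theorem \ref{thm3.2} (through the reduced-period form of $wt(c(\beta))$ in Theorem \ref{thm3.1}) and in Lemma \ref{lem6}. The only points demanding care are the elementary divisibility reduction $8\mid\frac{r-1}{q-1}\Rightarrow 8\mid\frac{r-1}{p-1}$ needed to legitimately invoke Lemma \ref{lem6}, and the faithful transcription of the case split ($j$ even versus $j$ odd) in the definition of $A$ and $B$. Consequently the argument reduces to a one-line appeal: the result follows immediately from Theorem \ref{thm3.2} and part (a) of Lemma \ref{lem6}, with distinctness of the weights guaranteed by Theorem \ref{thm3.4}.
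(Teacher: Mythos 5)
Your proposal is correct and follows exactly the paper's own route: the paper proves this theorem by the one-line appeal ``From Theorem \ref{thm3.2} and (a) in Lemma \ref{lem6}, this theorem follows immediately.'' Your additional checks (the reduction $8\mid\frac{r-1}{q-1}\Rightarrow 8\mid\frac{r-1}{p-1}$ to invoke Lemma \ref{lem6}, and distinctness of the weights via Theorem \ref{thm3.4}) simply make explicit what the paper leaves implicit.
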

\begin{proof}
From Theorem \ref{thm3.2} and
(a) in Lemma \ref{lem6}, this theorem follows immediately.
\end{proof}
\begin{example}
Let $~q=17$ and let $~m=8$. Then the set
$~\mathcal{C}(r,8)$
 is a $[871969680,8,820657856]$ code over $~GF(17)$ with the weight list
\begin{align*}
 wl:=[&0,820657856, 820663680, 820666436, 820675268,\\
  &820694592, 820702148, 820704836, 820732560]
\end{align*}
 and the weight distribution frequency
 \begin{align*}
 Freq(0)&=1\\
 Freq(wl[i])&=871969680,~i=2,\cdots 9.
 \end{align*}
\end{example}

\begin{theorem}\label{thm5.13.2}
Let $N=8$, $N_{2}=8$ and $p\equiv 3\pmod 8$, then
$k\equiv 0\pmod 4$ and
$\mathcal{C}(r,8)$ is a
$[(r-1)/8,m]$  code with the weight distribution
\begin{align*}
1&+\frac{r-1}{4}x^{(q-1)(r-\xi_{1})/8q}
+\frac{r-1}{4}x^{(q-1)(r-\xi_{2})/8q}\\&
+\frac{r-1}{4}x^{(q-1)(r-\xi_{3})/8q}
+\frac{r-1}{8}x^{(q-1)(r-\xi_{4})/8q}\\
&+\frac{r-1}{8}x^{(q-1)(r-\xi_{5})/8q},
\end{align*}
where
$\xi_{1}=-2p^{k/2}S_{k/2}+p^{k/2}$, $\xi_{2}=2p^{k/2}S_{k/2}+p^{k/2}$,
$\xi_{3}=p^{k/2}$, $\xi_{4}=2p^{k/4}T_{k/2}-3p^{k/2}$ and
$\xi_{5}=-2p^{k/4}T_{k/2}-3p^{k/2}.$
\end{theorem}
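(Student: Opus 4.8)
The plan is to invoke the general weight-distribution formula of Theorem \ref{thm3.2}, which reduces the computation of the weight distribution of $\mathcal{C}(r,N)$ to the factorization of the reduced period polynomial $\psi_{(N_2,r)}^*$, and then to substitute the explicit factorization supplied by part (b) of Lemma \ref{lem6} for the case $p\equiv 3\pmod 8$. Since here $N=8$ and $N_2=8$, the relevant polynomial is precisely $\psi_{(8,r)}^*$, so no intermediate reduction is needed.

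First I would note that the hypothesis $p\equiv 3\pmod 8$ is exactly the hypothesis of Lemma \ref{lem6}(b), which simultaneously yields the divisibility $4\mid k$ asserted in the statement and the factorization
$$\psi_{(8,r)}^*(X)=(X-\xi_1)^2(X-\xi_2)^2(X-\xi_3)^2(X-\xi_4)(X-\xi_5),$$
with $\xi_1,\dots,\xi_5$ given explicitly in terms of the Gauss-sum data $S_{k/2}$, $T_{k/2}$ and the powers $p^{k/4}$, $p^{k/2}$. Thus the distinct roots are $\xi_1,\dots,\xi_5$ with multiplicities $e_1=e_2=e_3=2$ and $e_4=e_5=1$, and one checks immediately that $e_1+\cdots+e_5=2+2+2+1+1=8=N_2$, confirming that these multiplicities are compatible with the hypotheses of Theorem \ref{thm3.2}.

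Next I would read the weights and frequencies directly off Theorem \ref{thm3.2}. Each root $\xi_i$ contributes a nonzero codeword weight $(q-1)(r-\xi_i)/(Nq)=(q-1)(r-\xi_i)/(8q)$ with frequency $e_i(r-1)/N_2$. For the three double roots $\xi_1,\xi_2,\xi_3$ this frequency is $2(r-1)/8=(r-1)/4$, while for the two simple roots $\xi_4,\xi_5$ it is $(r-1)/8$. Collecting these five terms together with the zero-weight term of frequency $1$ reproduces exactly the weight distribution in the statement, and the length $(r-1)/N=(r-1)/8$ and dimension $m$ are part of the general form already recorded in Theorem \ref{thm3.2}.

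I expect no genuine obstacle: the argument is a direct specialization of Theorem \ref{thm3.2} to the factorization of Lemma \ref{lem6}(b), and all the analytic content (the explicit values of the $\xi_i$) is already encapsulated there. The only point requiring care is the bookkeeping of multiplicities, namely verifying that the exponent pattern $(2,2,2,1,1)$ sums to $N_2=8$ and that doubling the multiplicity halves the denominator in the corresponding frequency, so that the double roots acquire frequency $(r-1)/4$ and the simple roots frequency $(r-1)/8$.
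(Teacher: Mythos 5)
Your proposal is correct and follows exactly the paper's own argument: the paper proves this theorem by citing Theorem \ref{thm3.2} together with part (b) of Lemma \ref{lem6}, which is precisely your specialization. Your explicit bookkeeping of the multiplicities $(2,2,2,1,1)$ and the resulting frequencies $\frac{r-1}{4}$ and $\frac{r-1}{8}$ simply spells out what the paper leaves implicit.
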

\begin{proof}
From Theorem \ref{thm3.2} and
(b) in Lemma \ref{lem6}, this theorem follows immediately.
\end{proof}
\begin{example}
Let $~q=3$ and let $~m=4$. Then the set
$~\mathcal{C}(r,8)$
 is a $[10
 ,4,4]$ code over $~GF(3)$ with the weight distribution
\begin{align*}
 1+20x^{4}+20x^{6}+30x^{8}+10x^{10}
\end{align*}
\end{example}

\begin{theorem}\label{thm5.13.3}
Let $N=8$, $N_{2}=8$ and $p\equiv 5\pmod 8$, then
$k\equiv 0\pmod 8$ and
$\mathcal{C}(r,8)$ is a
$[(r-1)/8,m]$  code with the weight distribution
\begin{align*}
1&+\frac{r-1}{4}x^{(q-1)(r-\xi_{1})/8q}
+\frac{r-1}{4}x^{(q-1)(r-\xi_{2})/8q}\\
&+\frac{r-1}{8}x^{(q-1)(r-\xi_{3})/8q}
+\frac{r-1}{8}x^{(q-1)(r-\xi_{4})/8q}\\
&+\frac{r-1}{8}x^{(q-1)(r-\xi_{5})/8q}
+\frac{r-1}{8}x^{(q-1)(r-\xi_{6})/8q}
,
\end{align*}
where
$\xi_{1}=p^{k/2}-p^{k/4}P_{k/2}$, $\xi_{2}=p^{k/2}+p^{k/4}P_{k/2}$,
$\xi_{3}=-p^{k/2}-p^{k/4}Q_{k/2}-2p^{3k/8}Q_{k/4}$,  $\xi_{4}=-p^{k/2}-p^{k/4}Q_{k/2}+2p^{3k/8}Q_{k/4}$,  $\xi_{5}=-p^{k/2}+p^{k/4}Q_{k/2}-2p^{3k/8}P_{k/4}$ and  $\xi_{6}=-p^{k/2}+p^{k/4}Q_{k/2}+2p^{3k/8}P_{k/4}$.
\end{theorem}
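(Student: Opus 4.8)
The plan is to obtain this weight distribution as a direct specialization of the general machinery developed in Section III, exactly as was done for Theorems \ref{thm5.13.1} and \ref{thm5.13.2}. The key observation is Theorem \ref{thm3.2}: once the reduced period polynomial $\psi_{(N_2,r)}^{*}$ is factored as $(X-\xi_1)^{e_1}\cdots(X-\xi_t)^{e_t}$ with $e_1+\cdots+e_t=N_2$, the code $\mathcal{C}(r,N)$ has exactly one nonzero weight $(q-1)(r-\xi_i)/(Nq)$ for each distinct root $\xi_i$, occurring with frequency $e_i(r-1)/N_2$. Thus the entire problem reduces to supplying the explicit factorization of $\psi_{(8,r)}^{*}$ in the regime $p\equiv 5\pmod 8$.

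First I would check that the hypotheses of Lemma \ref{lem6} are met. Since $N=8$ and $N_2=(N,\frac{r-1}{q-1})=8$, we have $8\mid\frac{r-1}{q-1}$, and because $\frac{r-1}{p-1}=\frac{r-1}{q-1}\cdot\frac{q-1}{p-1}$ it follows that $8\mid\frac{r-1}{p-1}$, which is precisely the standing assumption of Lemma \ref{lem6}. I would then invoke part (c) of that lemma, valid exactly when $p\equiv 5\pmod 8$, which both supplies the congruence $k\equiv 0\pmod 8$ asserted in the statement and gives
\[
\psi_{(8,r)}^{*}(X)=(X-\xi_1)^2(X-\xi_2)^2(X-\xi_3)(X-\xi_4)(X-\xi_5)(X-\xi_6),
\]
with the six quantities $\xi_1,\dots,\xi_6$ built from $p^{k/2}$, $p^{k/4}$, $p^{3k/8}$ and the sequences $Q_{k/2},P_{k/2},Q_{k/4},P_{k/4}$ associated to $\pi=a_4+ib_4$, matching the definitions in the present statement verbatim.

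It then remains only to read off the multiplicities $e_1=e_2=2$ and $e_3=e_4=e_5=e_6=1$ and substitute them into the formula of Theorem \ref{thm3.2} with $N_2=8$. This produces frequency $2(r-1)/8=(r-1)/4$ for the two double roots $\xi_1,\xi_2$ and frequency $(r-1)/8$ for the four simple roots $\xi_3,\dots,\xi_6$, together with the zero codeword contributing the leading term $1$; the exponents $(q-1)(r-\xi_i)/(8q)$ come straight from the weight formula of Theorem \ref{thm3.1}. A quick sanity check is that the multiplicities sum correctly, $2+2+1+1+1+1=8=N_2$, so all $r-1$ nonzero codewords are accounted for and the length $(r-1)/8$ and dimension $m$ are as claimed.

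I do not expect any genuine obstacle here: all of the analytic difficulty, both the passage from codeword weights to period polynomials and the number-theoretic evaluation of the periods in terms of the integers $a_4,b_4$, has already been absorbed into Theorems \ref{thm3.1} and \ref{thm3.2} and into the cited factorization recorded in part (c) of Lemma \ref{lem6}. The only thing that must be verified directly is the bookkeeping of multiplicities and frequencies, which is routine.
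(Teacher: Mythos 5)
Your proposal is correct and follows exactly the paper's own route: the paper proves this theorem by citing Theorem \ref{thm3.2} together with part (c) of Lemma \ref{lem6}, which is precisely your argument, with the multiplicities $2,2,1,1,1,1$ yielding the frequencies $\frac{r-1}{4}$ and $\frac{r-1}{8}$. Your additional check that $N_2=8$ forces $8\mid\frac{r-1}{p-1}$ (so Lemma \ref{lem6} applies) is a detail the paper leaves implicit, but it is the same proof.
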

\begin{proof}
From Theorem \ref{thm3.2} and
(c) in Lemma \ref{lem6}, this theorem follows immediately.
\end{proof}
\begin{example}
Let $~q=5$ and let $~m=8$. Then the set
$~\mathcal{C}(r,8)$
 is a $[48828
 ,8,38880]$ code over $~GF(5)$ with the weight distribution
\begin{align*}
 1&+97656x^{38880}+48828x^{38940}+48828x^{38960}\\
 &+97656x^{39120}+48828x^{39240}+48828x^{39360}
\end{align*}
\end{example}

\begin{theorem}\label{thm5.13.4}
Let $N=8$, $N_{2}=8$ and $p\equiv 7\pmod 8$, then
$k\equiv 0\pmod 2$ and
$\mathcal{C}(r,8)$ is a
$[(r-1)/8,m]$  two weight code with the weight distribution
\begin{align*}
1&+\frac{7(r-1)}{8}x^{(q-1)(r-(-1)^{k/2}p^{k/2})/8q}\\
&+\frac{r-1}{8}x^{(q-1)(r+7(-1)^{k/2}p^{k/2})/8q}.
\end{align*}
\end{theorem}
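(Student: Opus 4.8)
The statement to prove is the case $N=8$, $N_2=8$, $p\equiv 7\pmod 8$. The whole machinery is already in place: Theorem~\ref{thm3.2} reduces the weight distribution of $\mathcal{C}(r,N)$ entirely to the factorization of the reduced period polynomial $\psi_{(N_2,r)}^{*}=\psi_{(8,r)}^{*}$, and part (d) of Lemma~\ref{lem6} supplies exactly this factorization in the regime $p\equiv 7\pmod 8$. So the plan is essentially to quote these two results in sequence and read off the weights. First I would invoke Lemma~\ref{lem6}(d), which asserts that when $p\equiv 7\pmod 8$ we have $2\mid k$ and
$$
\psi_{(8,r)}^{*}(X)=(X-(-1)^{k/2}p^{k/2})^{7}\bigl(X+7(-1)^{k/2}p^{k/2}\bigr).
$$
This already justifies the congruence condition $k\equiv 0\pmod 2$ in the theorem's hypothesis and identifies the two distinct roots $\xi_1=(-1)^{k/2}p^{k/2}$ with multiplicity $e_1=7$ and $\xi_2=-7(-1)^{k/2}p^{k/2}$ with multiplicity $e_2=1$.

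Next I would feed this factorization into Theorem~\ref{thm3.2}. With $N_2=8$, $t=2$, $(e_1,e_2)=(7,1)$, and $e_1+e_2=8=N_2$ as required, Theorem~\ref{thm3.2} gives that $\mathcal{C}(r,8)$ is a $[(r-1)/8,m]$ code whose weight distribution is
$$
1+\frac{e_1(r-1)}{N_2}x^{(q-1)(r-\xi_1)/Nq}
+\frac{e_2(r-1)}{N_2}x^{(q-1)(r-\xi_2)/Nq}.
$$
Substituting $N=8$, $N_2=8$, $(e_1,e_2)=(7,1)$, $\xi_1=(-1)^{k/2}p^{k/2}$ and $\xi_2=-7(-1)^{k/2}p^{k/2}$ yields frequencies $\tfrac{7(r-1)}{8}$ and $\tfrac{r-1}{8}$ and the two exponents $(q-1)\bigl(r-(-1)^{k/2}p^{k/2}\bigr)/8q$ and $(q-1)\bigl(r+7(-1)^{k/2}p^{k/2}\bigr)/8q$, which is precisely the claimed two-weight distribution.

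There is no genuine obstacle here, since both ingredients are already established: this is a direct specialization exactly parallel to the earlier proofs (e.g.\ Theorems~\ref{thm4.7} and \ref{thm4.13}) that conclude "from Theorem~\ref{thm3.2} and the relevant part of the lemma, this theorem follows immediately." The only point demanding any care is bookkeeping: matching the two distinct roots of $\psi_{(8,r)}^{*}$ to their multiplicities and checking that $(r-1)/N_2=(r-1)/8$ scales the multiplicities correctly so that the frequencies sum (together with the all-zero codeword) to $r$. Thus the proof reduces to the single line: combine Theorem~\ref{thm3.2} with part (d) of Lemma~\ref{lem6}.
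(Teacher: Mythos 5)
Your proof is correct, but it is not the route the paper takes for this particular statement. The paper disposes of Theorem~\ref{thm5.13.4} in one line by observing that $p\equiv 7\pmod 8$ implies $8\mid(p+1)$, so this is exactly the semi-primitive situation, and it simply cites the classical two-weight result of Baumert and McEliece \cite{BM1}. You instead stay inside the paper's own machinery: Lemma~\ref{lem6}(d) gives the factorization $\psi_{(8,r)}^{*}(X)=(X-(-1)^{k/2}p^{k/2})^{7}(X+7(-1)^{k/2}p^{k/2})$ together with $2\mid k$, and Theorem~\ref{thm3.2} then converts the roots $\xi_{1}=(-1)^{k/2}p^{k/2}$ (multiplicity $7$) and $\xi_{2}=-7(-1)^{k/2}p^{k/2}$ (multiplicity $1$) into the claimed frequencies $\tfrac{7(r-1)}{8}$ and $\tfrac{r-1}{8}$ and the two exponents $(q-1)(r\mp\cdot)/8q$; your bookkeeping here is accurate, and the hypothesis $N_{2}=8$ indeed guarantees $8\mid\frac{r-1}{q-1}\mid\frac{r-1}{p-1}$, so the lemma applies. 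The two routes trade one external ingredient for another: the paper leans on \cite{BM1}, while your argument leans on the period-polynomial factorization of \cite{Gu} quoted in Lemma~\ref{lem6}. Your version has the merit of being uniform with how the paper proves the sibling cases (Theorems~\ref{thm5.13.1}--\ref{thm5.13.3}, which all say ``from Theorem~\ref{thm3.2} and Lemma~\ref{lem6} \dots''), whereas the paper's citation is shorter and flags that this case was already classically known as a semi-primitive two-weight code.
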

\begin{proof}
This theorem is
the semi-primitive case in \cite{BM1}.
\end{proof}
\begin{example}
Let $~q=7$ and let $~m=4$. Then the set
$~\mathcal{C}(r,8)$
 is a $[300
 ,4,252]$ code over $~GF(7)$ with the weight distribution
\begin{align*}
 1+2100x^{252}+300x^{294}
\end{align*}
\end{example}

\begin{theorem}\label{thm5.14}
Let $N=12$ and $N_{2}=1$, then
$\mathcal{C}(r,12)$ is a
$[(r-1)/12,m, q^{m-1}(q-1)/12]$ code with the only nonzero weight $q^{m-1}(q-1)/12$.
\end{theorem}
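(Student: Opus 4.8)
The plan is to observe that this statement is nothing more than the specialization of the general $N_2=1$ result to the value $N=12$, so no new machinery is required. First I would invoke Theorem~\ref{thm4.1}, which asserts that whenever $N_2=1$ the code $\mathcal{C}(r,N)$ is a $[(r-1)/N,m,q^{m-1}(q-1)/N]$ code with the single nonzero weight $q^{m-1}(q-1)/N$. Substituting $N=12$ (with $N_2=1$) then yields exactly the claimed length $(r-1)/12$ and the unique nonzero weight $q^{m-1}(q-1)/12$, and the claim follows at once.

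For completeness I would also indicate the self-contained derivation straight from Theorem~\ref{thm3.1}. When $N_2=1$ there is a single cyclotomic class $C_0^{(1,r)}=GF(r)^*$, so $\eta_0^{(1,r)}=\mu(GF(r)^*)=-1$, using that $\sum_{x\in GF(r)}\mu(x)=0$ while $\mu(0)=1$. Hence the reduced period is $\eta_0^{*(1,r)}=1+1\cdot\eta_0^{(1,r)}=0$, and Theorem~\ref{thm3.1} gives, for every $\beta\in GF(r)^*$,
$$
wt(c(\beta))=\frac{(q-1)(r-\eta_0^{*(1,r)})}{qN}=\frac{(q-1)r}{12q}=\frac{q^{m-1}(q-1)}{12},
$$
where the last equality uses $r=q^m$. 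Since this value is independent of $\beta$, the code carries exactly one nonzero weight, and its length is $n=(r-1)/12$ by the definition of $\mathcal{C}(r,12)$.

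There is essentially no obstacle here: the entire content resides in the already-established general framework (Theorem~\ref{thm3.1} and its $N_2=1$ instance, Theorem~\ref{thm4.1}), exactly as in the parallel cases $N=5,6,8$ handled in Theorems~\ref{thm5.1}, \ref{thm5.4}, and~\ref{thm5.10}. The only things to verify are the numerical substitution $N=12$ and the elementary evaluation $\eta_0^{*(1,r)}=0$, both of which are immediate.
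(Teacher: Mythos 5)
Your proposal is correct and follows essentially the same route as the paper: the paper's proof is precisely the one-line appeal to Ding's $N_2=1$ result (Theorem~\ref{thm4.1}), which is your first paragraph. Your supplementary derivation via Theorem~\ref{thm3.1} (computing $\eta_0^{(1,r)}=-1$, hence $\eta_0^{*(1,r)}=0$, so $wt(c(\beta))=(q-1)r/(12q)=q^{m-1}(q-1)/12$ for all $\beta\neq 0$) is a correct bonus that makes the citation self-contained, but it adds nothing the paper's framework does not already provide.
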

\begin{proof}
This theorem follows immediately from
Ding \cite{Ding2} for the case $N_{2}=1$.
\end{proof}
\begin{example}
Let $~q=13$ and let $~m=5$. Then the set
$~\mathcal{C}(r,12)$
 is a $[30941
 ,5,28561]$ code over $~GF(13)$ with the weight distribution
\begin{align*}
 1+371292x^{28561}
\end{align*}
\end{example}

\begin{theorem}\label{thm5.15}
Let $N=12$ and $N_{2}=2$, then
$\mathcal{C}(r,12)$ is a
$[(r-1)/12,m, (q-1)(r-r^{1/2})/12q]$ two weight code with the weight distribution
$$
1+\frac{r-1}{2}x^{(q-1)(r-r^{1/2})/12q}+
\frac{r-1}{2}x^{(q-1)(r+r^{1/2})/12q}.
$$
\end{theorem}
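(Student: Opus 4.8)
The plan is to recognize that this is merely the specialization to $N=12$ of the general $N_2=2$ weight distribution already recorded in Theorem \ref{thm4.5} (Ding's result \cite{Ding2}). That theorem holds for arbitrary $N$ subject to $N_2=2$ and asserts exactly the length $(r-1)/N$, minimum distance $(q-1)(r-r^{1/2})/Nq$, and two-weight distribution $1+\frac{r-1}{2}x^{(q-1)(r-r^{1/2})/Nq}+\frac{r-1}{2}x^{(q-1)(r+r^{1/2})/Nq}$. Setting $N=12$ reproduces the present statement verbatim, so the cleanest write-up is a one-line invocation of Theorem \ref{thm4.5}, exactly as was done for the analogous cases $N=6$ (Theorem \ref{thm5.5}) and $N=8$ (Theorem \ref{thm5.11}).

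If instead I want a self-contained derivation, the route runs through Theorem \ref{thm3.2} together with the factorization Lemma \ref{lem1}. First I would check the hypothesis of Lemma \ref{lem1}: the assumption $N_2=(N,\frac{r-1}{q-1})=2$ forces $2\mid\frac{r-1}{q-1}$, and since $\frac{r-1}{p-1}=\frac{r-1}{q-1}\cdot\frac{q-1}{p-1}$ with $\frac{q-1}{p-1}$ an integer, this gives $2\mid\frac{r-1}{p-1}$, which is precisely what Lemma \ref{lem1} requires. That lemma then yields $\psi_{(2,r)}^{*}(X)=(X+\sqrt{r})(X-\sqrt{r})$, so the reduced period polynomial has the two simple roots $\xi_1=-\sqrt{r}$ and $\xi_2=\sqrt{r}$ with multiplicities $e_1=e_2=1$ and $e_1+e_2=N_2=2$.

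Next I would feed this factorization into Theorem \ref{thm3.2} with $N=12$. The theorem states that $\mathcal{C}(r,12)$ is a $[(r-1)/12,m]$ code with weight distribution $1+\frac{e_1(r-1)}{N_2}x^{(q-1)(r-\xi_1)/12q}+\frac{e_2(r-1)}{N_2}x^{(q-1)(r-\xi_2)/12q}$. Substituting $\xi_1=-\sqrt{r}$, $\xi_2=\sqrt{r}$, $e_1=e_2=1$, and $N_2=2$ produces the two weights $(q-1)(r+\sqrt{r})/12q$ and $(q-1)(r-\sqrt{r})/12q$, each with frequency $\frac{r-1}{2}$, which is the claimed distribution; the smaller value $(q-1)(r-r^{1/2})/12q$ is the minimum distance.

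There is essentially no obstacle here: both routes are mechanical substitutions into results already established in the paper. The only point that requires a moment's care is the verification that the divisibility hypothesis $2\mid\frac{r-1}{p-1}$ of Lemma \ref{lem1} is implied by $N_2=2$, and this follows immediately from the factorization $\frac{r-1}{p-1}=\frac{r-1}{q-1}\cdot\frac{q-1}{p-1}$ displayed above. Given that the identical reduction has already been invoked for $N=6$ and $N=8$, I would present the proof simply as a consequence of Theorem \ref{thm4.5}.
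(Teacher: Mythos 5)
Your proposal is correct and follows exactly the paper's route: the paper also proves this theorem by a one-line invocation of Ding's result for the case $N_{2}=2$ (Theorem \ref{thm4.5}), just as for $N=6$ and $N=8$. Your additional self-contained derivation via Theorem \ref{thm3.2} and Lemma \ref{lem1}, including the check that $N_{2}=2$ implies $2\mid\frac{r-1}{p-1}$, is sound but not needed.
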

\begin{proof}
This theorem follows immediately from
Ding \cite{Ding2} for the case $N_{2}=2$.
\end{proof}
\begin{example}
Let $~q=5^2$ and let $~m=2$. Then the set
$~\mathcal{C}(r,12)$
 is a $[52
 ,2,48]$ code over $~GF(5^2)$ with the weight distribution
\begin{align*}
 1+312x^{48}+312x^{52}
\end{align*}
\end{example}

\begin{theorem}\label{thm5.16}
Let $N=12$, $N_{2}=3$ and $p\equiv 1 \pmod 3$, then
$k\equiv 0\pmod 3$ and
$\mathcal{C}(r,12)$ is a
$[(r-1)/12,m]$ code with the weight distribution
\begin{align*}
1&+\frac{r-1}{3}x^{(q-1)(r-cr^{1/3})/12q}+
\frac{r-1}{3}x^{(q-1)(r+\frac{1}{2}(c+9d)r^{1/3})
/12q}
\\&+
\frac{r-1}{3}x^{(q-1)(r+\frac{1}{2}(c-9d)r^{1/3})/12q},
\end{align*}
where $c$ and $d$ satisfy
$4p^{k/3}=c^2+27d^2$, $u\equiv 1\pmod 3$ and
$gcd(c,p)=1$.
\end{theorem}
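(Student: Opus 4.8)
The plan is to reduce the entire problem to the factorization of the reduced period polynomial $\psi_{(N_2,r)}^{*}$, exactly as in the proof of Theorem \ref{thm4.6}. By Theorem \ref{thm3.1}, the Hamming weight of a codeword $c(\beta)$ depends only on the reduced period $\eta_{i(\beta)}^{*(N_2,r)}$; and since here $N=12$ together with the hypothesis gives $N_2=3$, Theorem \ref{thm3.2} shows that the weight distribution of $\mathcal{C}(r,12)$ is completely determined once we know how $\psi_{(3,r)}^{*}$ factors. Thus the value $N=12$ enters only through the scaling factor $\frac{1}{N}$ in the weight formula, while the shape of the distribution is controlled by $N_2=3$.

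First I would verify that Lemma \ref{lem2} is applicable. Since $N_2=(12,\frac{r-1}{q-1})=3$, we have $3\mid\frac{r-1}{q-1}$, and because $\frac{r-1}{p-1}=\frac{r-1}{q-1}\cdot\frac{q-1}{p-1}$ with $\frac{q-1}{p-1}$ an integer, it follows that $3\mid\frac{r-1}{p-1}$, which is precisely the hypothesis of Lemma \ref{lem2}. With $p\equiv 1\pmod 3$, part (a) of that lemma then yields $3\mid k$ together with the factorization of $\psi_{(3,r)}^{*}$ into three distinct linear factors having roots $\xi_1=cr^{1/3}$, $\xi_2=-\frac{1}{2}(c+9d)r^{1/3}$ and $\xi_3=-\frac{1}{2}(c-9d)r^{1/3}$, each of multiplicity $e_i=1$, where $c,d$ satisfy $4p^{k/3}=c^2+27d^2$, $c\equiv 1\pmod 3$ and $\gcd(c,p)=1$.

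Finally I would substitute these data into Theorem \ref{thm3.2} with $N=12$, $N_2=3$ and $e_1=e_2=e_3=1$. The frequencies become $\frac{e_i(r-1)}{N_2}=\frac{r-1}{3}$, and the exponents $(q-1)(r-\xi_i)/(Nq)$ reproduce $(q-1)(r-cr^{1/3})/12q$, $(q-1)(r+\frac{1}{2}(c+9d)r^{1/3})/12q$ and $(q-1)(r+\frac{1}{2}(c-9d)r^{1/3})/12q$, which is exactly the asserted weight distribution. Since the argument is an immediate specialization, there is no genuine obstacle beyond the single divisibility check above; this is the same computation already carried out in the proofs of Theorem \ref{thm4.6} and Theorem \ref{thm5.6}, the only difference being the value of $N$ appearing in the denominator of the weight.
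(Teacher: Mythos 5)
Your proposal is correct and follows essentially the same route as the paper: the paper proves this theorem by citing Theorem \ref{thm4.6}, whose own proof is exactly the combination of Theorem \ref{thm3.2} with part (a) of Lemma \ref{lem2} that you spell out. Your explicit check that $N_2=3$ forces $3\mid\frac{r-1}{p-1}$ (so Lemma \ref{lem2} applies) is a detail the paper leaves implicit, but it does not change the argument.
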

\begin{proof}
This theorem follows immediately from
Theorem \ref{thm4.6}.
\end{proof}
\begin{example}
Let $~q=13$ and let $~m=3$. Then the set
$~\mathcal{C}(r,12)$
 is a $[183
 ,3,162]$ code over $~GF(13)$ with the weight distribution
\begin{align*}
 1+732x^{162}+732x^{171}+732x^{174}
\end{align*}
\end{example}

\begin{theorem}\label{thm5.17}
Let $N=12$, $N_{2}=3$ and $p\equiv 2\pmod 3$, then
$k\equiv 0\pmod 2$ and
$\mathcal{C}(r,12)$ is a
$[(r-1)/12,m]$ two weight code with the weight distribution
\begin{align*}
1&+\frac{r-1}{3}x^{(q-1)(r+(-1)^{k/2}2r^{1/2})/12q}
\\&+
\frac{2(r-1)}{3}x^{(q-1)(r-(-1)^{k/2}r^{1/2})/12q}.
\end{align*}
\end{theorem}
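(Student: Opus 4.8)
The plan is to recognize that, by Theorem \ref{thm3.2}, the weight distribution of $\mathcal{C}(r,N)$ is determined entirely by the factorization of the reduced period polynomial $\psi_{(N_2,r)}^*$, with the parameter $N$ entering only through the scaling factor $(q-1)/(Nq)$ in the weight exponents. The present hypotheses $N_2=3$ and $p\equiv 2\pmod 3$ thus reduce the computation to exactly the situation already handled in Theorem \ref{thm4.7}, the sole difference being that $N=12$ replaces a generic $N$.

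First I would verify that Lemma \ref{lem2} is applicable. Since $N_2=(12,\frac{r-1}{q-1})=3$, we have $3\mid\frac{r-1}{q-1}$, and writing $\frac{r-1}{p-1}=\frac{r-1}{q-1}\cdot\frac{q-1}{p-1}$ shows that $3\mid\frac{r-1}{p-1}$, which is precisely the hypothesis of Lemma \ref{lem2}. Invoking part (b) of that lemma with $p\equiv 2\pmod 3$ then yields both $2\mid k$ and the explicit factorization
$$\psi_{(3,r)}^*(X)=\bigl(X+(-1)^{k/2}2\sqrt{r}\bigr)\bigl(X-(-1)^{k/2}\sqrt{r}\bigr)^2.$$

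Next I would substitute this factorization into Theorem \ref{thm3.2} with $N=12$ and $N_2=3$, identifying the two distinct roots $\xi_1=-(-1)^{k/2}2\sqrt{r}$ of multiplicity $e_1=1$ and $\xi_2=(-1)^{k/2}\sqrt{r}$ of multiplicity $e_2=2$ (note $e_1+e_2=3=N_2$, as required). Theorem \ref{thm3.2} then delivers the frequencies $\frac{e_1(r-1)}{3}=\frac{r-1}{3}$ and $\frac{e_2(r-1)}{3}=\frac{2(r-1)}{3}$, together with the weight exponents $(q-1)(r-\xi_1)/(12q)$ and $(q-1)(r-\xi_2)/(12q)$, which equal $(q-1)(r+(-1)^{k/2}2\sqrt{r})/(12q)$ and $(q-1)(r-(-1)^{k/2}\sqrt{r})/(12q)$ respectively; these are exactly the two stated nonzero weights, and the code parameters $[(r-1)/12,m]$ come along from the same theorem. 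Since all the machinery is already established, no real difficulty arises; the only points demanding attention are the elementary divisibility check that legitimizes the use of Lemma \ref{lem2} and the correct pairing of each root of $\psi_{(3,r)}^*$ with its multiplicity when applying Theorem \ref{thm3.2}.
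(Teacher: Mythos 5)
Your proof is correct and takes essentially the same route as the paper: the paper's proof simply cites Theorem~\ref{thm4.7}, whose own proof is exactly your combination of Lemma~\ref{lem2}(b) and Theorem~\ref{thm3.2}, so you have merely inlined one level of citation (and added the worthwhile check that $N_2=3$ forces $3\mid\frac{r-1}{p-1}$, legitimizing Lemma~\ref{lem2}). One bonus of your direct derivation: it confirms the second exponent $(q-1)(r-(-1)^{k/2}r^{1/2})/12q$ exactly as stated here, whereas the printed statement of Theorem~\ref{thm4.7} carries a stray factor $2$ in that exponent (a typo, inconsistent with Lemma~\ref{lem2}(b)), which a blind appeal to Theorem~\ref{thm4.7} would have inherited.
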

\begin{proof}
This theorem follows immediately from
Theorem \ref{thm4.7}.
\end{proof}
\begin{example}
Let $~q=5^2$ and let $~m=3$. Then the set
$~\mathcal{C}(r,12)$
 is a $[1302
 ,3,1230]$ code over $~GF(5^2)$ with the weight distribution
\begin{align*}
 1+5202x^{1230}+10416x^{1260}
\end{align*}
\end{example}

\begin{theorem}\label{thm5.18}
Let $N=12$, $N_{2}=4$ and $p\equiv 1 \pmod 4$, then
$k\equiv 0\pmod 4$ and
$\mathcal{C}(r,12)$ is a
$[(r-1)/12,m]$ code with the weight distribution
\begin{align*}
1&+\frac{r-1}{4}x^{(q-1)(r+\sqrt{r}+2r^{1/4}u)/12q}\\&
+
\frac{r-1}{4}x^{(q-1)(r+\sqrt{r}-2r^{1/4}u)/12q}\\
&+
\frac{r-1}{4}x^{(q-1)(r-\sqrt{r}+4r^{1/4}v)/12q}\\&+
\frac{r-1}{4}x^{(q-1)(r-\sqrt{r}-4r^{1/4}v)/12q},
\end{align*}
where $u$ and $v$ satisfy
$p^{k/2}=u^2+4v^2$, $u\equiv 1\pmod 4$ and
$gcd(u,p)=1$.
\end{theorem}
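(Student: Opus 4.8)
The plan is to derive this distribution directly from Theorem~\ref{thm3.2}, which expresses the entire weight distribution of $\mathcal{C}(r,N)$ in terms of the factorization of the reduced period polynomial $\psi_{(N_2,r)}^{*}$. Since the hypotheses here are exactly $N_2=4$ and $p\equiv 1\pmod 4$ --- the same as in Theorem~\ref{thm4.12} --- the present claim is simply the specialization of that theorem to $N=12$; I would either cite Theorem~\ref{thm4.12} directly or reproduce its short derivation, the two routes being equivalent. The structural point, visible in the weight formula $wt(c(\beta))=(q-1)(r-\eta_{i(\beta)}^{*(N_2,r)})/(Nq)$ of Theorem~\ref{thm3.1}, is that $N$ enters only through the overall factor $1/N$, so all the combinatorial content is governed by $N_2=4$.

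First I would invoke Lemma~\ref{lem3}(a), which applies precisely because $p\equiv 1\pmod 4$. This lemma simultaneously yields the assertion $4\mid k$ and the explicit factorization of $\psi_{(4,r)}^{*}(X)$ into four distinct linear factors, whose roots are $\xi_1=-\sqrt{r}-2r^{1/4}u$, $\xi_2=-\sqrt{r}+2r^{1/4}u$, $\xi_3=\sqrt{r}-4r^{1/4}v$ and $\xi_4=\sqrt{r}+4r^{1/4}v$, where $p^{k/2}=u^2+4v^2$, $u\equiv 1\pmod 4$ and $\gcd(u,p)=1$. Substituting $N_2=4$ and each $\xi_i$ into the exponent $(q-1)(r-\xi_i)/(12q)$ supplied by Theorem~\ref{thm3.2} then reproduces exactly the four stated weights, after the sign cancellations $r-(-\sqrt{r}\mp 2r^{1/4}u)=r+\sqrt{r}\pm 2r^{1/4}u$ and $r-(\sqrt{r}\pm 4r^{1/4}v)=r-\sqrt{r}\mp 4r^{1/4}v$.

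To finish, I would confirm that each multiplicity is $e_i=1$, so that every nonzero weight occurs with frequency $e_i(r-1)/N_2=(r-1)/4$, and that there are genuinely four distinct nonzero weights. This is where Theorem~\ref{thm3.4} enters: since $p\equiv 1\pmod{N_2}$ with $N_2=4$, the code attains the maximal number $N_2=4$ of distinct nonzero weights, which guarantees that the four roots $\xi_i$ are pairwise distinct and each contributes a simple factor. The length $(r-1)/N=(r-1)/12$ and the dimension $m$ are inherited directly from the standing description of $\mathcal{C}(r,N)$.

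The proof carries no serious obstacle, since the genuinely deep input --- the explicit evaluation of the quartic period polynomial through Gauss sums and the Stickelberger relation --- is already packaged in Lemma~\ref{lem3}(a), which I am free to cite. The only point requiring care is the bookkeeping of signs when matching the four roots $\xi_i$ to the exponents $(q-1)(r-\xi_i)/(12q)$, together with the check, via Theorem~\ref{thm3.4}, that no two of these exponents coincide so that each frequency is indeed $(r-1)/4$.
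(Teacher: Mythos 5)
Your proposal is correct and follows essentially the same route as the paper: the paper proves this theorem by citing Theorem~\ref{thm4.12}, which in turn follows from Theorem~\ref{thm3.2} together with Lemma~\ref{lem3}(a), exactly the chain you describe. Your additional appeal to Theorem~\ref{thm3.4} to certify that the four weights are pairwise distinct is a small extra verification the paper leaves implicit, not a different argument.
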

\begin{proof}
This theorem follows immediately from
Theorem \ref{thm4.12}.
\end{proof}
\begin{example}
Let $q=13$ and let $m=4$. Then the set
$~\mathcal{C}(r,12)$
 is a $[2380
 ,4,2160]$ code over $~GF(13)$ with the weight distribution
\begin{align*}
 1+7140x^{2160}+7140x^{2200}+7140x^{2208}+7140x^{2220}
\end{align*}
\end{example}

\begin{theorem}\label{thm5.19}
Let $N=12$, $N_{2}=4$ and $p\equiv 3\pmod 4$, then
$k\equiv 0\pmod 2$ and
$\mathcal{C}(r,12)$ is a
$[(r-1)/12,m]$ two weight code with the weight distribution
\begin{align*}
1&+\frac{r-1}{4}x^{(q-1)(r+(-1)^{k/2}3r^{1/2})/12q}\\
&+
\frac{3(r-1)}{4}x^{(q-1)(r-(-1)^{k/2}r^{1/2})/12q}.
\end{align*}
\end{theorem}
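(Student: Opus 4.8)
The plan is to recognize this as the special case $N=12$ of Theorem \ref{thm4.13}, whose conclusion is already stated for an arbitrary $N$ satisfying $N_2=4$ together with $p\equiv 3\pmod 4$. Since the hypotheses imposed here are precisely those of Theorem \ref{thm4.13}, substituting $N=12$ into its weight distribution immediately produces the claimed formula, and likewise the conclusion $2\mid k$ carries over. At the top level the proof is therefore a one-line appeal to Theorem \ref{thm4.13}.

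To make the argument self-contained, I would instead retrace the route that Theorem \ref{thm4.13} took. By Theorem \ref{thm3.1} the weight of each codeword $c(\beta)$ depends only on the reduced period $\eta_{i(\beta)}^{*(N_2,r)}$, so by Theorem \ref{thm3.2} the full weight distribution is governed by the factorization of $\psi_{(4,r)}^{*}(X)$ (recall $N_2=4$). Because $p\equiv 3\pmod 4$, I would invoke part (b) of Lemma \ref{lem3}, which records that $2\mid k$ and
\[
\psi_{(4,r)}^{*}(X)=(X+(-1)^{k/2}3\sqrt{r})(X-(-1)^{k/2}\sqrt{r})^3.
\]

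Next I would read off the data required by Theorem \ref{thm3.2}: there are $t=2$ distinct roots, namely $\xi_1=-(-1)^{k/2}3\sqrt{r}$ with multiplicity $e_1=1$ and $\xi_2=(-1)^{k/2}\sqrt{r}$ with multiplicity $e_2=3$, so that $e_1+e_2=4=N_2$ as required. Feeding $\xi_1,\xi_2,e_1,e_2$ and $N=12$ into the formula of Theorem \ref{thm3.2} yields the two nonzero weights $(q-1)(r-\xi_1)/12q=(q-1)(r+(-1)^{k/2}3\sqrt{r})/12q$ and $(q-1)(r-\xi_2)/12q=(q-1)(r-(-1)^{k/2}\sqrt{r})/12q$, with respective frequencies $\tfrac{r-1}{4}$ and $\tfrac{3(r-1)}{4}$. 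Since $\xi_1\neq\xi_2$, the code genuinely has exactly two nonzero weights, giving the stated distribution.

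There is no real obstacle here: all of the number-theoretic substance resides in the factorization of Lemma \ref{lem3}(b), which I am entitled to assume, while Theorem \ref{thm3.2} mechanically converts that factorization into the weight distribution. The only point demanding care is the sign bookkeeping, namely matching each $r-\xi_i$ against the corresponding $(-1)^{k/2}$ term and keeping the multiplicities $1$ and $3$ attached to the correct weights; this is routine and involves no new idea.
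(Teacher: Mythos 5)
Your proposal is correct and matches the paper's own proof, which is exactly the one-line appeal to Theorem \ref{thm4.13} (itself proved by combining Theorem \ref{thm3.2} with Lemma \ref{lem3}(b), the route you retrace). Your expanded version—extracting the roots $-(-1)^{k/2}3\sqrt{r}$ (multiplicity $1$) and $(-1)^{k/2}\sqrt{r}$ (multiplicity $3$) of $\psi_{(4,r)}^{*}$ and feeding them into Theorem \ref{thm3.2} with $N=12$—is just the paper's argument written out in full, with the sign and multiplicity bookkeeping done correctly.
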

\begin{proof}
This theorem follows immediately from
Theorem \ref{thm4.13}.
\end{proof}
\begin{example}
Let $~q=7$ and let $~m=2$. Then the set
$~\mathcal{C}(r,12)$
 is a $[4
 ,2,2]$ code over $~GF(7)$ with the weight distribution
\begin{align*}
 1+12x^{2}+36x^{4}
\end{align*}
\end{example}

\begin{theorem}\label{thm5.20}
Let $N=12$, $N_{2}=6$ and $p\equiv 1\pmod 6$, then
$k\equiv 0\pmod 6$ and
$\mathcal{C}(r,12)$ is a
$[(r-1)/12,m]$ code with the weight distribution
$$
1+\frac{r-1}{6}x^{(q-1)(r-\eta_{1}^{*(6,r)})/12q}+
\cdots+
\frac{r-1}{6}x^{(q-1)(r-\eta_{6}^{*(6,r)})/12q}.
$$
\end{theorem}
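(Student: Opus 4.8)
The plan is to reduce Theorem~\ref{thm5.20} to the already-established machinery rather than recompute anything from scratch. The key observation is that by Theorem~\ref{thm3.2}, the entire weight distribution of $\mathcal{C}(r,N)$ depends on $N$ only through the global scaling factor $(q-1)/(Nq)$ in the exponents; the actual roots $\xi_i$ of the reduced period polynomial and the multiplicities $e_i$ are governed solely by $N_2$. Here $N=12$ and $N_2=6$, so the relevant factorization is that of $\psi_{(6,r)}^{*}(X)$, exactly the object described in Lemma~\ref{lem5}.

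First I would invoke the hypothesis $p\equiv 1\pmod 6$, which via part~(a) of Lemma~\ref{lem5} forces $6\mid k$ and yields the factorization $\psi_{(6,r)}^{*}(X)=(X-\eta_{1}^{*(6,r)})\cdots(X-\eta_{6}^{*(6,r)})$ into six (generically distinct) linear factors. Since $p\equiv 1\pmod 6$ implies $p\equiv 1\pmod{N_2}$ with $N_2=6$, Theorem~\ref{thm3.4} guarantees that the number of distinct nonzero weights attains the maximum $N_2=6$, so each of the six roots $\eta_{j}^{*(6,r)}$ contributes a genuinely distinct weight with multiplicity $e_j=1$. Next I would feed this factorization directly into Theorem~\ref{thm3.2}, reading off that $\mathcal{C}(r,N)$ is an $[(r-1)/N,m]$ code and that each root $\xi_j=\eta_{j}^{*(6,r)}$ produces a codeword-weight $(q-1)(r-\eta_{j}^{*(6,r)})/(Nq)$ occurring $\tfrac{e_j(r-1)}{N_2}=\tfrac{r-1}{6}$ times. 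Substituting $N=12$ gives precisely the six summands $\tfrac{r-1}{6}\,x^{(q-1)(r-\eta_{j}^{*(6,r)})/12q}$ for $j=1,\dots,6$, together with the zero codeword.

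The proof is therefore essentially a one-line application of Theorem~\ref{thm3.2} combined with part~(a) of Lemma~\ref{lem5}, exactly as in the analogous proofs of Theorem~\ref{thm5.8} (the $N=6$, $N_2=6$ case) and Theorem~\ref{thm4.6}. The only point requiring a word of care is the value $N_2=6$ while $N=12$: one must confirm that $N_2=(N,\tfrac{r-1}{q-1})=6$ indeed selects $\psi_{(6,r)}^{*}$ rather than $\psi_{(12,r)}^{*}$, and that the exponent denominator correctly uses $N=12$ rather than $N_2=6$. This is immediate from the definitions and from the structure of Theorem~\ref{thm3.2}, in which $N$ and $N_2$ play their two distinct roles. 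I do not anticipate any genuine obstacle; the substantive content has already been carried out in establishing Theorem~\ref{thm3.2} and Lemma~\ref{lem5}, and the present statement is a direct specialization.
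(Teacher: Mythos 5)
Your proposal is correct and matches the paper's own proof, which is exactly the one-line application of Theorem~\ref{thm3.2} together with part~(a) of Lemma~\ref{lem5}. The extra appeal to Theorem~\ref{thm3.4} for distinctness of the six weights is harmless but unnecessary, since Lemma~\ref{lem5}(a) already supplies the factorization with each $e_j=1$, which is all Theorem~\ref{thm3.2} needs.
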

\begin{proof}
From Theorem \ref{thm3.2} and
(a) in Lemma \ref{lem5}, this theorem follows
immediately.
\end{proof}
\begin{example}
Let $q=13$ and let $m=6$. Then the set
$~\mathcal{C}(r,12)$
 is a $[402234,6,370692]$ code over $~GF(13)$ with the weight list
\begin{align*}
 wl:=[0,370692, 371112, 371232, 371322, 371448, 371952]
\end{align*}
 and the weight distribution frequency
 \begin{align*}
 Freq(0)&=1\\
 Freq(wl[i])&=804468,~i=2,\cdots 7.
 \end{align*}
\end{example}

\begin{theorem}\label{thm5.21}
Let $N=12$, $N_{2}=6$ and $p\equiv 5\pmod 6$, then
$\mathcal{C}(r,12)$ is a
$[(r-1)/12,m]$ two weight code with the weight distribution
\begin{align*}
1&+\frac{5(r-1)}{6}x^{(q-1)(r-(-1)^{k/2}p^{k/2})/12q}\\
&+
\frac{r-1}{6}x^{(q-1)(r+5(-1)^{k/2}p^{k/2})/12q}.
\end{align*}
\end{theorem}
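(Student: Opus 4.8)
The plan is to reduce everything to the machinery already established in Theorem \ref{thm3.2}, which expresses the weight distribution of $\mathcal{C}(r,N)$ directly in terms of the factorization of the reduced period polynomial $\psi_{(N_2,r)}^*(X)$. Here the relevant modulus is $N_2 = 6$, and the hypothesis $p \equiv 5 \pmod 6$ places us in part (b) of Lemma \ref{lem5}. Note first that $p \equiv 5 \pmod 6$ forces $\mathrm{ord}_6(p) = 2$, and since $6 \mid \frac{r-1}{q-1}$ we get $6 \mid r - 1 = p^k - 1$, so $k$ is even and the symbol $(-1)^{k/2}$ is well defined.

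First I would invoke Lemma \ref{lem5}(b) to obtain the explicit factorization
\[
\psi_{(6,r)}^*(X) = \bigl(X - (-1)^{k/2} p^{k/2}\bigr)^5 \bigl(X + 5(-1)^{k/2} p^{k/2}\bigr),
\]
and read off the two distinct roots $\xi_1 = (-1)^{k/2} p^{k/2}$ with multiplicity $e_1 = 5$ and $\xi_2 = -5(-1)^{k/2} p^{k/2}$ with multiplicity $e_2 = 1$, checking that $e_1 + e_2 = 6 = N_2$ as Theorem \ref{thm3.2} demands. Next I would substitute these data into the formula of Theorem \ref{thm3.2} with $N = 12$: the multiplicities $e_1 = 5$ and $e_2 = 1$ yield the frequencies $\frac{5(r-1)}{6}$ and $\frac{r-1}{6}$, while the exponents $(q-1)(r-\xi_i)/12q$ specialize to $(q-1)(r - (-1)^{k/2} p^{k/2})/12q$ and $(q-1)(r + 5(-1)^{k/2} p^{k/2})/12q$. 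Since $r = p^k$ we have $r^{1/2} = p^{k/2}$, so the roots, expressed above in terms of $p^{k/2}$, are consistent with the $r$-normalization used in the statement. Because exactly two distinct nonzero weights arise, $\mathcal{C}(r,12)$ is a two weight code; equivalently this is the semi-primitive situation of \cite{BM1}.

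I do not expect any genuine obstacle, since the substance of the theorem is already encapsulated in Lemma \ref{lem5}(b) and Theorem \ref{thm3.2}, and the proof is merely their specialization. The only point requiring care is the bookkeeping: matching each multiplicity $e_i$ to the correct frequency $\frac{e_i(r-1)}{N_2}$ and tracking the sign factor $(-1)^{k/2}$ faithfully through the substitution into the exponent $(q-1)(r-\xi_i)/12q$, together with the observation that the single root of multiplicity one carries the smaller frequency $\frac{r-1}{6}$ and the larger shift $+5(-1)^{k/2}p^{k/2}$, while the repeated root of multiplicity five carries the frequency $\frac{5(r-1)}{6}$.
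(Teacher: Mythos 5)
Your proof is correct, but it is not the route the paper takes for this particular theorem. The paper disposes of Theorem \ref{thm5.21} in one line by observing that $p\equiv 5\equiv -1\pmod 6$ makes this the semi-primitive case and citing Baumert--McEliece \cite{BM1} for the resulting two-weight distribution. You instead stay entirely inside the paper's own machinery: Lemma \ref{lem5}(b) gives the factorization $\psi_{(6,r)}^{*}(X)=(X-(-1)^{k/2}p^{k/2})^{5}(X+5(-1)^{k/2}p^{k/2})$, and Theorem \ref{thm3.2} with $N=12$, $N_{2}=6$ converts the roots $\xi_{1}=(-1)^{k/2}p^{k/2}$ (multiplicity $5$) and $\xi_{2}=-5(-1)^{k/2}p^{k/2}$ (multiplicity $1$) into the frequencies $\tfrac{5(r-1)}{6}$, $\tfrac{r-1}{6}$ and the exponents $(q-1)(r\mp\cdot)/12q$, exactly matching the claimed distribution; your bookkeeping here is accurate, including the verification that $k$ is even so $(-1)^{k/2}$ makes sense. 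This is in fact the same pattern the paper uses for the neighboring cases (e.g.\ Theorems \ref{thm5.8}, \ref{thm5.20}, \ref{thm5.13.2}), so your argument is arguably more uniform and self-contained, at the cost of relying on Gurak's sextic period-polynomial factorization (Lemma \ref{lem5}(b)), which is heavier input than the classical semi-primitive result; the paper's citation is shorter but external. One small point of care if you wanted to be fully rigorous: Lemma \ref{lem5} is stated under the hypothesis $6\mid\frac{r-1}{p-1}$, which you should note follows from $N_{2}=6$, i.e.\ $6\mid\frac{r-1}{q-1}$, together with $\frac{r-1}{p-1}=\frac{r-1}{q-1}\cdot\frac{q-1}{p-1}$.
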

\begin{proof}
This theorem is
the semi-primitive case in \cite{BM1}.
\end{proof}
\begin{example}
Let $~q=17$ and let $~m=2$. Then the set
$~\mathcal{C}(r,12)$
 is a $[24
 ,2,16]$ code over $~GF(17)$ with the weight distribution
\begin{align*}
 1+48x^{16}+240x^{24}
\end{align*}
\end{example}

\begin{theorem}\label{thm5.21.1}
Let $N=12$, $N_{2}=12$ and $p\equiv 1\pmod {12}$, then $k\equiv 0\pmod {12}$ and
$\mathcal{C}(r,12)$ is a
$[(r-1)/12,m]$  code with the weight distribution
$$
1+\frac{r-1}{12}x^{(q-1)(r-\eta_{1}^{*(12,r)})/12q}+
\cdots
\frac{r-1}{12}x^{(q-1)(r-\eta_{12}^{*(12,r)})/12q},
$$
where $\eta_{j}^{*(12,r)}=-p^{k/12}Q_{j,k/2}V_{-j,k/3}
-p^{k/4}Q_{j,k/2}
-p^{k/6}V_{j,2k/3}-p^{k/3}V_{2j,k/3}
-(-1)^jp^{k/2}$.
\end{theorem}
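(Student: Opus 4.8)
The plan is to derive this directly from the general weight-distribution formula of Theorem~\ref{thm3.2} together with part~(a) of Lemma~\ref{lem7}, following the same template used for the earlier theorems in this section (for instance Theorems~\ref{thm5.13.1} and~\ref{thm5.20}). The hypotheses $N=N_2=12$ and $p\equiv 1\pmod{12}$ are exactly those of Lemma~\ref{lem7}(a), so I would begin by invoking that lemma to obtain both the divisibility $12\mid k$ and the complete factorization of the reduced period polynomial $\psi_{(12,r)}^{*}(X)=\prod_{j=1}^{12}(X-\eta_j^{*(12,r)})$, with the stated closed form for each root $\eta_j^{*(12,r)}$.

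Next I would feed this factorization into Theorem~\ref{thm3.2} with $t=12$ and every multiplicity $e_i=1$. Since $p\equiv 1\pmod{N_2}$, Theorem~\ref{thm3.4} guarantees that $\mathcal{C}(r,12)$ attains the maximal number $N_2=12$ of distinct nonzero weights, which confirms that the twelve roots $\eta_j^{*(12,r)}$ are pairwise distinct and hence that each linear factor carries multiplicity one. With $e_i=1$ and $N_2=12$, the frequency $e_i(r-1)/N_2$ of Theorem~\ref{thm3.2} collapses to $(r-1)/12$ for every nonzero weight, and the corresponding exponent becomes $(q-1)(r-\eta_j^{*(12,r)})/12q$, which reproduces precisely the claimed weight enumerator. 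The length and dimension $[(r-1)/12,m]$ are inherited verbatim from the general statement of Theorem~\ref{thm3.2}.

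The only genuinely nontrivial ingredient is the factorization asserted in Lemma~\ref{lem7}(a), which is imported from the Gaussian-period computations of~\cite{Gu}; once that factorization is granted, the passage to the weight distribution is a pure substitution and presents no further obstacle. Thus I expect the proof to read simply: from Theorem~\ref{thm3.2} and part~(a) of Lemma~\ref{lem7}, the result follows immediately.
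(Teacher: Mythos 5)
Your proposal is correct and follows exactly the paper's own route: the paper's proof is literally ``From Theorem~\ref{thm3.2} and (a) in Lemma~\ref{lem7}, this theorem follows immediately.'' Your additional appeal to Theorem~\ref{thm3.4} to certify that the twelve roots are pairwise distinct is a harmless (and sound) refinement that the paper leaves implicit.
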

\begin{proof}
From Theorem \ref{thm3.2} and
(a) in Lemma \ref{lem7}, this theorem follows
immediately.
\end{proof}
\begin{example}
Let $~q=13$ and let $~m=12$. Then the set
$~\mathcal{C}(r,12)$
 is a $[1941507093540,12,1792157710608]$ code over $~GF(13)$ with the weight list
\begin{align*}
 wl:=&[0,1792157710608,1792159338564,1792159386480,\\
 &1792159451424, 1792160074992,1792160674272,\\
 &1792160747136, 1792160770896,1792160847072, \\ &1792161442512, 1792161902664, 1792162381824]
\end{align*}
 and the weight distribution frequency
 \begin{align*}
 Freq(0)&=1\\
 Freq(wl[i])&=1941507093540,~i=2,\cdots 13.
 \end{align*}
\end{example}

\begin{theorem}\label{thm5.21.2}
Let $N=12$, $N_{2}=12$ and $p\equiv 5\pmod {12}$, then
$k\equiv 0\pmod 4$ and
$\mathcal{C}(r,12)$ is a
$[(r-1)/12,m]$  code with the weight distribution
\begin{align*}
1&+\frac{r-1}{6}x^{(q-1)(r-\xi_{1})/12q}
+\frac{r-1}{6}x^{(q-1)(r-\xi_{2})/12q}\\
&+\frac{r-1}{6}x^{(q-1)(r-\xi_{3})/12q}+
\frac{r-1}{6}x^{(q-1)(r-\xi_{4})/12q}\\
&+\frac{r-1}{12}x^{(q-1)(r-\xi_{5})/12q}+\cdots
+\frac{r-1}{12}x^{(q-1)(r-\xi_{8})/12q}
,
\end{align*}
where
$\xi_{1}=Q_{k/2}p^{k/4}((-1)^{k/4}-1)+p^{k/2}$,  $\xi_{2}=-Q_{k/2}p^{k/4}((-1)^{k/4}-1)+p^{k/2}$,
$\xi_{3}=P_{k/2}p^{k/4}((-1)^{k/4}+1)+p^{k/2}$, $\xi_{4}=-P_{k/2}p^{k/4}((-1)^{k/4}+1)+p^{k/2}$,
 $\xi_{5}=P_{k/2}p^{k/4}(2(-1)^{k/4}-1)+p^{k/2}$,  $\xi_{6}=-P_{k/2}p^{k/4}(2(-1)^{k/4}-1)+p^{k/2}$, $\xi_{7}=Q_{k/2}p^{k/4}(2(-1)^{k/4}+1)-5p^{k/2}$ and  $\xi_{8}=-Q_{k/2}p^{k/4}(2(-1)^{k/4}-1)-5p^{k/2}$
\end{theorem}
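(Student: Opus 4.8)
The plan is to derive this weight distribution directly from the general machinery already assembled, exactly as in the companion theorems of this section. The key observation is that here $N=N_2=12$, so the reduction carried out in Theorem~\ref{thm3.1} and Theorem~\ref{thm3.2} expresses the entire weight enumerator of $\mathcal{C}(r,12)$ in terms of the factorization of the single reduced period polynomial $\psi_{(12,r)}^{*}(X)$. Thus the problem collapses to reading off that factorization in the relevant congruence class for $p$.

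First I would invoke part (b) of Lemma~\ref{lem7}, which treats precisely the case $p\equiv 5\pmod{12}$. It supplies the divisibility $4\mid k$ claimed in the statement, and it gives the explicit factorization
$$
\psi_{(12,r)}^{*}(X)=(X-\xi_{1})^2(X-\xi_{2})^2(X-\xi_{3})^2(X-\xi_{4})^2(X-\xi_{5})\cdots(X-\xi_{8}),
$$
with the eight roots $\xi_{1},\dots,\xi_{8}$ written in terms of $Q_{k/2}$, $P_{k/2}$, $p^{k/4}$ and $p^{k/2}$. The multiplicity pattern is $(2,2,2,2,1,1,1,1)$, which sums to $12=N_2$, consistent with the hypothesis of Theorem~\ref{thm3.2}.

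Next I would substitute this factorization into Theorem~\ref{thm3.2}. A root $\xi_i$ of multiplicity $e_i$ contributes a codeword weight $(q-1)(r-\xi_i)/(Nq)=(q-1)(r-\xi_i)/(12q)$ occurring with frequency $e_i(r-1)/N_2=e_i(r-1)/12$. Hence the four double roots $\xi_1,\dots,\xi_4$ each carry frequency $(r-1)/6$ and the four simple roots $\xi_5,\dots,\xi_8$ each carry frequency $(r-1)/12$, which is exactly the asserted weight distribution. That $\mathcal{C}(r,12)$ is an $[(r-1)/12,m]$ code is immediate from $n=(r-1)/N$ together with the standing assumption $m_0=m$.

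The main point to watch, rather than a genuine obstacle, is bookkeeping: one must make sure the labelling of the $\xi_i$ and their multiplicities in Lemma~\ref{lem7}(b) is carried over verbatim, so that the double roots are paired with frequency $(r-1)/6$ and the simple roots with $(r-1)/12$. All the substantive work---the evaluation of the Gaussian periods $\eta_j^{*(12,r)}$ and the resulting closed forms for the $\xi_i$ via Gauss sums---is already discharged inside Lemma~\ref{lem7}, so once the factorization is in hand the theorem follows immediately by specialization of Theorem~\ref{thm3.2}.
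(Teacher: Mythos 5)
Your proposal is correct and is exactly the paper's own argument: the paper proves this theorem by citing Theorem~\ref{thm3.2} together with part~(b) of Lemma~\ref{lem7}, which is precisely the specialization you carry out, including matching the multiplicity pattern $(2,2,2,2,1,1,1,1)$ to the frequencies $(r-1)/6$ and $(r-1)/12$. Nothing is missing.
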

\begin{proof}
From Theorem \ref{thm3.2} and
(b) in Lemma \ref{lem7}, this theorem follows
immediately.
\end{proof}
\begin{example}
Let $~q=5$ and let $~m=4$. Then the set
$~\mathcal{C}(r,12)$
 is a $[52,4,32]$ code over $~GF(5)$ with the weight distribution
\begin{align*}
 1+52x^{32}+104x^{36}+208x^{40}+104x^{44}+104x^{48}+52x^{52}.
\end{align*}
\end{example}

\begin{theorem}\label{thm5.21.3}
Let $N=12$, $N_{2}=12$ and $p\equiv 7\pmod {12}$. Let $\rho=2(-1)^{k(p+5)/6}$, then
$k\equiv 0\pmod 6$ and
$\mathcal{C}(r,12)$ is a
$[(r-1)/12,m]$  code with the weight distribution
\begin{align*}
1&+\frac{r-1}{12}x^{(q-1)(r-\eta_{1}^{*(12,r)})/12q}
+\cdots\\
&+\frac{r-1}{12}x^{(q-1)(r-\eta_{12}^{*(12,r)})/12q}
,
\end{align*}
where
$\eta_{j}^{*(12,r)}$ is defined as follows:
If $j$ is odd, $\eta_{j}^{*(12,r)}=-(-1)^{k/2}p^{k/6}V_{j,2k/3}
-p^{k/3}V_{2j,k/3}+(-1)^{k/2}p^{k/2}$; if
$2\|j$, $\eta_{j}^{*(12,r)}=-(-1)^{k/2}p^{k/6}V_{j,2k/3}
+p^{k/3}V_{2j,k/3}(\rho-1)+p^{k/2}(\rho-(-1)^{k/2})$; if $4\mid j$, $\eta_{j}^{*(12,r)}=-(-1)^{k/2}p^{k/6}V_{j,2k/3}
-p^{k/3}(\rho+1)V_{2j,k/3}-p^{k/2}(\rho+(-1)^{k/2})$.
\end{theorem}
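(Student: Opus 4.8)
The plan is to follow exactly the template used for the sibling Theorems \ref{thm5.21.1} and \ref{thm5.21.2}: reduce the weight distribution of $\mathcal{C}(r,12)$ to the factorization of the reduced period polynomial $\psi_{(12,r)}^{*}(X)$ via Theorem \ref{thm3.2}, and then read off that factorization from the appropriate case of Lemma \ref{lem7}. Since here $N=12$ and $N_{2}=12$, Theorem \ref{thm3.2} already gives the length and dimension $[(r-1)/12,m]$, and it tells us that the nonzero weights of the code are precisely the numbers $(q-1)(r-\xi)/(12q)$ as $\xi$ ranges over the roots of $\psi_{(12,r)}^{*}$, each occurring with frequency $e(r-1)/12$, where $e$ is the multiplicity of $\xi$.

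First I would invoke part (c) of Lemma \ref{lem7}, which applies exactly because $p\equiv 7\pmod{12}$. This lemma delivers two things at once: the divisibility $6\mid k$, which is the first assertion of the theorem, and the complete factorization $\psi_{(12,r)}^{*}(X)=\prod_{j=1}^{12}(X-\eta_{j}^{*(12,r)})$ into twelve linear factors, where the roots $\eta_{j}^{*(12,r)}$ are given by the three-case formula according to whether $j$ is odd, $2\|j$, or $4\mid j$. I would then observe that this three-case description partitions the index set $\{1,\dots,12\}$ into the six odd values, the three values with $2\|j$, and the three values divisible by $4$, so that all twelve roots are accounted for and each appears as a factor of multiplicity one.

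Next I would feed this factorization into Theorem \ref{thm3.2} with $t=12$ and $e_{1}=\cdots=e_{12}=1$. Each root contributes a single nonzero weight $(q-1)(r-\eta_{j}^{*(12,r)})/(12q)$ with frequency $(r-1)/N_{2}=(r-1)/12$, and the zero weight contributes the constant term $1$. Summing these over $j=1,\dots,12$ produces exactly the stated weight enumerator, with the $\eta_{j}^{*(12,r)}$ spelled out by the odd / $2\|j$ / $4\mid j$ cases and the auxiliary parameter $\rho=2(-1)^{k(p+5)/6}$ carried over verbatim from Lemma \ref{lem7}.

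In this chain the analytic work has already been done inside Lemma \ref{lem7}(c), so the deduction itself is formal. The one point that genuinely requires care is the bookkeeping of the index cases: one must check that the three congruence conditions on $j$ are mutually exclusive and jointly exhaustive on $\{1,\dots,12\}$, so that the twelve linear factors are correctly attributed and all recorded with the uniform multiplicity one. Were Lemma \ref{lem7}(c) not available, the real obstacle would instead be establishing that factorization from scratch, namely evaluating the reduced periods $\eta_{j}^{*(12,r)}$ through Gauss sums and the sequences $V_{j,n}$ while tracking the sign parameter $\rho$; but that is precisely the content we are permitted to cite.
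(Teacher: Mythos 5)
Your proposal is correct and takes essentially the same route as the paper: the paper's own proof is the one-line citation ``From Theorem~\ref{thm3.2} and (c) in Lemma~\ref{lem7}, this theorem follows immediately,'' which is exactly your chain of reasoning. Your extra bookkeeping on the mutual exclusivity and exhaustiveness of the three index cases (odd, $2\|j$, $4\mid j$) is a harmless elaboration of that same argument.
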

\begin{proof}
From Theorem \ref{thm3.2} and
(c) in Lemma \ref{lem7}, this theorem follows
immediately.
\end{proof}
\begin{example}
Let $~q=7$ and let $~m=6$. Then the set
$~\mathcal{C}(r,12)$
 is a $[9804,6,8256]$ code over $~GF(7)$ with the weight distribution
\begin{align*}
 1&+9804x^{8256}+9804x^{8280}+9804x^{8340}\\
 &+9804x^{8730}+19608x^{8388}+19608x^{8418}\\
 &+19608x^{8478}+9804x^{8496}+9804x^{8532}.
\end{align*}
\end{example}

\begin{theorem}\label{thm5.21.4}
Let $N=12$, $N_{2}=12$ and $p\equiv 11\pmod {12}$, then
$k\equiv 0\pmod 2$ and
$\mathcal{C}(r,12)$ is a
$[(r-1)/12,m]$  two weight code with the weight distribution
\begin{align*}
1&+\frac{11(r-1)}{12}x^{(q-1)(r-(-1)^{k/2}p^{k/2})/12q}
\\&
+\frac{r-1}{12}x^{(q-1)(r+11(-1)^{k/2}p^{k/2})/12q}.
\end{align*}
\end{theorem}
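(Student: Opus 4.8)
The plan is to follow the same template used for the other cases in this section: reduce the weight distribution of $\mathcal{C}(r,12)$ to the factorization of the reduced period polynomial $\psi_{(12,r)}^{*}$ via Theorem \ref{thm3.2}, and then read off the required data from the explicit factorization supplied by Lemma \ref{lem7}(d). So the proof is in essence a two-line corollary of results already in hand.

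First I would verify that the hypothesis of Lemma \ref{lem7} is met. Since $N=12$ and we assume $N_2=12$, the definition $N_2=(N,\frac{r-1}{q-1})=12$ forces $12\mid\frac{r-1}{q-1}$. Because $p-1\mid q-1$, the quotient $\frac{q-1}{p-1}$ is a positive integer, and from $\frac{r-1}{p-1}=\frac{r-1}{q-1}\cdot\frac{q-1}{p-1}$ we conclude $12\mid\frac{r-1}{p-1}$. Hence Lemma \ref{lem7} applies, and in the range $p\equiv 11\pmod{12}$ its part (d) delivers both the parity claim $2\mid k$ and the factorization
$$
\psi_{(12,r)}^{*}(X)=(X-(-1)^{k/2}p^{k/2})^{11}(X+11(-1)^{k/2}p^{k/2}).
$$

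Reading this in the notation of Theorem \ref{thm3.2}, there are exactly two distinct roots: $\xi_1=(-1)^{k/2}p^{k/2}$ with multiplicity $e_1=11$ and $\xi_2=-11(-1)^{k/2}p^{k/2}$ with multiplicity $e_2=1$, and indeed $e_1+e_2=12=N_2$. Substituting $N=N_2=12$ into the weight-distribution formula of Theorem \ref{thm3.2} produces frequencies $\frac{e_1(r-1)}{N_2}=\frac{11(r-1)}{12}$ and $\frac{e_2(r-1)}{N_2}=\frac{r-1}{12}$ attached to the nonzero weights $(q-1)(r-\xi_1)/(12q)$ and $(q-1)(r-\xi_2)/(12q)$. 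Since $r-\xi_2=r+11(-1)^{k/2}p^{k/2}$, these are precisely the two exponents in the statement, exhibiting $\mathcal{C}(r,12)$ as a two-weight $[(r-1)/12,m]$ code with the asserted enumerator.

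I do not expect a genuine obstacle, since everything reduces to bookkeeping. The only points demanding care are confirming that $p\equiv 11\pmod{12}$ is exactly the regime of Lemma \ref{lem7}(d) and keeping the multiplicities $11$ and $1$ paired with the correct roots, so that the frequency $\frac{11(r-1)}{12}$ sits on the weight coming from $\xi_1$ and $\frac{r-1}{12}$ on the weight coming from $\xi_2$. As a consistency check I would also note that $p\equiv 11\equiv -1\pmod{12}$, so $12\mid(p+1)$; this is the semi-primitive situation, which independently guarantees a two-weight code and corroborates the computation above.
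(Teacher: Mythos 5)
Your proof is correct, but it takes a genuinely different route from the paper's. The paper disposes of this theorem in one line by invoking the semi-primitive case of Baumert--McEliece \cite{BM1}: since $p\equiv 11\equiv -1\pmod{12}$, the code falls under the classical semi-primitive two-weight family, and the stated distribution is exactly that known result --- this is precisely the observation you relegate to a final consistency check. You instead run the uniform template that the paper uses for the companion cases $p\equiv 1,5,7\pmod{12}$ (Theorems \ref{thm5.21.1}--\ref{thm5.21.3}): verify $12\mid\frac{r-1}{p-1}$ (your reduction from $N_2=12$, via $(p-1)\mid(q-1)$, is a detail the paper leaves implicit), take the explicit factorization
$\psi_{(12,r)}^{*}(X)=(X-(-1)^{k/2}p^{k/2})^{11}(X+11(-1)^{k/2}p^{k/2})$
from Lemma \ref{lem7}(d) together with its conclusion $2\mid k$, and feed the roots $\xi_1=(-1)^{k/2}p^{k/2}$ (multiplicity $11$) and $\xi_2=-11(-1)^{k/2}p^{k/2}$ (multiplicity $1$) into Theorem \ref{thm3.2}; the resulting frequencies $\frac{11(r-1)}{12}$ and $\frac{r-1}{12}$ sit on the correct exponents, matching the stated enumerator. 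Both arguments are sound. Yours buys uniformity with the rest of the section and stays inside the paper's own machinery (modulo the cited factorization of Gurak \cite{Gu}), while the paper's citation is shorter and exposes the structural reason only two weights can occur, independent of any explicit period-polynomial computation.
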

\begin{proof}
This theorem is
the semi-primitive case in \cite{BM1}.
\end{proof}
\begin{example}
Let $~q=23$ and let $~m=2$. Then the set
$~\mathcal{C}(r,12)$
 is a $[44,2,22]$ code over $~GF(23)$ with the weight distribution
\begin{align*}
 1+44x^{22}+484x^{44}
\end{align*}
\end{example}

\section{Conclusion}
This paper presents necessary and sufficient conditions for an irreducible cyclic code having
only one nonzero weight or the maximal number of distinct nonzero weights. When the number of distinct nonzero weight achieves the maximum, we obtain a divisible property of a codeword.
Further, we determine the weight distributions of $\mathcal{C}(r,N)$ for cases $N_{2}=3,4$ or
$N=5,6,8,12$. From Theorem \ref{thm3.2}, we can immediately obtain the weight distribution of $\mathcal{C}(r,N)$ for cases $N_{2}=5,6,8,12$.


%

\section*{Acknowledgment}

Chungming Tang, Yanfeng Qi and Maozhi Xu
acknowledge support from
the Natural Science Foundation of China
(Grant No.10990011 \& No.60763009). 
Baocheng Wang and Yixian Yang acknowledge support from   
National Natural Science Foundation of China (61003285),
the Fundamental Research Funds for the Central Universities (BUPT2009RC0215) and
the Fundamental Research Funds for the Central Universities (BUPT2011RC0209).

\ifCLASSOPTIONcaptionsoff
  \newpage
\fi

\end{document}